\newtheorem{notation}{Notation}[section]
\newtheorem{remark}{Remark}[section]
\newtheorem{theorem}{Theorem}[section]
\newtheorem{lemma}[theorem]{Lemma}
\newtheorem{cor}{Corollary}[section]
\newtheorem{prop}[theorem]{Proposition}
 \theoremstyle{definition}
\newtheorem{ex}{Example}[section]
\newtheorem{defn}{Definition}[section]
\newcommand{\Cyls}{\mathscr{C}}
\newcommand{\fone}{F}
\newcommand{\ftwo}{G}
\newcommand{\fthree}{H}
\newcommand{\midd}{\; \; \mbox{\Large{$\mid$}}\;\;}
\newcommand{\Nat}{\mathbb N }
\newcommand{\Bool}{\mathbb B}
\newcommand{\Ss}{\mathbb S}
\newcommand{\Os}{\mathbb O}
\newcommand{\cc}[1]{\mathbf{#1}}
\newcommand{\lc}[1]{\textbf{\textsf{#1}}}
\newcommand{\MQPA}{\lc{MQPA}}
\newcommand{\PA}{\lc{PA}}
\newcommand{\NP}{\cc{NP}}
\newcommand{\RFP}{\cc{RFP}}
\newcommand{\one}{\mathbb{1}}
\newcommand{\zero}{\mathbb{0}}
\newcommand{\model}[1]{\llbracket #1\rrbracket}
\providecommand*{\Dashv}{%
  \mathrel{%
    \mathpalette\@Dashv\vDash
  }%
}
\newcommand*{\@Dashv}[2]{%
  \reflectbox{$\m@th#1#2$}%
}
\newcommand{\POR}{\mathcal{POR}}
\newcommand{\SFP}{\cc{SFP}}
\newcommand{\PTF}{\cc{PTF}}
\newcommand{\rl}{\mathcal{RL}}
\newcommand{\Flip}{\mathtt{Flip}}
\newcommand{\eepsilon}{\bm{\epsilon}}
\newcommand{\ssubseteq}{\bm{\subseteq}}
\newcommand{\cconc}{\bm{\frown}}
\newcommand{\ttimes}{\bm{\times}}
\newcommand{\bool}{\mathbb{b}}
\newcommand{\Ef}{E}
\newcommand{\Pf}{P}
\newcommand{\Sf}{S}
\newcommand{\Cf}{C}
\newcommand{\query}{Q}
\newcommand{\SIFPra}{\text{SIFP}_{\text{RA}}}
\newcommand{\SIFPla}{\text{SIFP}_{\text{LA}}}
\newcommand{\LSra}{\mathcal{L}(\prog{Stm}_{\text{RA}})}
\newcommand{\LSla}{\mathcal{L}(\prog{Stm}_{\text{LA}})}
\newcommand{\SIFP}{\text{SIFP}}
\newcommand{\toLA}{\leadsto_{\text{LA}}}
\newcommand{\toRA}{\leadsto_{\text{RA}}}
\newcommand{\prog}[1]{\textsf{#1}}
\newcommand{\progZ}{0}
\newcommand{\progE}{\epsilon}
\newcommand{\zzero}{\mathtt{0}}
\newcommand{\oone}{\mathtt{1}}
\newcommand{\conc}{\frown}
\newcommand{\bbool}{\mathtt{b}}
\newcommand{\PORl}{\POR^\lambda}
\newcommand{\IPOR}{\mathcal{I}\PORl}
\newcommand{\arrowT}{\Rightarrow}
\newcommand{\Eps}{\mathsf{Eps}}
\newcommand{\ooverline}[1]{\overline{\overline{#1}}}
\newcommand{\FV}{FV}
\newcommand{\realize}{\ \circledR\ }
\newcommand{\PIND}{\text{PIND}}
\newcommand{\termO}{\mathbf{t}}
\newcommand{\termT}{\mathbf{u}}
\newcommand{\varO}{\mathbf{x}}
\newcommand{\varT}{\mathbf{y}}
\newcommand{\df}{:=}
\newcommand{\termOne}{\term{t}}
\newcommand{\termTwo}{\term{u}}
\newcommand{\termThree}{\term{v}}
\newcommand{\typeOne}{\sigma}
\newcommand{\term}[1]{\mathsf{#1}}
\newcommand{\Chole}{\textsf{C}}
\newcommand{\EM}{\mathbf{EM}}
\newcommand{\PTM}{\mathscr{M_P}}
\newcommand{\STM}{\mathscr{M_S}}
\newcommand{\Qstates}{\mathbf{Q}}
\newcommand{\blank}{\circledast}
\newcommand{\strTT}{\gamma}
\newcommand{\longv}[1]{}
\newcommand{\cyl}{\textsf{C}}
\newcommand{\dist}{\mathbb{D}}
\newcommand{\PR}{\mathcal{PR}}
\newcommand{\RS}{\textsf{\textbf{RS}}^1_2}
\newcommand{\BussS}{\textsf{\textbf{S}}^1_2}
\newcommand{\PTCA}{\mathcal{PTCA}}
\newcommand{\tailF}{Tail}
\newcommand{\eqF}{Eq}
\newcommand{\prF}{Pr}
\newcommand{\str}{\sigma}
\newcommand{\strT}{\tau}
\newcommand{\PV}{\textsf{PV}^\omega}
\newcommand{\IPV}{\textsf{IPV}^\omega}
\newcommand{\IRS}{\textsf{\textbf{I}}\RS}
\newcommand{\TSet}{\emph{T}}
\newcommand{\Markov}{\mathbf{(Markov)}}
\newcommand{\TTop}{\underline{\mathbf{1}}}
\title{\textbf{An Arithmetic Theory for the Poly-Time Random Functions}}
\author{M. Antonelli, U. Dal Lago, D. Davoli, I. Oitavem, P. Pistone}
\date{\today}							
\begin{document}
\maketitle


\newcommand{\queryS}{\query}
\newcommand{\emptyS}{\Ef}
\newcommand{\projS}{\Pf}
\newcommand{\successorS}{\Sf}
\newcommand{\condS}{\Cf}

We introduce a new bounded theory $\RS$,
and show that the functions which
are $\Sigma^b_1$-representable in it are precisely
random functions which can be computed
in polynomial time.
Concretely, we pass through the class of oracle functions
over strings $\POR$, 
which is introduced in Section~\ref{sec:introRS},
together with $\RS$.
Then, we show that functions computed by poly-time PTMs
are \emph{arithmetically} characterized by a class of 
probabilistic bounded formulas:
in Section~\ref{sec:RStoPOR}, 
we prove that the class of poly-time oracle functions 
is equivalent to that of functions which are $\Sigma^b_1$-representable
in $\RS$, and 
in Section~\ref{sec:PORandPTM} we establish the converse.

\section{Overview}

%
Usual characterizations of poly-time (deterministic)
functions in bounded arithmetic are obtained by two ``macro''
results~\cite{Buss86,Ferreira90}.
Some Cobham-style algebra for poly-time functions is 
introduced and shown equivalent to (1) that
of functions computed by TMs running in polynomial time,
and (2) that of functions which are $\Sigma^b_1$-representable
in the proper bounded theory.
The global structure of our proof follows a similar path,
with an algebra of oracle recursive function,
called $\POR$, playing the role of our Cobham-style
function algebra.
In our case, functions are poly-time computable by PTMs 
and the theory is randomized $\RS$.
After introducing these classes, we show
that the random functions which are 
$\Sigma^b_1$-representable in $\RS$ are precisely those in
$\POR$, and 
that $\POR$ is equivalent
(in a very specific sense)
to the class of functions computed by PTMs running
in polynomial time.
While the first part is established due to standard
arguments~\cite{Ferreira90,CookUrquhart},
the presence of randomness introduced a 
delicate ingredient to be dealt with in the second part.
Indeed, functions in $\POR$ access randomness
in a rather different way
with respect to PTMs, and relating these  
models requires some effort,
that involves long chains of 
intermediate simulations.

 \small
 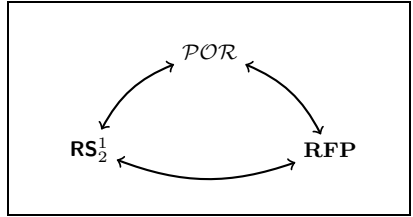
\begin{figure}[h!]\label{fig:schema2}
\begin{center}
\framebox{
\parbox[t][2.6cm]{5cm}{
\footnotesize{

 \begin{center}
 \begin{tikzpicture}[node distance=2cm]
\node at (0,2) (a) {$\POR$}; 
\node at (-1.6,0.7) (b) {$\RS$}; 
\node at (1.6,0.7) (c) {$\RFP$};

    \draw[<->,thick] (a) to [bend right=20] (b);
    \draw[<->,thick] (b) to [bend right=20] (c);
    \draw[<->,thick] (c) to [bend right=20] (a);

   %
\end{tikzpicture} 

 \end{center}
  }}}
\caption{Proof Schema}
\end{center}
\end{figure}  
\normalsize

Concretely, 
we start by defining the class of oracle functions over strings,
the new theory $\RS$, strongly inspired
by~\cite{FerreiraOitavem}, but over a ``probabilistic word language'',
and considering a slightly modified notion of 
$\Sigma^b_i$-representability,
fitting the domain of our peculiar oracle functions.
Then, we prove that the class of
random functions computable in polynomial time, called $\RFP$, 
is precisely the class of
functions which are $\Sigma^b_1$-representable in $\RS$
in three steps:
\begin{enumerate}
\itemsep0em

\item We prove that functions in $\POR$ are $\Sigma^b_1$-representable
in $\RS$ by induction on the structure of oracle functions 
(and relying on the encoding machinery presented
in~\cite{Buss86,Ferreira90}).

\item We show that all functions which are $\Sigma^b_1$-representable
in $\RS$ are in $\POR$ by realizability techniques
similar to Cook and Urquhart's one~\cite{CookUrquhart}.

\item We generalize Cobham's result to  probabilistic
models, 
showing that functions in $\POR$
are precisely those in $\RFP$.
\end{enumerate}

 \small
 
 \begin{figure}[h!]\label{fig:schema2}
\begin{center}
\framebox{
\parbox[t][2.8cm]{10cm}{
\footnotesize{

 \begin{center}
 \begin{tikzpicture}[node distance=2cm]
\node at (-4,0) (a) {$\RS$}; 
\node at (0,0) (b) {$\POR$}; 
\node at (4,0) (c) {$\RFP$};
\node at (-2,1) {\textcolor{gray}{realizability~\cite{CookUrquhart}}}; 
\node at (-2,-1) {\textcolor{gray}{induction~\cite{Ferreira88}}}; 
\node at (2,0.3) {\textcolor{gray}{series of simulations}}; 

    \draw[->,thick,dotted] (a) to [bend right=30] (b);
    \draw[->,thick,dotted] (b) to [bend right=30] (a);
    \draw[<->,thick,dotted] (c) to (b);

   %
\end{tikzpicture} 

 \end{center}
  }}}
\caption{Our Proof in a Nutshell}
\end{center}
\end{figure}
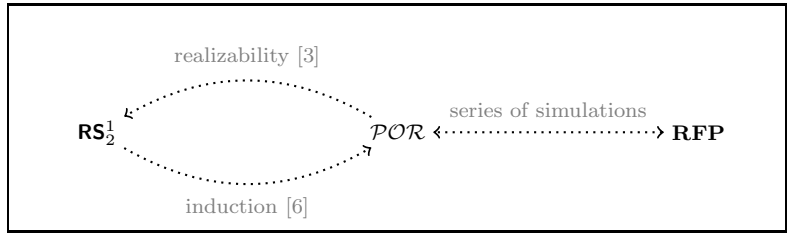  
\normalsize

\section{Introducing $\POR$ and $\RS$}\label{sec:introRS}

In this section, we introduce
a Cobham-style function algebra for poly-time oracle
recursive functions $\POR$,
and a randomized bounded arithmetic $\RS$.
As Ferreira's ones~\cite{Ferreira88,Ferreira90},
these classes are both associated with binary strings
rather than natural positive integer:
$\POR$ is a class of oracle functions over sequences of bits,
while $\RS$ is defined in a probabilistic word language $\rl$.
 Strings support a natural notion of term-size
 and make it easier to deal with bounds and time-complexity.
Observe that working with strings is not crucial
and all results below could be spelled out in terms of natural
 numbers.
Indeed, theories have been introduced
 in both formulations – Ferreira's $\Sigma^b_1$-NIA
 and Buss' $\BussS$ – and proved equivalent~\cite{FerreiraOitavem}.

 \subsection{The Function Algebra $\POR$} 
 We introduce a function algebra for
 poly-time \emph{oracle} recursive functions inspired by Ferreira's 
 $\PTCA$~\cite{Ferreira88,Ferreira90},
 and defined over strings.

 \begin{notation}
Let $\Bool=\{\zero,\one\}$,
 $\Ss=\Bool^*$ be the set of binary strings
 of finite length, and $\Os=\Bool^\Ss$ be the set of binary
 strings of infinite length.
 Metavariables $\eta',\eta'',\dots$ are used to
 denote the elements of $\Os$.
 \end{notation}
 \noindent
 Let $|\cdot |$ denote the length-map string, so that for any 
 string $x$, $|x|$ indicates the length of $x$.
 Given two binary strings $x,y$ we use $x\ssubseteq y$
 to express that $x$ is an \emph{initial}
 or \emph{prefix substring} of $y$,
 $x\cconc y$ (abbreviated as $xy$) for concatenation,
 and $x\ttimes y$ obtained by self-concatenating
 $x$ for $|y|$-times.
 Given an infinite string of bits $\eta$,
 and a finite string $x$, $\eta(x)$ denotes
 \emph{one} specific bit of $\eta$,
 the so-called $x$-th bit of $x$.

A fundamental difference between oracle functions
and those of $\PTCA$ is that the latter ones are 
of the form $f:\Ss^k \times \Os \to \Ss$,
carrying an additional argument 
to be interpreted as the underlying source of random bits.
Furthermore, $\PR$ includes the basic function \emph{query},
$\queryS(x,\eta)=\eta(x)$, which can be used to
observe any bit in $\eta$.
%
%

\begin{defn}[The Class $\POR$]
The \emph{class $\POR$} is the smallest class
of functions $f:\Ss^k \times \Os\to \Ss$,
containing:
\begin{itemize}
\itemsep0em

\item The \emph{empty (string) function} $\emptyS(x,\eta)=\eepsilon$

\item The \emph{projection (string) functions} $\projS^{n}_i(x_1,\dots, x_n,\eta) =x_i$,
for $n\in\Nat$ and $1\leq i\leq n$

\item The \emph{word-successor} $\successorS_{b}(x,\eta)=x\bool$,
where $b=0$ if $\bool=\one$ and $b=1$ if $\bool=\one$

\item The \emph{conditional (string) function}
\begin{align*}
\condS(\eepsilon, y, z_0, z_1, \eta) &= y \\
\condS(x\bool, y, z_0, z_1, \eta) &= z_{b}, 
\end{align*}
where $b=0$ if $\bool = \zero$ and $b=1$ if $\bool=\one$

\item The \emph{query (string) function} $\queryS(x,\eta) = \eta(x)$

\end{itemize}
and closed under the following schemas:
\begin{itemize}
\itemsep0em

\item \emph{Composition}, where $f$ is defined from
$g,h_1,\dots, h_k$ as
$$
f(\vec{x},\eta) = g(h_1(\vec{x},\eta), \dots, h_k(\vec{x},\eta),
\eta)
$$

\item \emph{Bounded recursion on notation}, where $f$
is defined from $g,h_0,$ and $h_1$ as
\begin{align*} 
f(\vec{x},\eepsilon,\eta) &= g(\vec{x},\eta) \\
f(\vec{x}, y\zero, \eta) &= h_0(\vec{x}, y, f(\vec{x}, y, \eta),
\eta) |_{t(\vec{x},y)} \\
f(\vec{x}, y\one, \eta) &= 
h_1(\vec{x}, y, f(\vec{x},y,\eta), \eta)|_{t(\vec{x},y)}
\end{align*}
and $t$ is obtained from $\eepsilon, \one, \zero, \cconc,$ 
and $\ttimes$ by explicit definition,
that is $t$ can be obtained applying $\cconc$ and $\ttimes$
on the constants $\eepsilon, \zero,\one$, and
the variables $\vec{x}$ and $y$.\footnote{Notice that there is a
clear correspondence with the grammar for terms in $\rl$,~Definition~\ref{def:RLterms}.}
\end{itemize}
\end{defn}
\noindent
Actually, the conditional function $\condS$
could be defined by bounded recursion.

\longv{
As follows
\begin{align*}
\condS(\eepsilon, y, z_0, z_1,\eta) = y \\
\condS(x\zero, y, z_0,z_1,\eta) = \projS_3(x\zero,y,z_0,z_1,
f(x\zero,y,z_0,z_1,\eta),\eta)|_{\zero} \\
\condS(x\one, y, z_0,z_1,\eta) = \projS_4(x\one,y,z_0,z_1,\eta)|_{\zero}.
\end{align*}
We take it as a primitive function of $\POR$
to help making the realizability of Section~\ref{sec:realizability}
straightforward.
}

\begin{remark}
Neither the query function or the conditional function
appear in Ferreira's characterization~\cite{Ferreira90},
which instead contains the ``substring-conditional''
function:
$$
\successorS(x,y,\eta) = \begin{cases}
\one \ \ &\text{if } x \ssubseteq y \\
\zero \ \ &\text{otherwise,}
\end{cases}
$$
which can be defined in $\POR$ by bounded recursion.
\longv{
First, let $\tailF (x,\eta)$ be defined as follows:
\begin{align*}
\tailF (\eepsilon,\eta) &= \eepsilon \\
\tailF (x\bool,\eta) &= x.
\end{align*}
\textcolor{red}{Then, let $\prF(x,\eta)$ be:}
$$
\textcolor{red}{\cdots}
$$
and $\eqF(x,y,\eta)$ be:
\begin{align*}
\eqF(x,\eepsilon,\eta) &= 
\condS(x,\one,\eepsilon,\eepsilon,\eta) \\
\eqF(x,y\zero,\eta) &=
\condS(x,\eepsilon,\eqF(\prF(x), y,\eta), \eepsilon, \eta) \\
\eqF(x,y\one,\eta) &=
\condS(x,\eepsilon,\eepsilon, \eqF(\prF(x),y,\eta),\eta).
\end{align*}
Finally, we can define $\successorS(x,y,\eta)$ as:
\begin{align*}
\successorS(x,\eepsilon,\eta) &= \condS(x,\one,\zero,\zero,\eta) \\
\successorS(x,y\zero,\eta) &= 
\condS(x,\one, \condS(\eqF(x,y\zero,\eta),
\successorS(x,y\zero,\eta), \one,\one,\eta),
\successorS(x,y,\eta),\eta) \\
\successorS(x,y\one,\eta) &=
\condS(x,\one,\successorS(x,y,\eta), \condS(\eqF(x,y\one,\eta),
\successorS(x,y,\eta), \one,\one,\eta)), \eta).
\end{align*}
}
\end{remark}

\subsection{Randomized Bounded Arithmetics}

First, we introduce a probabilistic word language for our bounded arithmetic, 
together with its quantitative interpretation.

\paragraph{The Language $\rl$.}
Following~\cite{FerreiraOitavem},
we consider a first-order signature for natural numbers
in binary notation endowed with a special predicate symbol
$\Flip(\cdot)$. 
Consequently, formulas are interpreted over $\Ss$ rather
than $\Nat$.

\begin{defn}[Terms and Formulas of $\rl$]\label{df:rl}
\emph{Terms} and \emph{formulas of $\rl$} are defined by the grammar below:
\begin{align*}
t &\df x \midd \epsilon \midd \zzero \midd \oone \midd t\conc t \midd 
t \times t \\
F &\df \Flip(t) \midd t=s \midd \neg F \midd
F \wedge F \midd F \vee F \midd (\exists x)F \midd
(\forall x)F.
\end{align*}
\end{defn}
\noindent

\begin{notation}[Truncation]
For readability, we adopt the following abbreviations:
$ts$ for $t\conc s$,
$\oone^t$ for $\oone \times t$,
and $t\preceq s$ for $\oone^t \subseteq \oone^s$,
expressing that the length of $t$ is smaller than that of $s$.
Given three terms $t,r,$ and $s$, the abbreviation
$t|_r = s$ denotes the following formula,
$$
(\oone^r \subseteq \oone^t \wedge s \subseteq t \wedge
\oone^r = \oone^s) \vee
(\oone^t \subseteq \oone^r \wedge s=t),
$$
saying that $s$ is the \emph{truncation} of $t$
at the length of $r$.
\end{notation}
\noindent
Every string $\str\in \Ss$ can be seen as a term of $\rl$
$\overline{\str}$, such that
$\overline{\eepsilon}=\epsilon$,
$\overline{\str\bool} = \overline{\str}\bbool$,
where $\bool\in\Bool$ and $\bbool\in\{\zzero,\oone\}$,
e.g.~$\overline{\zero\zero\one} = \zzero\zzero\oone$.

A central feature of bounded arithmetic is the presence of
bounded quantification.
%

\begin{notation}[Bounded Quantifiers]
In $\rl$, \emph{bounded quantified expressions}
are expressions of either the form 
$(\forall x)(\oone^x \subseteq \oone^t \rightarrow \fone)$ or
$(\exists x)(\oone^x \subseteq \oone^t \wedge \fone)$,
usually abbreviated as $(\forall x\preceq t)\fone$ and
$(\exists x \preceq t)\fone$ respectively.
\end{notation}
%
%
%

\begin{notation}
We call \emph{subword quantifications}, 
quantifications of the form 
$(\forall x\subseteq^* t)\fone$ and 
$(\exists x\subseteq^* t)\fone$,
abbreviating 
$(\forall x)\big((\exists w \subseteq t)(wx\subseteq t)
\rightarrow \fone\big)$
and 
$(\exists x)(\exists w\subseteq t)
(wx \subseteq t \wedge \fone)$.
Furthermore, 
we abbreviate so-called \emph{initial subword
quantification} 
$(\forall x)(x\subseteq t \rightarrow \fone)$
as $(\forall x\subseteq t)\fone$
and 
$(\exists x)(x\subseteq t \wedge \fone)$
as $(\exists x\subseteq t)\fone$.
\end{notation}
\noindent
The distinction between bounded and subword
quantification is important for complexity reasons.
If $\str \in \Ss$ is a string of length $k$,
the witness of a subword existentially quantified
formula
$(\exists x \subseteq^* \overline{\str})\fone$
is to be looked for among all possible sub-strings of $\str$, 
that is~within a space of size $\mathcal{O}(k)$.
On the contrary, the witness of a bounded formula 
$(\exists x\preceq \overline{\str})\fone$
is to be looked for among all possible strings of
length $k$, namely~within a space of size $\mathcal{O}(2^k)$.

\begin{remark}
In order to avoid misunderstanding let us briefly sum up 
the different notions and symbols used for subword relations.
We uses $\ssubseteq$ to express a relation between strings,
that is $x\ssubseteq y$ expresses that $x$ is an initial or prefix substring of $y$.
We use $\subseteq$ as a relation symbol in the language $\rl$.
We use $\preceq$ as an auxiliary symbol in the language $\rl$;
in particular, as seen, $t\preceq s$ is syntactic sugar for $\oone^t\subseteq \oone^s$.
We use $\subseteq^*$ as an auxiliary symbol in the language $\rl$
to denote subword quantification.
We also use $(\exists w\subseteq t)\fone$ as an abbreviation of 
$(\exists w)(w\subseteq t\wedge \fone)$
and similarly for $(\forall w\subseteq t)$.
\end{remark}

\begin{defn}[$\Sigma^b_1$-Formulas]\label{df:Sigmab1}
A $\Sigma^b_0$-formula is a subword quantified
formula,
i.e.~a formula belonging to the smallest class
of $\rl$ containing atomic
formulas and closed under Boolean operations 
and subword quantification.
A formula is said to be a 
\emph{$\Sigma^b_1$-formula}, if it is of the form
$(\exists x_1\preceq t)\dots(\exists x_n\preceq t_n)
\fone$,
where the only quantifications in $\fone$ are 
subword ones.
We call $\Sigma^b_1$ the class containing
all and only the $\Sigma^b_1$-formulas.
\end{defn}
\noindent
An \emph{extended $\Sigma^b_1$-formula}
is any formula of $\rl$ that can be constructed
in a finite number of steps, starting with
subword quantifications and bounded existential
quantifications and bounded existential quantifications.

\paragraph{The Bounded Theory $\RS$.}
We introduce the bounded theory $\RS$, 
which can be seen as a
probabilistic version to Ferreira's 
$\Sigma^b_1$-NIA~\cite{Ferreira88}.
It is expressed in the language $\rl$.

\begin{defn}[Theory $\RS$]
The theory $\RS$ is defined by axioms
belonging to two classes:
\begin{itemize}
\itemsep0em
\item \emph{Basic axioms}:

\begin{enumerate}
\itemsep0em

\item $x\epsilon=x$

\item $x(y\bbool) = (xy)\bbool$

\item $x\times \epsilon=\epsilon$

\item $x\times x\bbool = (x\times y)x$

\item $x\subseteq \epsilon \leftrightarrow x=\epsilon$

\item $x\subseteq y\bbool \leftrightarrow
x\subseteq y \vee x=y\bbool$

\item $x\bbool = y\bbool \rightarrow x=y$

\item $x\zzero \neq y\oone$,

\item $x\bbool\neq \epsilon$
\end{enumerate}
 with $\bbool\in \{\zzero,\oone\}$

\item \emph{Axiom schema for induction on notation},
$$
B(\epsilon) \wedge (\forall x)\big(B(x) \rightarrow
B(x\zzero) \wedge B(x\oone)\big)
\rightarrow (\forall x)B(x),
$$
where $B$ is a $\Sigma^b_1$-formula in $\rl$.
\end{itemize}
\end{defn}
\noindent
Induction on notation
adapts the usual induction schema of $\PA$
to the binary representation.
Of course, as in Buss' and Ferreira's approach, 
the restriction of this schema to $\Sigma^b_1$-formulas
is essential
to characterize algorithms computed \emph{with bounded resources}.
Indeed, more general instances of the schema
would extend representability to random functions which
are not poly-time (probabilistic) computable.

\begin{prop}[\cite{Ferreira88}]
In $\RS$ any extended $\Sigma^b_1$-formula
is logically equivalent to a 
$\Sigma^b_1$-formula.\footnote{Actually, Ferreira
proved this result for the theory $\Sigma^b_1$-NIA~\cite[pp. 148-149]{Ferreira88}, 
but it clearly holds for $\rl$ as well.}
\end{prop}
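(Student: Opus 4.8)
The plan is to proceed by induction on the construction of an extended $\Sigma^b_1$-formula, showing that the class of $\Sigma^b_1$-formulas is closed, up to provable equivalence in $\RS$, under each operation permitted in building extended formulas: conjunction, disjunction, subword quantification (existential and universal), and bounded existential quantification. Since every atomic formula is already $\Sigma^b_0$, and every $\Sigma^b_0$-formula is trivially $\Sigma^b_1$ (with an empty existential prefix), the base case is immediate. The content is the inductive step: assuming each immediate subformula is provably equivalent to one of the shape $(\exists \vec{x}\preceq \vec{t})G$ with $G\in\Sigma^b_0$, I would show that applying one more operation preserves this shape.

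For the propositional connectives the idea is to pull the existential prefixes to the front. Given $(\exists \vec{x}\preceq \vec{t})G_1$ and $(\exists \vec{y}\preceq \vec{s})G_2$ with the bound variables renamed apart, conjunction is handled by $(\exists \vec{x}\preceq \vec{t})(\exists \vec{y}\preceq \vec{s})(G_1\wedge G_2)$ and disjunction analogously, merging the two witness blocks into a single prefix (padding the unused witnesses by $\epsilon$ so both disjuncts share one block). Since $\Sigma^b_0$ is closed under $\wedge$ and $\vee$, the matrix stays in $\Sigma^b_0$. Bounded existential quantification $(\exists y\preceq s)$ applied to $(\exists \vec{x}\preceq \vec{t})G$ is trivial, as it merely prepends one more bounded existential; the whole prefix can then be contracted into a single quantifier by a pairing term built from $\epsilon,\zzero,\oone,\conc,\times$. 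The existential-subword case $(\exists w\subseteq^* r)(\exists \vec{x}\preceq \vec{t})G$ is also easy: since the guard $w\subseteq^* r$ is $\Sigma^b_0$, once any dependence of $\vec{t}$ on $w$ is removed by replacing $w$ with its bound $r$ (lengths being monotone under $\conc$ and $\times$), the two existential blocks commute and we pull $\exists\vec{x}$ out while the subword quantifier stays in the $\Sigma^b_0$ matrix.

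The genuine difficulty, and the step I expect to be the main obstacle, is the universal-subword case: turning $(\forall w\subseteq^* r)(\exists x\preceq t)G(w,x)$ into a $\Sigma^b_1$-formula, where the quantifier order is wrong and naive prenexing is unsound. The remedy is a bounded collection (replacement) principle. Since the subwords $w$ of $r$ form a set of size polynomial in $|r|$ and each witness $x$ has length at most $|t|$, the family of witnesses can be coded as a single string $X$ whose length is bounded by a term polynomial in $|r|$ and $|t|$ (expressible using $\times$). The formula then becomes $(\exists X\preceq \text{bound})(\forall w\subseteq^* r)(\exists x\subseteq X)(x \text{ is the } w\text{-entry of } X \wedge G(w,x))$, whose matrix, after unfolding the $\Sigma^b_0$ sequence-decoding predicate, is again $\Sigma^b_0$. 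Proving this equivalence is exactly where the strength of $\RS$ is used: I would establish collection by $\Sigma^b_1$-induction on notation along an enumeration of the subwords of $r$, relying on the finite-sequence coding of~\cite{Buss86,Ferreira90}.

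Concretely, I would first develop the sequence coding and its length bounds, then isolate the collection principle as a standalone lemma, and only afterwards invoke it inside the structural induction, the remaining cases being routine. As the statement is due to Ferreira, I would follow~\cite{Ferreira88} for the coding details and merely adapt the bounds to the language $\rl$; the predicate $\Flip(\cdot)$ plays no role here, since it occurs only inside the $\Sigma^b_0$ matrix and is transparent to every quantifier manipulation above.
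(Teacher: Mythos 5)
The paper offers no proof of this proposition, deferring entirely to Ferreira~\cite{Ferreira88}; your reconstruction --- structural induction on the build-up of the extended formula, with prefix-merging for the connectives and a bounded collection principle (established via finite-sequence coding and $\Sigma^b_1$-induction on notation) to handle the problematic $(\forall w\subseteq^* r)(\exists x\preceq t)$ alternation --- is exactly the argument of the cited source and is correct. Your closing remark that $\Flip(\cdot)$ sits inertly inside the $\Sigma^b_0$ matrix and is transparent to all the quantifier manipulations is precisely the justification the paper's footnote gestures at when claiming the result transfers from $\Sigma^b_1$-NIA to $\rl$.
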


\paragraph{Semantics for Formulas in $\rl$.}
We introduce a \emph{quantitative} semantics for
formulas of $\rl$,
which is strongly inspired by that for $\MQPA$.
In particular, function symbols of $\rl$ as well as the predicate
symbols ``='' and ``$\subseteq$'' have a standard interpretation
as relations over $\Ss$
in the canonical model $\mathscr{W}=(\Ss, \cconc, \ttimes)$,
while, as we shall see,
$\Flip(t)$ can be interpreted either in a standard way 
or following Definition~\ref{df:MQPAsem}.

\begin{defn}[Semantics for Terms in $\rl$]\label{def:RLterms}
Given a set of term variables $\mathcal{G}$,
an environment $\xi: \mathcal{G} \to \Ss$
is a mapping that assigns to each variable a string.
Given a term $t$ in $\rl$ and an environment $\xi$,
the \emph{interpretation of $t$ in $\xi$} is the string
$\model{t}_\xi \in \Ss$ inductively defined as follows:

\begin{minipage}{\linewidth}
\begin{minipage}[t]{0.35\linewidth}
\begin{align*}
\model{\epsilon}_\xi &\df \eepsilon \\
\model{\zzero}_\xi &\df \zero \\
\model{\oone}_\xi &\df \one \\
\end{align*}
\end{minipage}
\hfill
\begin{minipage}[t]{0.55\linewidth}
\begin{align*}
\model{x}_\xi &\df \xi(x) \in \Ss \\
\model{t\conc s}_\xi &\df \model{t}_\xi \model{s}_\xi \\
\model{t \times s} _\xi &\df \model{t}_\xi \times \model{s}_\xi.
\end{align*}
\end{minipage}
\end{minipage}
\end{defn}

%
%
As in Chapter~\ref{ch:MQPA}, we extend the canonincal model $\mathscr{W}$ with the probability probability space
$\mathscr{P}_{\mathscr{W}} = (\Os, \sigma(\Cyls), \mu_{\Cyls})$,
where $\sigma(\Cyls)\subseteq \mathcal{P}(\Os)$
is the Borel $\sigma$-algebra generated by cylinders
$\cyl^\bool_\str = \{\eta \ | \ \eta(\str)=\bool\}$,
for $\bool \in \Bool$, and such that
$\mu_{\Cyls}(\cyl^{\bool}_{\str})=\frac{1}{2}$.
To formally define cylinders over $\Bool^\Ss$,
we slightly modify Billingsley's notion of \emph{cylinder
of rank n}.

\begin{defn}[Cylinder over $S$]
For any countable set $S$,
finite $K\subset S$ and $H\subseteq \Bool^K$,
$$
\cyl(H) = \{\eta \in \Bool^S \ | \ \eta |_K \in H\},
$$
is a \emph{cylinder} over $S$.
\end{defn}
\noindent
Then, $\Cyls$ and $\sigma(\Cyls)$
are defined in the standard way and
the probability measure over it is defined as follows:\footnote{For further
details, see~\cite{Davoli}.}

\begin{defn}[Cylinder Measure]
For any countable set $S$, 
$K\subseteq S$ and $H\subseteq \Bool^K$ such
that $\cyl(H)=\{\eta \in \Bool^S \ | \ \eta|_K \in H\}$,
$$
\mu_{\Cyls}(\cyl(H))= \frac{|H|}{2^{|K|}}.
$$
\end{defn}
\noindent
This is a measure over $\sigma(\mathscr{C})$.
%

%
Then, formulas of $\rl$ are interpreted as sets \emph{measurable} sets.

\begin{defn}[Semantics for Formulas in $\rl$]\label{df:RLformulas}
Given a term $t$, a formula $F$,
and an environment $\xi: \mathcal{G} \to \Ss$,
where $\mathcal{G}$ is the set of term variables,
the \emph{interpretation of $F$ under $\xi$}
is the measurable set of sequences $\model{F}_\xi
\in \sigma(\Cyls)$ inductively defined as follows:

\begin{minipage}{\linewidth}
\begin{minipage}[t]{0.35\linewidth}
\begin{align*}
\\
\model{\Flip(t)}_\xi &\df
\{\eta \ | \ \eta(\model{t}_\xi) = \one\} \\
\model{t=s}_\xi &\df
\begin{cases}
\Os \ &\text{if } \model{t}_\xi = \model{s}_\xi   \\
\emptyset \ &\text{otherwise}
\end{cases} \\
\model{t\subseteq s}_\xi &\df
\begin{cases}
\Os \ &\text{if } \model{t}_\xi \ssubseteq \model{s}_\xi \\
\emptyset \ &\text{otherwise}
\end{cases} 
\end{align*}
\end{minipage}
\hfill
\begin{minipage}[t]{0.55\linewidth}
\begin{align*}
\model{\neg G}_\xi &\df \Os - \model{G}_\xi  \\
\model{G \vee H}_\xi &\df  \model{G}_\xi \cup \model{H}_\xi \\
\model{G \wedge H}_\xi &\df \model{G}_\xi \cap \model{H}_\xi \\
\model{G\rightarrow H}_\xi &\df (\Os - \model{G}_\xi) \cup
\model{H}_\xi \\
\model{(\exists x)G}_\xi &\df 
\bigcup_{i\in\Ss} \model{G}_{\xi\{x\leftarrow i\}} \\
\model{(\forall x)G}_\xi &\df 
\bigcap_{i\in \Ss} \model{G}_{\xi \{x\leftarrow i\}}. 
\end{align*}
\end{minipage}
\end{minipage}
\end{defn}
\noindent
As anticipated, this semantics is well-defined.
Indeed, the sets $\model{\Flip(t)}_\xi$,
$\model{t=s}_\xi$ and $\model{t\subseteq s}_\xi$
are measurable and measurability is preserved
by all the logical operators.

A interpretation of the language
$\rl$ in the usual sense is given due to an environment $\xi$
plus the choice of an interpretation $\eta$ for $\Flip(x)$.

\begin{defn}[Standard Semantics for Formulas in $\rl$]
Given a $\rl$-formula $\fone$, and an interpretation
$\rho=(\xi,\eta^{\mathtt{FLIP}})$,
where $\xi:\mathcal{G}\to \Ss$ and 
$\eta^{\mathtt{FLIP}} \subseteq \Os$,
the \emph{interpretation of $\fone$ in $\rho$} $\model{\fone}_\rho$,
is inductively defined as follows:

\begin{minipage}{\linewidth}
\begin{minipage}[t]{0.35\linewidth}
\begin{align*}
\model{\Flip(t)}_\rho &\df 
\begin{cases}
1 \ &\text{if } \eta^{\mathtt{FLIP}}(\model{t}_\rho) = 
\one \\
0 \ &\text{otherwise}
\end{cases} \\
\model{t=s}_\rho &\df
\begin{cases}
1 \ &\text{if } \model{t}_\rho = \model{s}_\rho \\
0 \ &\text{otherwise.}
\end{cases} \\
\model{t \subseteq s}_\rho &\df
\begin{cases}
1 \ &\text{if } \model{t}_\rho \ssubseteq \model{s}_\rho \\
0 \ &\text{otherwise}
\end{cases}
\end{align*}
\end{minipage}
\hfill
\begin{minipage}[t]{0.55\linewidth}
\begin{align*}
\model{\neg G}_\rho &\df 1 - \model{G}_\rho \\
\model{G \wedge H}_\rho &\df min\{\model{G}_\rho,
\model{H}_\rho\} \\
\model{G \vee H}_\rho &\df max\{\model{G}_\rho, 
\model{H}_\rho\} \\
\model{G \rightarrow H}_\rho &\df
max\{(1-\model{G}_\rho), \model{H}_\rho\} \\
\model{(\forall x)G}_\rho &\df
min\{\model{G}_{\rho\{x \leftarrow \str\}} \ | \ \str \in \Ss\} \\
\model{(\exists x)G}_\rho &\df max\{\model{G}_{\rho\{x\leftarrow \str\}}
\ | \ \str \in \Ss\}.
\end{align*}
\end{minipage}
\end{minipage}
\end{defn}

\begin{notation}
For readability's sake,
we abbreviate $\model{\cdot}_{\rho}$ simply as
 $\model{\cdot}_\eta$,
and $\model{\cdot}_\xi$ as  $\model{\cdot}$.
\end{notation}
\noindent
Observe that \emph{quantitative}
and \emph{qualitative}
semantics for $\rl$ are mutually related,
as can be proved by induction on the structure of formulas~\cite{Davoli}.

\begin{prop}
For any formula $F$ in $\rl$, environment $\xi$,
function $\eta \in \Os$ and $\rho=(\eta,\xi)$,
$$
\model{F}_{\xi,\eta} = 1 \ \ \text{iff} \ \ 
\eta \in \model{F}_\rho.
$$
\end{prop}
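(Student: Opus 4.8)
The plan is to prove the statement by structural induction on the formula $F$, carried out simultaneously for \emph{every} environment $\xi$; this environment-universal strengthening is what makes the quantifier cases go through, since there the environment is updated to $\xi\{x \leftarrow \str\}$ and we need the induction hypothesis available at the shifted environment. Throughout I write $\rho = (\xi,\eta)$ for the standard interpretation, keeping in mind that the quantitative semantics $\model{F}_\xi \in \sigma(\Cyls)$ is a measurable set depending on $\xi$ alone, whereas the standard value $\model{F}_\rho \in \{0,1\}$ depends on the whole pair, and that the right-hand side asserts membership $\eta \in \model{F}_\xi$. The one preliminary fact that aligns everything is that terms of $\rl$ contain no occurrence of $\Flip$, so their interpretation is insensitive to the flip-source: $\model{t}_\rho = \model{t}_\xi$ for every term $t$ and every $\eta$. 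I would record this first, by a trivial induction on terms following Definition~\ref{def:RLterms}.

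For the base cases I check the three atomic shapes. For $\Flip(t)$, by definition $\model{\Flip(t)}_\rho = 1$ exactly when $\eta(\model{t}_\xi) = \one$, which is precisely the membership condition defining $\model{\Flip(t)}_\xi = \{\eta \mid \eta(\model{t}_\xi) = \one\}$, so the two sides agree. For $t = s$ and $t \subseteq s$ the standard value is $1$ iff $\model{t}_\xi = \model{s}_\xi$ (resp. $\model{t}_\xi \ssubseteq \model{s}_\xi$), a condition not mentioning $\eta$; the quantitative semantics correspondingly returns all of $\Os$ when the condition holds and $\emptyset$ otherwise, so $\eta \in \model{F}_\xi$ holds iff the same condition holds, every $\eta$ lying in $\Os$ and none in $\emptyset$.

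The inductive cases merely transcribe the pointwise Boolean clauses into set-theoretic ones. For $\neg G$, the value $1 - \model{G}_\rho$ equals $1$ iff $\model{G}_\rho = 0$ iff (induction hypothesis) $\eta \notin \model{G}_\xi$ iff $\eta \in \Os - \model{G}_\xi = \model{\neg G}_\xi$; the clauses for $\wedge$ and $\vee$ turn the $\min$/$\max$ of two $\{0,1\}$-values into intersection/union of the corresponding sets, and the implication clause follows from these. For the quantifiers, note that $\rho\{x \leftarrow \str\} = (\xi\{x \leftarrow \str\},\eta)$ keeps $\eta$ fixed; then $\model{(\exists x)G}_\rho = \max\{\model{G}_{\rho\{x\leftarrow\str\}} \mid \str \in \Ss\}$ equals $1$ iff some $\str$ makes the inner value $1$, iff (induction hypothesis at the shifted environment) some $\str$ satisfies $\eta \in \model{G}_{\xi\{x\leftarrow\str\}}$, iff $\eta \in \bigcup_{\str\in\Ss}\model{G}_{\xi\{x\leftarrow\str\}} = \model{(\exists x)G}_\xi$; the universal case is symmetric, with $\min$ matching $\bigcap$.

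The only point needing a little care is that the Boolean quantifier clauses take $\max$/$\min$ over the infinite index set $\Ss$. Since all the quantities lie in $\{0,1\}$, these are genuine suprema/infima: the $\max$ is $1$ iff the family contains a $1$, and the $\min$ is $1$ iff the family is constantly $1$, which is exactly the ``exists/for all'' reading of the union/intersection, so no continuity or extra measurability argument is required at this step. Measurability of each $\model{F}_\xi$, on which the membership on the right-hand side tacitly relies, has already been granted in the discussion following Definition~\ref{df:RLformulas} and may be imported as is.
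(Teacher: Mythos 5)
Your proof is correct and follows exactly the route the paper indicates for this proposition, namely a structural induction on formulas (the paper itself only gestures at this, deferring details to the cited reference). The environment-universal strengthening you make explicit for the quantifier cases, and the observation that terms are $\Flip$-free so $\model{t}_\rho=\model{t}_\xi$, are the right bookkeeping points and introduce no divergence from the intended argument.
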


\section{$\RS$ characterizes $\POR$}\label{sec:PORandRS}

As said, our proof follows a so-to-say standard path~\cite{Buss86,Ferreira88}.
The first step consists in showing that functions in $\POR$
are precisely those which are $\Sigma^b_1$-representable in
$\RS$.
To do so, we extend Buss' representability conditions by
adding a constraint to link the quantitative semantics
of formulas in $\RS$ with the additional functional
parameter $\eta$ of oracle recursive functions.

\begin{defn}[$\Sigma^b_1$-Representability]\label{df:representability}
A function $f:\Ss^k \times \Os \to \Ss$ is 
\emph{$\Sigma^b_1$-representable
in $\RS$} if there is a 
$\Sigma^b_1$-formula $\fone(\vec{x},y)$ of $\rl$ such that:
\begin{enumerate}
\itemsep0em
\item $\RS\vdash (\forall \vec{x})(\exists y)\fone(\vec{x},y)$
\item $\RS \vdash (\forall \vec{x})(\forall y)(\forall z)
\big(\fone(\vec{x},y) \wedge \fone(\vec{x},z) \rightarrow y=z\big)$
\item for all $\str_1,\dots, \str_j,\strT \in \Ss$ and $\eta\in \Os$,
$$
f(\str_1,\dots, \str_j,\eta) = \strT \ \ \ \text{iff} \ \ \ 
\eta \in \model{\fone
(\overline{\str_1},\dots, \overline{\str_j},\overline{\strT})}.
$$
\end{enumerate}
\end{defn}
\noindent
We recall that the language $\rl$ allows us to associate
the formula $\fone$ with both a \emph{qualitative} 
– namely, when dealing with 1. and 2. –
and a \emph{quantitative} interpretation – namely, in 3.
Then, in  Section~\ref{sec:PORtoRS},
we prove the following theorem.

\begin{theorem}[$\POR$ and $\RS$]\label{thm:PORandRS}
For any function $f:\Ss^k\times \Os \to \Ss$,
$f$ is $\Sigma^b_1$-representable in $\RS$
when $f\in\POR$.
\end{theorem}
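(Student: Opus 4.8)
The plan is to argue by induction on the structure of $f\in\POR$, following the classical representability arguments of Buss and Ferreira~\cite{Buss86,Ferreira88,Ferreira90} and adapting them to the oracle parameter $\eta$. For each $f$ I would exhibit a $\Sigma^b_1$-formula $\fone(\vec x,y)$ and verify the three clauses of Definition~\ref{df:representability}. The key observation keeping the randomness under control is that clauses~1 and~2 are purely \emph{qualitative} --- they are provability claims in $\RS$, in which $\Flip$ behaves as an uninterpreted predicate --- whereas clause~3 is \emph{quantitative} and, by the proposition relating the two semantics, amounts to the pointwise statement that for every fixed $\eta$ the Boolean value $\model{\fone(\overline{\str_1},\dots,\overline{\str_j},\overline{\strT})}_\eta$ equals $1$ exactly when $f(\vec\str,\eta)=\strT$. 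Thus clause~3 can be checked by ordinary deterministic reasoning, with $\eta$ threaded through as a fixed oracle.

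For the base functions the formulas are immediate: I would take $y=\epsilon$ for $\emptyS$, $y=x_i$ for $\projS^n_i$, $y=x\bbool$ for the successor, and a Boolean case split for $\condS$. The genuinely new case is the query function $\queryS(x,\eta)=\eta(x)$, which I would represent by
\[
\fone(x,y)\;:=\;\big(\Flip(x)\wedge y=\oone\big)\;\vee\;\big(\neg\Flip(x)\wedge y=\zzero\big).
\]
Totality and uniqueness follow from excluded middle on $\Flip(x)$ together with the basic axioms; correctness is exactly the clause for $\Flip$ in Definition~\ref{df:RLformulas}, which sets $\model{\Flip(\overline\str)}_\eta=1$ iff $\eta(\str)=\one$. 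This is the single point at which the probabilistic predicate of $\rl$ is used, matching the single point at which $\POR$ consults the oracle.

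For composition $f(\vec x,\eta)=g(h_1(\vec x,\eta),\dots,h_k(\vec x,\eta),\eta)$, given representing formulas $\ftwo$ for $g$ and $\fthree_i$ for the $h_i$, I would set
\[
\fone(\vec x,y)\;:=\;(\exists z_1\preceq r_1)\cdots(\exists z_k\preceq r_k)\Big(\textstyle\bigwedge_i\fthree_i(\vec x,z_i)\wedge\ftwo(z_1,\dots,z_k,y)\Big),
\]
where the bounding terms $r_i$ of $\rl$ witness the polynomial length bounds on the intermediate values $h_i(\vec x,\eta)$. This is an extended $\Sigma^b_1$-formula, hence equivalent in $\RS$ to a genuine $\Sigma^b_1$-formula by Ferreira's collapse result. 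Totality and uniqueness are derived from the inductive hypotheses together with the provability of the bounds, and correctness is immediate from the pointwise reading of clause~3, since the witnesses $z_i$ are forced to equal $h_i(\vec x,\eta)$.

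The main obstacle is bounded recursion on notation. Here, following Buss and Ferreira, I would introduce the standard sequence-coding apparatus and let $\fone(\vec x,y,w)$ be a bounded existential $(\exists s\preceq r)\,\Theta$, where $s$ codes the computation trace $\langle f(\vec x,y',\eta)\rangle_{y'\subseteq y}$ whose first entry satisfies the formula for $g$, whose successive entries satisfy the formulas for $h_0,h_1$ subject to the truncation $|_{t(\vec x,y)}$, and whose final entry is the output $w$; the trace length is polynomially bounded, so the formula stays (extended) $\Sigma^b_1$ and collapses to $\Sigma^b_1$ by Ferreira's result. Totality is the delicate step: it is proved \emph{inside} $\RS$ by induction on notation on $y$, which is precisely why the induction schema must be available for $\Sigma^b_1$-formulas, while uniqueness follows from the uniqueness parts of the inductive hypotheses together with the determinism of the trace. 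The probabilistic ingredient reappears only inside the sub-formulas coming from $g,h_0,h_1$; since $\eta$ is a global parameter that is never modified along the recursion, the pointwise-in-$\eta$ correctness of clause~3 follows by an inner induction on the prefixes of $y$ for each fixed $\eta$, and lifts to the measurable-set statement of clause~3 through the qualitative/quantitative correspondence proposition. The care needed is thus concentrated in the bookkeeping of the trace coding and in checking that the resulting formula genuinely lies in $\Sigma^b_1$ after the collapse, rather than in the randomness itself.
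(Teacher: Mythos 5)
The direction you do treat --- every $f\in\POR$ is $\Sigma^b_1$-representable in $\RS$ --- is handled essentially as in the paper's Theorem~\ref{thm:PORtoRS}: the same representing formulas for the base functions (in particular the same $\Flip$-based formula for the query function, which is indeed the only point where randomness enters), the same bounded-existential composition formula with bounds supplied by Parikh's theorem for $\RS$, and the same trace-coding treatment of bounded recursion on notation, with totality established inside $\RS$ by induction on notation and the extended-$\Sigma^b_1$ collapse invoked to stay within $\Sigma^b_1$. Your observation that clause~3 of Definition~\ref{df:representability} can be verified pointwise in $\eta$ and then lifted through the qualitative/quantitative correspondence is also how the paper proceeds.

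However, the theorem as the paper states and uses it is a characterization: ``$f$ is $\Sigma^b_1$-representable in $\RS$ \emph{when} $f\in\POR$'' is read as ``precisely when,'' and the paper explicitly concludes Theorem~\ref{thm:PORandRS} by putting Theorem~\ref{thm:PORtoRS} \emph{and} Corollary~\ref{cor:RStoPOR} together. Your proposal omits the converse entirely: that every function $\Sigma^b_1$-representable in $\RS$ already lies in $\POR$. This direction cannot be obtained by induction on the structure of $\POR$ or of formulas in any straightforward way; the paper devotes Section~\ref{sec:RStoPOR} to it, via a Cook--Urquhart-style realizability argument. One introduces the equational theory $\PORl$ and the intuitionistic theory $\IPOR$, shows that $\IRS$ embeds into $\IPOR$ (Theorem~\ref{thm:IRStoIPOR}), extracts from any $\IPOR$-proof of $(\forall x)(\exists y)\fone(x,y)$ a $\PORl$-term computing the witness (Corollary~\ref{cor:IPORSigma}), identifies $\PORl$-definable functions with $\POR$ (Theorem~\ref{thm:PRepr}), and finally passes from intuitionistic to classical $\RS$ via the $\forall\NP$-conservativity of $\IPOR+\EM$ over $\IPOR$. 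Without an argument of this kind, the proposal establishes only half of the stated equivalence.
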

\noindent
In particular, that any function in $\POR$
is $\Sigma^b_1$-representable in
$\RS$ is proved in Section~\ref{sec:PORtoRS}
by a straightforward induction on the structure
of probabilistic oracle functions.
The other direction is established in 
Section~\ref{sec:RStoPOR} by a realizability argument
very close to the one offered in~\cite{CookUrquhart}.

\begin{figure}[h!]
\begin{center}
\framebox{
\parbox[t][3.5cm]{11cm}{
\footnotesize{

 \begin{center}
 \begin{tikzpicture}[node distance=2cm]
\node[draw] at (-3,0) (a) {Class $\POR$};
\node[draw] at (3,0) (b) {$\Sigma^b_1$-Representability in $\RS$};
\node at (0,1.2) {\textcolor{gray}{\scriptsize{induction on $\POR$, Sec.~\ref{sec:PORtoRS}}}};
  \node at (0,-1.2) {\textcolor{gray}{\scriptsize{realizability as in~\cite{CookUrquhart}, Sec.~\ref{sec:RStoPOR}}}};

    \draw[->,dotted,thick] (-2.5,0.4) to [bend left=20] (2.5,0.4);
    \draw[->,dotted,thick] (2.5,-0.4) to [bend left=20] (-2.5,-0.4);    
   %
\end{tikzpicture} 
 \end{center}
 }}}
\caption{Relating $\POR$ and $\RS$}
\end{center}
\end{figure}

 \normalsize

\subsection{Functions in $\POR$ are $\Sigma^b_1$-Representable in $\RS$}\label{sec:PORtoRS}

We prove that any function in $\POR$ is 
$\Sigma^b_1$-representable in $\RS$ by
constructing the desired formula by induction on the structure
of oracle functions.
Preliminarily notice that, for example,  
the formula $(\forall \vec{x})(\exists y)G(\vec{x},y)$
occurring in condition 1. is \emph{not} in
$\Sigma^b_1$, since its existential
quantifier is not bounded.
Hence, in order to prove the inductive steps of Theorem~\ref{thm:PORtoRS}
– namely, composition
and bounded recursion on notation –
we need to adapt Parikh's theorem~\cite{Parikh}
to $\RS$.\footnote{The 
theorem is usually presented in the 
context of Buss' bounded theories,
as stating that given a bounded formula
$\fone$ in $\mathcal{L}_\Nat$ such that $\BussS \vdash
(\forall \vec{x})(\exists y)\fone$,
then there is a term $t(\vec{x})$
such that also $\BussS 
\vdash (\forall \vec{x})(\exists y\leq t(\vec{x}))
\fone(\vec{x},y)$~\cite{Buss86,Buss98}.
Furthermore, due to~\cite{FerreiraOitavem},
Buss' syntactic proof can be adapted
to $\Sigma^b_1$-NIA in a natural way.
The same result holds for $\RS$,
which does not contain any specific rule
concerning $\Flip(\cdot)$.}

\begin{prop}[``Parikh''~\cite{Parikh}]\label{prop:Parikh}
Let $\fone(\vec{x},y)$ be a bounded formula in $\rl$
such that
$\RS \vdash (\forall \vec{x})(\exists y) \fone(\vec{x},y)$.
Then, there is a term $t$ such that,
$$
\RS \vdash (\forall \vec{x})(\exists y \preceq t(\vec{x}))\fone(\vec{x},y).
$$
\end{prop}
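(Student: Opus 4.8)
The plan is to reproduce the standard syntactic proof of the bounding theorem for $\BussS$ and $\Ft$~\cite{Buss86,Buss98,FerreiraOitavem} and to check that the predicate symbol $\Flip(\cdot)$ interferes with none of its steps. First I would fix a sequent-calculus presentation of $\RS$ in which the induction schema appears as an inference rule (induction on notation restricted to $\Sigma^b_1$-formulas), and recast the hypothesis $\RS \vdash (\forall \vec x)(\exists y)\fone(\vec x,y)$ as the derivability of the sequent ${\Rightarrow}\,(\exists y)\fone(\vec x,y)$ with $\vec x$ free. The central tool is \emph{free-cut elimination}: every such derivation can be transformed into one in which all cut formulas are either subformulas of the endsequent or ancestors of induction formulas. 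Since the induction formulas are $\Sigma^b_1$, hence bounded, and $\fone$ is bounded by hypothesis, every formula occurring in the free-cut-free derivation is bounded, with the single exception of the unbounded existential $(\exists y)\fone$ and its direct ancestors.

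I would then establish a bounding lemma by induction on the free-cut-free derivation: to each sequent I assign a term, in the free variables of that sequent and built only from $\epsilon,\zzero,\oone,\conc$, and $\times$, that bounds every witness for a positively occurring unbounded existential. The base cases are the basic and logical axioms, which are bounded and so carry no genuine witness. For the inductive step I inspect each rule: weakening, contraction, and exchange merely transport the bound; a right-$\exists$ inference introducing $(\exists y)\fone$ from some $\fone(\vec x,r)$ contributes the witnessing term $r$, which I absorb into the running bound via $\conc$; and cuts, being on bounded formulas, cannot manufacture new unbounded witnesses. Combining the finitely many witnessing terms that occur at right-$\exists$ inferences yields a single term $t(\vec x)$ with $\RS \vdash (\forall \vec x)(\exists y \preceq t(\vec x))\fone(\vec x,y)$.

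The main obstacle is the induction rule. There a witness may be produced afresh at each of the $|y|$ induction steps, so a naive accumulation would blow up exponentially; the point of restricting induction to $\Sigma^b_1$-formulas with a \emph{bounded} induction formula is precisely that the cumulative witness can be bounded by one application of $\times$, i.e.\ self-concatenation a linear number of times, which is a legitimate $\rl$-term. Checking that the resulting bound stays within the restricted term grammar of $\rl$, and that the eigenvariable conditions of the quantifier and induction rules survive the substitution of bounding terms, is the delicate bookkeeping of the argument, and it is exactly the step at which unbounded induction would destroy the theorem.

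Finally, I would observe that $\Flip(\cdot)$ occurs only inside (bounded) formulas and is governed by no inference rule of its own, so throughout the argument it behaves exactly like any other atomic predicate. Consequently Ferreira's adaptation of Buss' proof from $\Ft$ transfers to $\RS$ verbatim, which is the content we wish to assert.
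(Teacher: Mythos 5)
Your proposal is correct and follows essentially the same route as the paper, which does not spell out a proof at all but merely notes (in a footnote) that Buss' syntactic argument, as adapted by Ferreira and Oitavem to $\Ft$, carries over to $\RS$ because the theory has no inference rule specific to $\Flip(\cdot)$. Your free-cut-elimination and bounding-lemma outline is exactly the standard proof being invoked, with the same key observation that $\Flip(\cdot)$ behaves as an inert atomic predicate.
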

%
%

\begin{theorem}\label{thm:PORtoRS}
Every $f\in \POR$ is $\Sigma^b_1$-representable
in $\RS$.
\end{theorem}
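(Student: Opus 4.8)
The plan is to prove Theorem~\ref{thm:PORtoRS} by induction on the structure of $\POR$, exhibiting for each function $f\in\POR$ a $\Sigma^b_1$-formula $\fone_f(\vec{x},y)$ satisfying the three conditions of Definition~\ref{df:representability}. Since $\POR$ is the smallest class containing the five base functions and closed under composition and bounded recursion on notation, it suffices to produce representing formulas for $\emptyS$, $\projS^n_i$, $\successorS_b$, $\condS$, and $\queryS$, and then to show that $\Sigma^b_1$-representability is preserved by the two closure schemas. For each case the work splits into the \emph{provability obligations} (conditions 1 and 2, the existence and uniqueness of the output, to be derived in $\RS$ using its basic axioms and $\Sigma^b_1$-induction on notation) and the \emph{semantic obligation} (condition 3, linking the quantitative interpretation $\model{\fone_f}$ with the oracle value $f(\vec{\str},\eta)$).

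First I would handle the base cases, which are direct. For the empty, projection, and successor functions the natural formulas are $y=\epsilon$, $y=x_i$, and $y=x\bbool$, all quantifier-free and trivially $\Sigma^b_1$; conditions 1 and 2 follow from the basic equality axioms, and condition 3 is immediate since these functions do not inspect $\eta$, so $\model{\fone_f(\ldots,\overline{\strT})}$ is either $\Os$ or $\emptyset$. The conditional $\condS$ is represented by a Boolean combination of equalities describing its case split, again not touching $\eta$. The genuinely new base case is the \emph{query} function $\queryS(x,\eta)=\eta(x)$: here I would take $\fone_{\queryS}(x,y) \df (\Flip(x)\wedge y=\oone)\vee(\neg\Flip(x)\wedge y=\epsilon)$ (or $y=\zzero$, depending on the chosen encoding of bits as strings), so that the semantic clause for $\Flip$ in Definition~\ref{df:RLformulas} makes $\eta\in\model{\fone_{\queryS}(\overline{\str},\overline{\strT})}$ hold exactly when $\eta(\str)$ equals the bit coded by $\strT$. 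This is the only place where the probabilistic predicate interacts essentially with representability, and I would verify conditions 1 and 2 by observing that $\Flip(x)\vee\neg\Flip(x)$ is provable and that the two disjuncts are mutually exclusive.

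For the inductive steps I would follow Ferreira's and Buss' standard constructions~\cite{Ferreira88,Buss86}, adapted to carry the extra oracle parameter. For composition $f(\vec{x},\eta)=g(\vec{h}(\vec{x},\eta),\eta)$, given representing formulas for $g$ and the $h_i$, the candidate is $\fone_f(\vec{x},y)\df(\exists \vec{w})\big(\bigwedge_i \fone_{h_i}(\vec{x},w_i)\wedge \fone_g(\vec{w},y)\big)$; the essential move is to bound the intermediate existentials, which is exactly what Proposition~\ref{prop:Parikh} supplies, turning unbounded $(\exists w_i)$ into $(\exists w_i\preceq t_i(\vec{x}))$ and keeping the formula in $\Sigma^b_1$ (after absorbing the inner bounded quantifiers via the $\Sigma^b_1$-collapse proposition of~\cite{Ferreira88}). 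Uniqueness and existence propagate by composing the corresponding derivations, and condition 3 follows because a single oracle $\eta$ threads through all subformulas simultaneously, matching the way $\eta$ is passed to every subfunction in the definition of composition.

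The main obstacle is the bounded-recursion-on-notation step, and I would treat it as the heart of the proof. The difficulty is to express the entire computation of $f(\vec{x},y,\eta)$ — a sequence of $|y|$ recursive stages, each truncated by the bounding term $t$ — inside a single bounded formula whose matrix uses only subword quantification, and then to prove in $\RS$ that this formula defines a total single-valued function. I would follow the classical device of encoding the whole \emph{computation sequence} (the list of intermediate values $f(\vec{x},y',\eta)$ for all prefixes $y'$ of $y$) as one string, bounded in length by a term obtained from the bound $t$ and $|y|$, and assert its correctness by a $\Sigma^b_0$ (subword-quantified) predicate checking the base clause and each recursive step against $\fone_{g},\fone_{h_0},\fone_{h_1}$ together with the truncation formula $t|_r=s$. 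The existence of the output is then proved by $\Sigma^b_1$-induction on notation on $y$ — this is precisely where the restriction of the induction schema to $\Sigma^b_1$-formulas is used, and where one must check that the sequence-encoding predicate genuinely lies in $\Sigma^b_1$ and that its witness length stays polynomially bounded so Parikh applies; uniqueness follows from a parallel induction using the uniqueness of $\fone_{g},\fone_{h_0},\fone_{h_1}$. The semantic condition 3 is the subtle probabilistic point: since the oracle $\eta$ is fixed throughout the recursion and each stage's representing formula already satisfies its own condition 3 relative to that same $\eta$, a final induction on notation (at the level of the quantitative semantics, using the mutual-relation Proposition) transfers correctness from the stages to the whole computation, yielding $\eta\in\model{\fone_f(\overline{\vec{\str}},\overline{\strT})}$ iff $f(\vec{\str},\eta)=\strT$.
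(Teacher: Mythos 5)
Your proposal is correct and follows essentially the same route as the paper: structural induction on $\POR$ with the same base-case formulas (including the $\Flip$-based representation of the query function), Parikh's proposition to bound the intermediate existentials in the composition step, and a G\"odel-style encoding of the computation sequence (the paper's $F_{lh}$/$F_{eval}$ machinery from Ferreira) verified by subword-quantified conditions for bounded recursion on notation. The only differences are cosmetic choices of representing formulas for the trivial base cases.
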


\begin{proof}[Proof Sketch]
The proof is by induction on the structure
of functions in $\POR$.\footnote{For further details, see Appendix~\cite{appendix:TaskA}.}

\emph{Base Case.} Each basic function is
$\Sigma^b_1$-representable in $\RS$.
There are five possible sub-cases:

\begin{itemize}
\itemsep0em

\item \emph{Empty (String) Function.} 
$f=\emptyS$ is $\Sigma^b_1$-represented in $\RS$ by
the formula:
$$
F_{\emptyS}(x,y) : x=x \wedge y=\epsilon.
$$
\begin{enumerate}
\itemsep0em

\item 
Existence is proved considering
$y=\epsilon$. 
For the reflexivity of identity both
$\RS\vdash x=x$ and $\RS \vdash \epsilon=\epsilon$
hold.
So, by rules for conjunction,
we obtain $\RS\vdash x=x \wedge \epsilon=\epsilon$,
and conclude:
$$
\RS\vdash (\forall x)(\exists y)(x=x \wedge y=\epsilon).
$$

\item Uniqueness is proved assuming $\RS \vdash
x= x \wedge z=\epsilon$.
By rules for conjunction,
in particular $\RS \vdash z=\epsilon$, and since
$\RS \vdash y=\epsilon$,
by the transitivity of identity, we conclude
$$
\RS\vdash y=z.
$$

\item Assume $\emptyS(\str, \eta^*) = \strT$.
If $\strT=\eepsilon$, then:
\begin{align*}
\model{\overline{\str} = \overline{\str} 
\wedge \overline{\strT} = \epsilon}
&=
\model{\overline{\str} = \overline{\str}} 
\cap
\model{\overline{\strT} = \epsilon} \\
&= \Os \cap \Os \\
&= \Os.
\end{align*}
So, in this case, for any $\eta^*$,
$\eta^* \in \model{\overline{\str}=\overline{\str}
\wedge \overline{\strT} = \epsilon}$,
as clearly $\eta^* \in\Os$.

If $\strT\neq \eepsilon$, then
\begin{align*}
\model{\overline{\str} = \overline{\str} \wedge
\overline{\strT} = \epsilon}
&=
\model{\overline{\str} = \overline{\str}}
\cap \model{\overline{\strT} =\epsilon} \\
&= \Os \cap \emptyset \\
&= \emptyset. 
\end{align*}
So, for any $\eta^*$,
$\eta^* \not\in \model{\overline{\str}= \overline{\str}
\vee \overline{\strT} = \epsilon}$,
as clearly $\eta^* \not\in \emptyset$.
\end{enumerate}

\item \emph{Projection (String) Function.}
$f=\projS^n_i$, for $1\leq i\leq n$, 
is $\Sigma^b_1$-represented in $\RS$ by
the formula:
$$
F_{\projS^{n}_{i}}(x,y) : \bigwedge_{j\in J}(x_j=x_j)
\wedge y= x_i,
$$
where $J=\{1,\dots, n\}\setminus \{i\}$.

\item \emph{Word-Successor Function.} 
$f=\successorS_b$ is $\Sigma^b_1$-represented
in $\RS$ by the formula:
$$
F_{S_b}(x,y) : y= x\bbool
$$
where $\bbool=\zzero$ if $b=0$
and $\bbool=\oone$ if $b=1$.

\item \emph{Conditional (String) Function.}
$f=\condS$ is $\Sigma^b_1$-represented
in $\RS$ by the formula:
\begin{align*}
\fone_{\condS} (x,v,z_0,z_1,y) :
(x=\epsilon \wedge y=v) 
&\vee (\exists x' \preceq x)
(x=x'\zzero \wedge y=z_0) \\
&\vee 
(\exists x'\preceq x)
(x=x'\oone \wedge y=z_1).
\end{align*}

\item \emph{Query (String) Function.}
$f=\queryS$ is $\Sigma^b_1$-represented
in $\RS$ by the formula:
$$
F_{\queryS}(x,y) :
(\Flip(x) \wedge y=\oone) \vee
(\neg \Flip(x) \wedge y=\zzero).
$$
Notice that, in this case, the proof crucially
relies on the fact that oracle functions 
invoke \emph{exactly one} oracle.

\begin{enumerate}
\itemsep0em

\item Existence is proved by cases.
If $\RS \vdash \Flip(x)$,
let $y=\oone$.
By the reflexivity of identity,
$\RS \vdash \oone = \oone$ holds,
so also $\RS \vdash \Flip(x) \wedge \oone=\oone$.
By rules for disjunction, we conclude
$\RS\vdash (\Flip(x) \wedge \oone = \oone)
\vee (\neg \Flip(x) \wedge \oone= \zzero)$
and so,
$$
\RS \vdash (\exists y)\big((\Flip(x) \wedge y=\oone)
\vee (\neg \Flip(x) \wedge y=\zzero)\big).
$$
If $\RS \vdash \neg \Flip(x)$,
let $y=\zzero$.
By the reflexivity of identity 
$\RS \vdash \zzero=\zzero$ holds.
Thus, by the rules for conjunction,
$\RS \vdash \neg \Flip(x) \wedge \zzero=\zzero$
and for disjunction,
we conclude $\RS \vdash(\Flip(x) \wedge \zzero=\oone)
\vee (\neg \Flip(x) \wedge \zzero=\zzero)$ and so,
$$
\RS \vdash (\exists y) \big((\Flip(x) \wedge
y= \oone) \vee
(\neg \Flip(x) \wedge y=\zzero)\big).
$$

\item Uniqueness is established 
relying on the transitivity of identity.

\item Finally, it is shown that for every
$\str,\strT\in \Ss$ and 
$\eta^* \in \Os$,
$\queryS(\str,\eta^*)=\strT$ when
$\eta^*\in \model{F_{\queryS} 
(\overline{\str},\overline{\strT})}$.
Assume $\queryS(\str,\eta^*) = \one$,
which is $\eta^*(\str)=\one$,
\begin{align*}
\model{\fone_Q(\overline{\sigma},\overline{\tau})}
&=
\model{\Flip(\overline{\str}) \wedge 
\overline{\strT}=\oone}
\cup
\model{\neg \Flip(\overline{\str})
\wedge \overline{\strT}=\zzero} \\
&= (\model{\Flip(\overline{\str})} \cap
\model{\oone=\oone}) \cup
(\model{\neg \Flip(\overline{\str})}
\cap \model{\oone=\zzero}) \\
&= (\model{\Flip(\overline{\str})} \cap \Os)
\cup 
(\model{\neg \Flip(\overline{\str})} \cap
\emptyset) \\
&= \model{\Flip(\overline{\str})} \\
&= \{\eta \ | \ \eta(\str)=\one\}.
\end{align*}

\normalsize
Clearly, $\eta^* \in  \model{(\Flip(\overline{\str}) \wedge \overline{\strT} = \oone) \vee (\neg \Flip(\overline{\str}) \wedge \overline{\strT}=\zzero)}$.

The case $\queryS(\sigma,\eta^*)=\zero$ 
and the opposite direction are proved in a similar way. 
\end{enumerate} 
\end{itemize}

 \emph{Inductive Case.}
If $f$ is defined by composition or bounded recursion
from $\Sigma^b_1$-representable functions,
then $f$ is $\Sigma^b_1$-representable in $\RS$:

\begin{itemize}
\itemsep0em

\item \emph{Composition.}
Assume that $f$ is defined by composition from functions
$g,h_1,\dots, h_k$ so that
$$
f(\vec{x},\eta) = g(h_1(\vec{x},\eta), \dots, h_k(\vec{x},\eta),
\eta)
$$
and that $g,h_1,\dots, h_k$ are represented in $\RS$
by  the $\Sigma^b_1$-formulas
$F_g,F_{h_1},$ $\dots, F_{h_k}$, respectively.
By Proposition~\ref{prop:Parikh},
there exist suitable terms $t_{g}, t_{h_1},\dots, t_{h_k}$
such that condition 1. of Definition~\ref{df:representability}
can be strengthened to $\RS\vdash(\forall \vec{x})
(\exists y \preceq t_i)F_i(\vec{x},y)$ for each $i\in\{g,h_1,\dots, h_k\}$.
We conclude that $f(\vec{x},\eta)$ is $\Sigma^b_1$-represented
in $\RS$ by the following formula:
\begin{align*}
F_f(x,y) : (\exists z_1\preceq t_{h_1}(\vec{x}))
\dots 
(\exists z_k \preceq t_{h_k}(\vec{x}))
\big(F_{h_1}(\vec{x},z_1) &\wedge \dots
F_{h_k}(\vec{x},z_k) \\ 
&\wedge
F_g(z_1,..., z_k,y)\big).
\end{align*}
Indeed, by IH,
$F_g,F_{h_1},\dots, F_{h_k}$ are $\Sigma^b_1$-formulas.
Then, also $F_f$ is in $\Sigma^b_1$.
Conditions 1.-3. are proved to hold by slightly modifying standard
proofs.

\item \emph{Bounded Recursion.} Assume that $f$
is defined by bounded recursion from
$g,h_0,$ and $h_1$ so that:
\begin{align*}
f(\vec{x},\eepsilon,\eta) &= g(\vec{x},\eta) \\
f(\vec{x}, y\bool, \eta) &= h_i(\vec{x},y,f(\vec{x},y,\eta),\eta)|_{t(\vec{x},y)},
\end{align*}
where $i \in\{0,1\}$
and $\bool=\zero$ when $i=0$ while
and $\bool = \one$ when $i=1$.
Let $g,h_0, h_1$ be represented in $\RS$
by, respectively, the $\Sigma^b_1$-formulas
$F_g,F_{h_0},$ and $F_{h_1}$.
Moreover, by Proposition~\ref{prop:Parikh},
there exist suitable terms $t_g,t_{h_0},$ and $t_{h_1}$
such that condition 1. of Definition~\ref{df:representability}
can be strengthened to its ``bounded version''.
Then, it can be proved that $f(\vec{x},y)$
is $\Sigma^b_1$-represented in $\RS$ by the formula below:
\small
\begin{align*}
F_f(x,y) : \ &(\exists v \preceq t_g(\vec{x}) 
t_f(\vec{x}) (y \times t(\vec{x},y) t(\vec{x},y) \oone\oone))
\big(F_{lh}(v, \oone \times y \oone)  \\
& \wedge (\exists z\preceq t_g(\vec{x}))
(F_{eval}(v,\epsilon,z) \wedge
F_g(\vec{x},z)) \\
& \wedge (\forall u \subset y)(\exists z)
\big(\tilde{z} \preceq t(\vec{x},y))
\big(F_{eval}(v,\oone \times u, z) \wedge
F_{eval}(v,\oone \times u\oone,\tilde{z}) \\
& \wedge (u\zzero \subseteq y \rightarrow
(\exists z_0 \preceq t_{h_0}(\vec{x},u,z))
(F_{h_0}(\vec{x},u,z,z_0)
\wedge z_0|_{t(\vec{x},u)} = \tilde{z})) \\
& \wedge (u\oone \subseteq y \rightarrow
(\exists z_1 \preceq t_{h_1}(\vec{x},u,z))
(F_{h_1}(\vec{x},u,z,z_1) \wedge
z_1|_{t(\vec{x},u)} = \tilde{z}))\big)\big),
\end{align*}
\normalsize
where $\fone_{lh}$ and $F_{eval}$ are $\Sigma^b_1$-formulas
defined as in~\cite{Ferreira90}.
Intuitively, $F_{lh}(x,y)$ states that the number of $\oone$s
in the encoding of $x$ is $yy$,
while $F_{eval}(x,y,z)$ is a ``decoding'' formula
(strongly resembling G\"odel's $\beta$-formula),
expressing that the ``bit'' encoded in $x$ as its $y$-th bit
is $z$.
Moreover $x\subset y$ is an abbreviation for $x\subseteq y
\wedge x \neq y$.
Then, this formula $F_f$ satisfies all the requirements to 
$\Sigma^b_1$-represent in $\RS$ the function $f$,
obtained by bounded recursion from $g,h_0,$ and $h_1$.
In particular, conditions 1. and 2. concerning existence and 
uniqueness,
have already been proved to hold by Ferreira~\cite{Ferreira90}.
Furthermore, $F_f$ expresses that,
given the desired encoding sequence $v$:
(i.) the $\eepsilon$-th bit of $v$ is (the encoding
of) $z'$ such that $F_g(\vec{x},z')$ holds,
where (for IH) $F_g$ is the $\Sigma^b_1$-formula
representing the function $g$,
and (ii.) given that for each $u\subset y$,
$z$ denotes the `` bit'' encoded in $v$ at position 
$1\times u$ and, similarly, $\tilde{z}$ is the next ``bit'',
encoded in $v$ at position $\oone \times u\oone$,
then if $u\bbool \subseteq y$
(that is, if we are considering the initial substring of $y$
the last bit of which corresponds to $\bbool$),
then there is a $z_b$ such that
$F_{h_b}(\vec{x},y,z,z_b)$, where $F_{h_b}$ $\Sigma^b_1$-represents
the function $f_{h_b}$ and the truncation of $z_b$
at $t(\vec{x},u)$ is precisely
$\tilde{z}$, with $b=0$ when $\bbool=\zzero$ and
$b=1$ when $\bbool=\oone$.\footnote{Otherwise said, 
if $u\zzero\subseteq y$, there is a $z_0$ such that
the $\Sigma^b_1$-formula $F_{h_0}(\vec{x},
u,z,z_0)$ represents  the function $h_0$
and, in this case, $\tilde{z}$ corresponds
to the truncation of $z_0$ at $t(\vec{x},u)$,
that is the ``bit'' encoded by $v$ at the position
$\oone \times u\oone$ (i.e.~corresponding to
$u\zzero \subseteq y$) is precisely such $\tilde{z}$.
Equally, if $u\oone\subseteq y$, there is a 
$z_0$ such that the $\Sigma^b_1$-formula
$F_{h_1}(\vec{x},u,z,z_1)$ represents now the function
$h_1$ and $\tilde{z}$ corresponds to the truncation
of $z_1$ at $t(\vec{x},u)$,
that is the ``bit'' encoded by $v$ at position
$\oone\times u\oone$ (i.e.~corresponding to $u\oone
\subseteq y)$ is precisely such $\vec{z}$. For further details, see Appendix~\cite{appendix:TaskA}.}
\end{itemize}

\end{proof}

\subsection{The functions which are $\Sigma^b_1$-Representable
in $\RS$ are in $\POR$}\label{sec:RStoPOR}

Here, we consider the opposite direction.
Our proof is obtained by adapting that by Cook and Urquhart 
for $\IPV$~\cite{CookUrquhart}.
It passes through a \emph{realizability interpretation}
of the intuitionistic version of $\RS$,
called $\IRS$ and is structured as follows:
\begin{enumerate}
\itemsep0em
\item First, 
we define $\PORl$ a basic equational theory 
for a simply typed $\lambda$-calculus 
endowed with primitives corresponding to functions
of $\POR$.

\item Second, 
we introduce a first-order \emph{intuitionistic}
theory $\IPOR$, which extends $\PORl$
with the usual predicate calculus as well as an
$\NP$-induction schema.
It is shown that $\IPOR$ is strong enough to prove
all theorems of $\IRS$, the intuitionistic
version of $\RS$.

\item Then,
we develop a realizability interpretation of
$\IPOR$ (inside itself), showing that from
any derivation of $(\forall x)(\exists y)\fone(x,y)$
(where $\fone$ is a $\Sigma^b_0$-formula)
one can extract a $\lambda$-term $\term{t}$
of $\POR^\lambda$, such that $(\forall x)\fone(x,\term{t} x)$ is provable in $\IPOR$.
From this we deduce that every function which
is $\Sigma^b_1$-representable in $\IRS$ is in $\POR$.

\item Finally, 
we extend this result to classical $\RS$ 
showing that any $\Sigma^b_1$-formula
provable in $\IPOR$ + Excluded Middle ($\EM$, for short)
is already provable in $\IPOR$.
\end{enumerate}

\subsubsection{The System $\PORl$}
We define an equational theory for a simply
typed $\lambda$-calculus augmented with
primitives for functions of $\POR$.
Actually, these do not
exactly correspond to the ones of $\POR$, 
although the resulting function algebra is proved 
equivalent.\footnote{Our choice follows the principle
that the defining equations for the functions
different from the recursion operator should not 
depend on it.
}

\paragraph{The Syntax of $\PORl$.}
We start by considering the syntax of $\PORl$.
\begin{defn}[Types of $\PORl$]
\emph{Types of $\PORl$} are defined by the
grammar below:
$$
\typeOne := s \midd \typeOne \arrowT
\typeOne.
$$
\end{defn}

\begin{defn}[Terms of $\PORl$]\label{df:termsPORl}
\emph{Terms of $\PORl$} are standard,
simply typed $\lambda$-terms plus the
constants below:
\begin{align*}
\term{0}, \term{1}, \term{\epsilon} &:  s \\
\circ &: s \arrowT s \arrowT s \\
\term{Tail} &: s \arrowT s \\
\term{Trunc} &: s \arrowT s \arrowT s \\
\term{Cond} &: s \arrowT s \arrowT s 
\arrowT s \arrowT s \\
\term{Flipcoin} &: 
s \arrowT s \\
\term{Red} &: 
s \arrowT (s \arrowT s \arrowT s)
\arrowT
(s \arrowT s \arrowT s)
\arrowT 
(s \arrowT s) \arrowT s 
\arrowT s.
\end{align*}
\end{defn}
\noindent
Intuitively, $\term{Tail}(x)$ computes the
string obtained by deleting the first digit
of $x$;
$\term{Trunc}(x,y)$ computes the string 
obtained by truncating $x$ at the length
of $y$;
$\term{Cond}(x,y,z,w)$ computes
the function that yields $y$ when $x=\eepsilon$,
$z$ when $x=x'\zero$, 
and $w$ when $x=x'\one$;
$\term{Flipcoin}(x)$ indicates a random 
$\zero/\one$ generator;
$\term{Rec}$ is the operator for bounded recursion
on notation.

\begin{notation}
We usually abbreviate $x\circ y$ as $xy$.
Moreover, for readability's sake,
being
$\term{T}$ any constant $\term{Tail},
\term{Trunc}, \term{Cond}, \term{Flipcoin},
\term{Rec}$ of arity $n$,
we indicate $\term{T}\term{u}_1,\dots, \term{u}_n$
as $\term{T}(\term{u}_1,\dots, \term{u}_n)$.
\end{notation}

$\PORl$ is reminiscent of $\PV$ by Cook and Urquhart~\cite{CookUrquhart}
without the induction rule (R5) that we do not need.
The main difference being the constant $\term{Flipcoin}$,
which, as said, intuitively denotes a function
which randomly generates either $\zero$ or
$\one$ when reads a string.\footnote{These interpretations will be made clear by
Definition~\ref{df:provRepr} below.}

\begin{remark}
In the following, we often define terms
implicitly using bounded recursion on notation.
Otherwise said, we define new terms, say
$\term{F} : s \arrowT \dots \arrowT s$, by equations
of the form:
\begin{align*}
\term{F} \vec{x} \term{\epsilon} &\df
\term{G} \vec{x} \\
\term{F} \vec{x}(y\term{0}) &\df
\term{H}_0 \vec{x} y(\term{F}\vec{x}y) \\
\term{F}\vec{x}(y\term{1}) &\df
\term{H}_1 \vec{x} y(\term{F}\vec{x}y),
\end{align*}
where $\term{G}, \term{H}_0, \term{H}_1$
are already-defined terms,
and the second and third equations
satisfy a length bound given by some term $\term{K}$
(which is usually $\lambda \vec{x} \lambda y.\term{0}$).
The term $\term{F}$ can be explicitly defined as
follows:
$$
\term{F} \df \lambda \vec{x}. \lambda y.
\term{Rec}(\term{G}\vec{x}, \lambda yy'.
\term{H}_0 \vec{x}yy', \lambda y y'. \term{H}_1
\vec{x}yy', \term{K}\vec{x},y).
$$
\end{remark}
\noindent
We also introduce the following abbreviations for composed
functions:
\begin{itemize}
\itemsep0em

\item $\term{B}(x) \df \term{Cond}(x,\epsilon,\term{0},
\term{1})$ denotes the function that computes the last
digit of $x$,
i.e.~coerces $x$ to a Boolean value
\item $\term{BNeg}(x) \df \term{Cond}(x,\epsilon,\term{1},\term{0})$ denotes the function that
computes the Boolean negation of $\term{B}(x)$
\item $\term{BOr}(x,y) \df \term{Cond}(\term{B}(x),
\term{B}(y), \term{B}(y), \term{1})$ denotes
the function that coerces $x$ and $y$ to Booleans
and then performs the OR operation
\item $\term{BAnd}(x,y) \df \term{Cond}(\term{B}(x),
\epsilon, \term{0}, \term{B}(y))$ denotes
the function that coerces $x$ and $y$ to Booleans
and then performs the AND operation
\item $\term{Eps}(x) \df \term{Cond}(x,\term{1},\term{0}, \term{0})$ denotes the characteristic
function of the predicate ``$x=\epsilon$''
\item $\term{Bool}(x) \df \term{BAnd}(\term{\Eps}(\term{Tail}(x)), \term{BNeg}(\term{Eps}(x)))$
denotes the characteristic function of the predicate
``$x=\zzero \vee x=\oone$''

\item $\term{Zero}(x) \df \term{Cond}(\term{Bool}(x),
\term{0}, \term{Cond}(x,\term{0},\term{0},\term{1}),
\term{0})$ denotes the characteristic function
of the predicate ``$x=\zzero$''

\item $\term{Conc}(x,y)$ denotes the concatenation
function defined by the equations below:
\begin{align*}
\term{Conc}(x,\epsilon) \df x \ \ \ \ \ \ \ \
\term{Conc}(x,y\term{b}) \df \term{Conc}(x,y)\term{b}, \\
\end{align*}
with $\term{b} \in\term{\{0,1\}}$.
\item $\term{Eq}(x,y)$ denotes the characteristic
function of the predicate ``$x=y$'' and is defined
by double recursion by the equation below:
\begin{align*}
\term{Eq}(\epsilon,\epsilon) &\df \term{1} \ \ \ \ \ \ \ \ 
\term{Eq}(\epsilon, y\term{b}) \df \term{0} \\
\term{Eq}(x\term{b}, \epsilon) 
= \term{Eq}(x\term{0},y\term{1})
= \term{Eq}(x\term{1},y\term{0}) &\df
\term{0} 
\ \ \ \ \ \ 
\term{Eq}(x\term{b}, y\term{b}) 
\df \term{Eq}(x,y),
%
\end{align*}
with $\term{b} \in \{\term{0}, \term{1}\}$
\item $\term{Times}(x,y)$ denotes the function
for self-concatenation, $x,y\mapsto x\ttimes y$,
and is defined by the equations below:
$$
\term{Times}(x,\epsilon) \df \epsilon  \ \ \ \ \ \ \ \
\term{Times}(x,y\term{b}) \df 
\term{Conc}(\term{Times}(x,y), x),
$$
with $\term{b}\in\{\term{0},\term{1}\}$.
\item $\term{Sub}(x,y)$ denotes the initial-substring
functions, $x,y \mapsto \Sf(x,y)$,
and is defined by bounded recursion as follows:
$$
\term{Sub}(x,\epsilon) \df \term{Eps}(x) \ \ \ \
\ \ \ \
\term{Sub}(x,y\term{b}) \df \term{BOr}(
\term{Sub}(x,y), \term{Eq}(x,y\term{b})), 
$$
%
with $\term{b}\in \{\term{0},\term{1}\}$.
\end{itemize}

\begin{defn}[Formulas of $\PORl$]
\emph{Formulas of $\PORl$} are equations
$\termOne = \termTwo$, where $\termOne$
and $\termTwo$ are terms of type $s$.
\end{defn}

\paragraph{The Theory $\PORl$.}
We now introduce the theory $\PORl$.
\begin{defn}[Theory $\PORl$]
Axioms of $\PORl$ are the following ones:

\begin{itemize}

\item Defining equations for the constants of $\PORl$:
\small
\begin{align*}
\epsilon x= x\epsilon = x  \ \ \ \ \ \ & \ \ \ \ \ \ 
x(y\term{b}) = (xy)\term{b} \\
\\
\term{Tail}(\epsilon) = \epsilon \ \ \ \ \ \ & \ \ \ \ \ \ 
\term{Tail}(x\term{b}) = x \\
\\
\term{Trunc}(x,\epsilon) = 
\term{Trunc}(\epsilon, x) &= \epsilon \\
\term{Trunc}(x\term{b}, y\term{0}) =
\term{Trunc}(x\term{b}, y\term{1})
&= 
\term{Trunc}(x,y)\term{b} \\
\\
\term{Cond}(\epsilon, y,z,w) = y \ \ \ \ \ \ \
\term{Cond}(x\term{0}, y,z,w) &= z \ \ \ \ \ \ \ 
\term{Cond}(x\term{1},y,z,w) = w \\
\\
\term{Bool}(\term{Flipcoin}(x)) &= \term{1} \\
\\
\term{Rec}(x,h_0,h_1,k,\epsilon) &= x \\
\term{Rec}(x,h_0,h_1,k,y\term{b}) &= 
\term{Trunc}(h_by(\term{Rec}(x,h_0,h_1,k,y)),ky), \\
%
\end{align*}
\normalsize
where $\term{b}\in\{\term{0},\term{1}\}$ and 
$b\in\{0,1\}$.\footnote{When if $\term{b}=\term{0}$,
then $b=0$ and $\term{b}=\term{1}$, then $b=1$.}

\item The $(\beta)$- and $(\nu)$-axioms:
\small
\begin{align*}
\Chole \big[(\lambda x.\termOne)\termTwo\big]
&=
\Chole \big[\termOne\{\termTwo/x\}\big] 
\tag{$\beta$} \\
\Chole\big[\lambda x.\termOne x\big] &=
\Chole\big[\termOne\big] \tag{$\nu$}.
\end{align*}
\normalsize
where $\Chole\big[\cdot\big]$ indicates a context
with a unique occurrence of the hole $\big[ \ \big]$,
so that $\Chole\big[\termOne\big]$
denotes the variable capturing replacement of
$\big[ \ \big]$ by $\termOne$ in $\Chole\big[ \ \big]$.
\end{itemize}
The inference rules of $\PORl$ are the following
ones:
\small
\begin{align*}
\termOne = \termTwo &\vdash \termTwo = \termOne
\tag{R1} \\
\termOne = \termTwo, \termTwo = \termThree &\vdash
\termOne = \termThree \tag{R2} \\
\termOne = \termTwo &\vdash \termThree\{\termOne/x\}
= \termThree\{\termTwo/x\} \tag{R3} \\
\termOne = \termTwo &\vdash
\termOne\{\termThree/x\} =
\termTwo\{\termThree/x\}. \tag{R4}
\end{align*}
\normalsize
\end{defn}
\noindent
As predictable, $\vdash_{\PORl} \termOne=\termTwo$
expresses that the equation $\termOne=\termTwo$
is deducible using instances of the axioms above plus
 inference rules (R1)-(R4).
Similarly, given any set $\TSet$ of equations,
$\TSet \vdash_{\PORl} \termOne=\termTwo$
expresses that the equation $\termOne=\termTwo$
is deducible using instances of the quoted axioms and
rules together with equations from $\TSet$.

\paragraph{Relating $\POR$ and $\PORl$.}
For any string $\str \in\Ss $, let $\ooverline{\str}:s$
denote the term of $\PORl$ corresponding
to it, that is:
\begin{center}
$\ooverline{\eepsilon}=\epsilon \ \ \ \ \ \ \ \ \ \ \ \ \ \ \ \ \ \ \ \
\ \ \ \
\ooverline{\str\zero} = \ooverline{\str}\term{0} \ \ \ \ \ \ \ \ \ \ \ \ \ \ \ \ 
\ \ \ \ \ \ \ \ 
\ooverline{\str \one} = \ooverline{\str} \term{1}.$
\end{center}
For any $\eta\in \Os$, let $\TSet_\eta$
be the set of all equations of the form 
$\term{Flipcoin}(\ooverline{\str})= \ooverline{\eta(\str)}$.

\begin{defn}[Provable Representability]\label{df:provRepr}
Let $f:\Os\times \Ss^j \to \Ss$.
A term $\termOne: s \arrowT \dots \arrowT s$
of $\PORl$ \emph{provably represents f}
when for all strings $\str_1,\dots, \str_j, \str \in\Ss$,
and $\eta\in\Os$,
$$
f(\str_1,\dots, \str_j,\eta) = \str 
\ \ \ \text{iff} \ \ \ 
\TSet_\eta \vdash_{\PORl} \termOne
\ooverline{\str_1} \dots \ooverline{\str_j} = \ooverline{\str}.
$$
\end{defn}

\begin{ex}\label{ex:Flipcoin}
The term $\term{Flipcoin} : s \arrowT s$ provably
represents the query function $\query(x,\eta)=\eta(x)$ of $\POR$,
since for any $\str\in \Ss$ and $\eta \in \Os$,
$$
\term{Flipcoin}(\ooverline{\str}) =
\ooverline{\eta(\str)} \vdash_{\PORl}
\term{Flipcoin}(\ooverline{\str}) = \ooverline{\query(\str,\eta)}.
$$
\end{ex}
\noindent
We now consider some of the terms described above
and show them to
provably represent the intended functions.
Let $Tail(\str,\eta)$ indicate the string obtained by
chopping the first digit of $\str$, 
and
$Trunc(\str_1,\str_2,\eta) = \str_1 |_{\str_2}$.

\begin{lemma}
Terms $\term{Tail}, \term{Trunc}$ and $\term{Cond}$
provably represent the functions $Tail,$ $Trunc,$
and $\Cf$, respectively.
\end{lemma}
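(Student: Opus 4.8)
The plan is to establish provable representability for each of the three terms $\term{Tail}$, $\term{Trunc}$, and $\term{Cond}$ separately, since Definition~\ref{df:provRepr} requires, for each term, a biconditional linking the value of the semantic function (for every choice of strings and every $\eta \in \Os$) with provability of the corresponding equation in $\PORl$ relative to the oracle theory $\TSet_\eta$. In each case I would prove the two directions of the biconditional by exploiting the determinism of the defining equations: the ``if'' direction follows because $\PORl$ computes a unique normal form, while the ``only if'' direction follows because the defining equations literally compute the intended function. Crucially, none of these three functions invokes the oracle, so the set $\TSet_\eta$ of $\term{Flipcoin}$-equations plays no role in the derivations; this is what makes these cases genuinely routine in contrast to the query function of Example~\ref{ex:Flipcoin}.

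Concretely, for $\term{Cond}$ the defining equations $\term{Cond}(\epsilon,y,z,w)=y$, $\term{Cond}(x\term{0},y,z,w)=z$, and $\term{Cond}(x\term{1},y,z,w)=w$ directly match the three clauses of the semantic function $\Cf$, so a single case analysis on whether $\str_1$ is empty, ends in $\zero$, or ends in $\one$ settles both directions at once. For $\term{Tail}$ I would argue by cases on whether the input string is empty (using $\term{Tail}(\epsilon)=\epsilon$) or of the form $\str\bbool$ (using $\term{Tail}(x\term{b})=x$), matching the definition of $Tail(\str,\eta)$. For $\term{Trunc}$, which represents $Trunc(\str_1,\str_2,\eta)=\str_1|_{\str_2}$, I would proceed by induction, most naturally on the length of $\str_2$: the base cases $\term{Trunc}(x,\epsilon)=\epsilon$ and $\term{Trunc}(\epsilon,x)=\epsilon$ handle the empty arguments, and the step case uses $\term{Trunc}(x\term{b},y\bbool)=\term{Trunc}(x,y)\term{b}$ together with the inductive hypothesis to peel off one digit from each string simultaneously, exactly mirroring how truncation at a given length is computed.

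For each term the key lemma underlying the ``if'' direction is that $\PORl$ assigns a unique value to a closed term of type $s$ built from numeral constants, so that if $\termOne\,\ooverline{\str_1}\cdots\ooverline{\str_j}$ reduces to $\ooverline{\str}$ then it cannot also equal any distinct numeral $\ooverline{\str'}$; this uses the injectivity of numeral encoding ($\ooverline{\str}=\ooverline{\str'}$ in $\PORl$ iff $\str=\str'$), which follows from the basic axioms governing $\term{0}$, $\term{1}$, $\epsilon$, and concatenation. I expect the main—though still modest—obstacle to be the $\term{Trunc}$ case, because truncation is inherently a two-argument recursive operation and one must choose the induction parameter carefully and verify that the two base clauses ($\term{Trunc}(x,\epsilon)=\epsilon$ and $\term{Trunc}(\epsilon,x)=\epsilon$) are consistent with the semantic definition of $|_r$; the remaining cases reduce to a direct, essentially syntactic, reading-off of the defining equations. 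Throughout, I would apply inference rules (R1)--(R4) to chain equalities and to substitute numerals into the defining equations, noting that the derivations are uniform in $\eta$ precisely because $\term{Flipcoin}$ never appears.
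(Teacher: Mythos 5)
Your proposal matches the paper's own argument: $\term{Tail}$ and $\term{Cond}$ are handled directly by case analysis on the defining axioms, and $\term{Trunc}$ by induction on the string arguments (the paper phrases this as a double induction on $\str_1,\str_2$, which is what your induction on $|\str_2|$ with a nested case split on $\str_1$ amounts to). Your additional remark that the ``if'' direction of the biconditional rests on injectivity of the numeral encoding is a point the paper's sketch leaves implicit, but it does not change the route.
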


\begin{proof}[Proof Sketch]
For $\term{Tail}$ and $\term{Cond}$, the claim follows immediately from the defining axioms
of the corresponding constants.
For example, if $\str_1 = \str_2\zero$,
then
$Tail(\str_1,\eta)=\str_2$ (and
$\ooverline{\str_1}=\ooverline{\str_2\zero}
= \ooverline{\str_2}\term{0}$).
Using the defining axioms of $\term{Tail}$,
$$
\vdash_{\PORl} \term{Tail}(\ooverline{\str_1}) =
\term{Tail}(\ooverline{\str_2 \zero}) = \ooverline{\str_2}.
$$
For $Trunc$ by double induction on
$\str_1,\str_2\in \Ss$ we conclude:
$$
\vdash_{\PORl} \term{Trunc}(\ooverline{\str_1},\ooverline{\str_2}) =
\ooverline{\str_1}|_{\ooverline{\str_2}}.
$$
\end{proof}
\noindent
We generalize this result by Theorem~\ref{thm:PRepr} below.

\begin{theorem}\label{thm:PRepr}
\begin{enumerate}
\itemsep0em

\item Any function $f\in\POR$ is provably represented
by a term $\termOne \in \PORl$.

\item For any term $\termOne\in\PORl$,
there is a function $f\in\POR$
such that $f$ is provably represented by $\termOne$.
\end{enumerate}
\end{theorem}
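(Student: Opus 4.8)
The plan is to prove the two inclusions separately, in both cases exploiting the structural correspondence between the generators and closure schemes of $\POR$ and the constants and term-formers of $\PORl$.

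For item 1 I would argue by induction on the derivation of $f$ as a $\POR$-function. The base cases amount to exhibiting a representing term for each generator: the empty function $\emptyS$ is represented by $\lambda\vec{x}.\epsilon$, the projections $\projS^n_i$ by $\lambda x_1\dots x_n.x_i$, the successors $\successorS_b$ by $\lambda x.x\term{b}$, the conditional $\condS$ by $\term{Cond}$ and the query $\queryS$ by $\term{Flipcoin}$; the last two are exactly the content of the preceding Lemma and of Example~\ref{ex:Flipcoin}, while the first three follow at once from the defining axioms for $\epsilon$, $\circ$ and $\term{0},\term{1}$. For composition, if $g,h_1,\dots,h_k$ are represented by $\termThree,\termOne_1,\dots,\termOne_k$, then $f$ is represented by $\lambda\vec{x}.\termThree(\termOne_1\vec{x})\dots(\termOne_k\vec{x})$, the representing equations being obtained by chaining the inductive hypotheses through rules (R2)--(R4); here it matters that the single oracle $\eta$, hence the single set $\TSet_\eta$, is shared by all subterms, which is precisely why provable representation composes.

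The only delicate generator is bounded recursion on notation. Given representatives $\termThree,\termOne_0,\termOne_1$ of $g,h_0,h_1$ and the explicit $\POR$-bound $t(\vec{x},y)$, I would translate $t$ into a $\PORl$-term $\tilde{t}$ using the definable constants $\term{Conc}$ and $\term{Times}$ for $\conc$ and $\ttimes$, set $\term{K}\df\lambda\vec{x}\,y.\tilde{t}$, and represent $f$ by $\lambda\vec{x}\,y.\term{Rec}\big(\termThree\vec{x},\ \lambda y y'.\termOne_0\vec{x}yy',\ \lambda y y'.\termOne_1\vec{x}yy',\ \term{K}\vec{x},\ y\big)$. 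The defining axiom of $\term{Rec}$ then reproduces the two recursive clauses, and crucially its built-in $\term{Trunc}(\cdot,\term{K}\vec{x}\,y)$ realizes exactly the truncation $|_{t(\vec{x},y)}$ of the $\POR$-scheme; the representing equations follow by an inner induction on the recursion argument $y$.

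For item 2 I would first invoke $\beta$-normalization of $\PORl$, so that it suffices to treat terms of first-order type $s\arrowT\dots\arrowT s$: applying such a term to fresh variables $\vec{x}:s$ and taking the $\beta$-normal, $\eta$-long form yields a normal term of type $s$ all of whose free variables lie among $\vec{x}$ and all of whose function-typed subterms are abstractions. I would then prove, by induction on type-$s$ normal forms (with free variables of type $s$), that each denotes a function in $\POR$: a variable gives a projection, the constants $\epsilon,\term{0},\term{1},\circ,\term{Tail},\term{Trunc},\term{Cond}$ give $\POR$-(definable) functions, $\term{Flipcoin}$ gives $\queryS$, and the only recursion arises from a head $\term{Rec}(a,b,c,d,e)$, whose function-arguments $b,c:s\arrowT s\arrowT s$ are abstractions with type-$s$ normal bodies, hence $\POR$-functions by the induction hypothesis; the whole term is then obtained by bounded recursion on notation. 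The main obstacle I expect is precisely this last step: $\POR$'s bounded recursion demands that the length bound be an \emph{explicit} term built from $\eepsilon,\zero,\one,\conc,\ttimes$, whereas $\term{Rec}$ truncates at the length of the arbitrary term $d\,e$. To bridge the gap I would establish a poly-growth lemma — every type-$s$ term has output length bounded by $\oone^{p(\vec{x})}$ for an explicit length-term $p$ — use this $p$ as the genuine $\POR$-bound, and recover the exact truncation at $|d\,e|$ by post-composing with the $\POR$-definable $Trunc$ (available from the preceding Lemma). Closure of $\POR$ under composition then delivers the function, and a final check that the extracted function is provably represented by the original term, by induction on normal forms reusing the equational reasoning of item 1, completes the argument.
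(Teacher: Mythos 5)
Your proposal follows essentially the same route as the paper: item 1 by structural induction on $\POR$ with the same representing terms (in particular the $\term{Rec}$-term with the translated explicit bound for bounded recursion on notation), and item 2 by normalization of the simply typed $\lambda$-calculus followed by inspection of first-order normal forms. The only place you go beyond the paper's sketch is in item 2, where you explicitly confront the mismatch between $\term{Rec}$'s truncation at an arbitrary term $ky$ and $\POR$'s requirement that the recursion bound be an explicit term built from $\eepsilon,\zero,\one,\cconc,\ttimes$ --- a point the paper relegates to a footnote (``these do not exactly correspond\dots although the resulting function algebra is proved equivalent'') --- and your poly-growth lemma combined with folding the $\POR$-definable truncation into the step functions is the standard and correct way to close that gap.
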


\begin{proof}[Proof Sketch]
1.
The proof is by induction on the structure of
$f\in\POR$:

\emph{Base Case.}
Each base function is provably representable.
Let us consider two examples only:

\begin{itemize}
\itemsep0em

\item \emph{Empty Function.}
$f=\Ef$ is provably represented by
the term $\lambda x.\epsilon$.
For any string $\str\in \Ss$,
$\ooverline{\Ef(\str,\eta)}=\ooverline{\eepsilon}
= \epsilon$ holds.
Moreover,
$\vdash_{\PORl} (\lambda x.\epsilon)\ooverline{\str}
= \epsilon$
is an instance of the $(\beta)$-axiom.
So, we conclude:
$$
\vdash_{\PORl} (\lambda x.\epsilon) \ooverline{\str}
= \ooverline{\Ef(\str,\eta)}.
$$

\item \emph{Query Function.} $f=\query$ is provably represented by
the term $\term{Flipcoin}$, as observed in Example~\ref{ex:Flipcoin}
above.
\end{itemize}

\emph{Inductive case.}
Each function defined by composition
or bounded recursion from provably represented 
functions 
is provably represented as well.
We consider the case of bounded recursion.
Let $f$ be defined as:
\begin{align*}
f(\str_1,\dots, \str_n,\eepsilon,\eta) &=
g(\str_1,\dots, \str_n,\eta) \\
f(\str_1,\dots, \str_n, \str\zero,\eta)
&= h_0(\str_1,\dots, \str_n,\str, f(\str_1,\dots,
\str_n,\str,\eta),\eta) |_{k(\str_1,\dots, \str_n,\str)} \\
f(\str_1, \dots, \str_n, \str\one, \eta)
&=
h_1(\str_1,\dots, \str_n,\str, f(\str_1,\dots,
\str_n,\str,\eta),\eta) |_{k(\str_1,\dots, \str_n,\str)}.
\end{align*}
By IH, $g,h_0,h_1$, and $k$ are provably represented
by the corresponding terms $\term{t_{g}}, \term{t_{{h}_1}},
\term{t_{{h}_{2}}},\term{t_{k}}$, respectively.
So, for any $\str_1,\dots, \str_{n+2},\str\in\Ss$
and $\eta\in \Os$,
we derive:
\begin{align*}
\TSet_\eta &\vdash_{\PORl}
\term{t_g} \ooverline{\str_1} \dots \ooverline{\str_n}
=
\ooverline{g(\str_1,\dots, \str_n,\eta)} 
\tag{$\term{t_g}$} \\
\TSet_\eta &\vdash_{\PORl}
\term{t_{h_0}} \ooverline{\str_1} \dots
\ooverline{\str_{n+2}} =
\ooverline{h_0(\str_1,\dots, \str_{n+2},\eta)}
\tag{$\term{t_{h_0}}$} \\
\TSet_\eta &\vdash_{\PORl}
\term{t_{h_1}}\ooverline{\str_1} \dots
\ooverline{\str_{n+2}} = \ooverline{h_1(\str_1,\dots,
\str_{n+2},\eta)} 
\tag{$\term{t_{h_1}}$} \\
\TSet_{\eta} &\vdash_{\PORl}
\term{t_k}\ooverline{\str_1} \dots \ooverline{\str_n}
= \ooverline{k(\str_1,\dots, \str_n,\eta)}.
\tag{$\term{t_k}$}
\end{align*}
We can prove by induction on $\str$ that,
$$
\TSet_\eta \vdash_{\PORl} \term{t_f}
\ooverline{\str_1} \dots \ooverline{\str_n}
\ooverline{\str} =
\ooverline{f(\str_1,\dots, \str_n,\eta)},
$$
where 
$$
\term{t_f} : \lambda x_1\dots \lambda x_n.
\lambda x.\term{Rec}(\term{t_g} x_1 \dots
x_n, \term{t_{h_0}} x_1 \dots x_n,
\term{t_{h_1}} x_1\dots x_n,
\term{t_k}x_1\dots x_n,x).
$$
Then,
\begin{itemize}
\itemsep0em

\item if $\str=\eepsilon$, then
$f(\str_1,\dots, \str_n,\str,\eta)=
g(\str_1,\dots, \str_n,\eta)$.
Using the $(\beta)$-axiom, we deduce,
$$
\vdash_{\PORl} \term{t_f} \ooverline{\str_1}
...
\ooverline{\str_n} \ooverline{\str}
= \term{Rec}(\term{t_g} \ooverline{\str_1}
... \ooverline{\str_n},
\term{t_{h_0}}\ooverline{\str_1} ...
\ooverline{\str_n},
\term{t_{h_1}} \ooverline{\str_1} ...
\ooverline{\str_n},
\term{t_k} \ooverline{\str_1} ...
\ooverline{\str_n}, \ooverline{\str})
$$
and using the axiom 
$\term{Rec}(\term{t_g}x_1\dots x_n,
\term{t_{h_0}} x_1\dots x_n,
\term{t_{h_1}} x_1\dots x_n,
\term{t_k} x_1\dots$ 
$x_n,\epsilon)
= \term{t_g}x_1\dots x_n$,
we obtain,
$$
\vdash_{\PORl} \term{t_f}\ooverline{\str_1}
\dots \ooverline{\str_n} \ooverline{\str}
= \term{t_g}\ooverline{\str_1}\dots
\ooverline{\str_n},
$$
by (R2) and (R3).
We conclude using ($\term{t_g}$) together
with (R2).

\item if $\str=\str_m\zero$,
then $f(\str_1,\dots, \str_n,\str,\eta)=
h_0(\str_1,\dots, \str_n,\str_m, f(\str_1,
\dots,$ $\str_n,\str,\eta),$ $\eta)|_{k(\str_1,\dots, \str_n,\str_m)}$.
By IH,
we suppose,
$$
\TSet_\eta \vdash_{\PORl}
\term{t_f}\ooverline{\str_1}\dots
\ooverline{\str_n}\ooverline{\str_m}
= \ooverline{f(\str_1,\dots, \str_n,\str',\eta)}.
$$
Then, using the $(\beta)$-axiom 
$\term{t_f} \ooverline{\str_1}\dots \ooverline{\str_n}
\ooverline{\str} = \term{Rec}(\term{t_g}$
$\ooverline{\str_1} \dots \ooverline{\str_n},
\term{t_{h_0}}\ooverline{\str_1} \dots \ooverline{\str_n},$
$\term{t_{h_1}} \ooverline{\str_1} \dots
\ooverline{\str_n}, \term{t_k}\ooverline{\str_1}
\dots \ooverline{\str_n}, \ooverline{\str})$
the axiom $\term{Rec}(g,h_0,h_1,k,x\term{0})$
= $\term{Trunc}(h_0x (\term{Rec}$ $(g,h_0,h_1,k,\term{0})), kx)$ and IH we deduce,
$$
\vdash_{\PORl} \term{t_f}\ooverline{\str_1}
... \ooverline{\str_n} \ooverline{\str}
= \term{Trunc}(\term{t_{h_0}}\ooverline{\str_1}
... \ooverline{\str_n} \ooverline{\str_m}
\ooverline{f(\str_1, ..., \str_n,\str_m,\eta)},
\term{t_k} \ooverline{\str_1} ...
\ooverline{\str_n})
$$
by (R2) and (R3).
Using ($\term{t_{h_0}}$) 
and ($\term{t_k}$)
we conclude using (R3) and (R2):
$$
\vdash_{\PORl} \term{t_f}
\ooverline{\str_1} ... \ooverline{\str_n}
\ooverline{\str} =
\ooverline{h_0(\str_1,..., \str_n,\str_m,
f(\str_1,..., \str_n,\str_m,\eta))|_{k(\str_1,...,
\str_n,\str_m)}}.
$$

\item the case $\str=\str_m\one$ is proved in
a similar way.
\end{itemize}

2. It is a consequence of the normalization
property for the simply typed $\lambda$-calculus:
a $\beta$-normal term $\term{t} : s \arrowT \dots
\arrowT s$ cannot contain variables of higher
types.
By exhaustively inspecting possible normal forms one can check
that these all represent functions in $\POR$.
\end{proof}

\begin{cor}\label{cor:provablyRepresented}
For any function $f:\Ss^j\times \Os \to \Ss$,
$f\in\POR$ when $f$ is provably represented
by some term $\term{t} : s \arrowT \dots \arrowT s
\in \PORl$.
\end{cor}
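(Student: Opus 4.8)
The plan is to read the corollary off the two halves of Theorem~\ref{thm:PRepr}, the only genuinely new ingredient being the observation that provable representation by a \emph{fixed} term determines a \emph{unique} function. Taken literally the statement is the implication ``$f$ provably represented by some $\termOne\in\PORl$ $\Rightarrow$ $f\in\POR$''; its converse is exactly Theorem~\ref{thm:PRepr}.1, so the two together give the full characterization. I will therefore concentrate on the non-trivial direction, since Theorem~\ref{thm:PRepr}.2 on its own only produces \emph{some} $\POR$-function represented by a given term, and the point is to identify that function with the prescribed $f$.

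Concretely, I would argue as follows. Assume the term $\termOne : s \arrowT \dots \arrowT s$ provably represents $f$. By Theorem~\ref{thm:PRepr}.2 there is a function $g\in\POR$ that is provably represented by the \emph{same} term $\termOne$. The key step is then to show $f=g$, using that, for a fixed $\termOne$, the defining clause of Definition~\ref{df:provRepr} is a predicate on the data $(\str_1,\dots,\str_j,\eta,\str)$ which does not mention the represented function at all. Writing $\vec{\str}$ for $\str_1,\dots,\str_j$ and unfolding Definition~\ref{df:provRepr} for $f$ and for $g$ yields, for all $\str_1,\dots,\str_j,\str\in\Ss$ and $\eta\in\Os$,
\[
f(\vec{\str},\eta)=\str
\iff
\TSet_\eta \vdash_{\PORl} \termOne\,\ooverline{\str_1}\cdots\ooverline{\str_j}=\ooverline{\str}
\iff
g(\vec{\str},\eta)=\str.
\]
Since $f$ and $g$ are total functions $\Ss^j\times\Os\to\Ss$, for each input $(\vec{\str},\eta)$ the middle predicate is satisfied by exactly one $\str$, namely the common value $f(\vec{\str},\eta)=g(\vec{\str},\eta)$. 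Hence $f$ and $g$ agree on every argument, so $f=g$, and as $g\in\POR$ we conclude $f\in\POR$.

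I expect essentially no obstacle beyond Theorem~\ref{thm:PRepr} itself: all the weight of the result sits in that theorem (the inductive provable-representation construction for Part~1 and the normal-form analysis for Part~2), while the present step reduces to a bookkeeping observation. The one point I would state with care is the \emph{functionality} of provable representability — that the bridging predicate $\TSet_\eta \vdash_{\PORl} \termOne\,\ooverline{\str_1}\cdots\ooverline{\str_j}=\ooverline{\str}$ depends only on $\termOne$ and the displayed strings, together with the totality of $f$ and $g$, which is what forces a unique witness $\str$ per input and thereby the equality $f=g$.
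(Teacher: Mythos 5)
Your proposal is correct and follows the same route the paper intends: the corollary is stated as an immediate consequence of Theorem~\ref{thm:PRepr}.2, and your only addition is the routine identification $f=g$ via the fact that the provable-representation predicate depends only on the term and the displayed strings. The paper leaves that step implicit here (it spells out the analogous "but then $f=g$" step in the proof of Corollary~\ref{cor:Sigmab1toPOR}), so you have simply made explicit what the authors took for granted.
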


\subsubsection{The Theory $\IPOR$}

We introduce a first-order \emph{intuitionistic}
theory, called $\IPOR$, which extends
$\PORl$ with basic predicate calculus
and a restricted induction principle.
We also define $\IRS$ as a variant of $\RS$
having the intuitionistic – rather than classical –predicate
calculus as its logical basis.
All theorems of $\PORl$ and $\IRS$
are provable in $\IPOR$.
In fact, $\IPOR$ can be seen as an extension
of $\PORl$, and
 provides a language
to associate derivations in $\IRS$ with 
poly-time computable functions
(corresponding to terms of $\IPOR$). 

\paragraph{The Syntax of $\IPOR$.}
The equational theory $\PORl$ is rather weak.
In particular, even simple equations,
such as $x=\term{Tail}(x)\term{B}(x)$,
cannot be proved in it.
Indeed, induction is needed:
\begin{align*}
&\vdash_{\PORl} \epsilon = \epsilon\epsilon 
= \term{Tail}(\epsilon) \term{B}(\epsilon) \\
x= \term{Tail}(x) \term{B}(x) &\vdash_{\PORl}
x\term{0} = \term{Tail}(x\term{0})\term{B}(x\term{0}) \\
x=\term{Tail}(x) \term{B}(x) &\vdash_{\PORl}
x\term{1} = \term{Tail}(x\term{1})\term{B}(x\term{1}).
\end{align*}
From this we would like to deduce, by induction,
that $x=\term{Tail}(x)\term{B}(x)$.
Thus, we introduce $\IPOR$, the language of which extends
that of $\PORl$ with
(a translation for) all expressions of $\RS$.
In particular, the grammar for terms of $\IPOR$
is precisely the same as that of Definition~\ref{df:termsPORl},
while that for formulas is defined below.

\begin{defn}[Formulas of $\IPOR$]
\emph{Formulas of $\IPOR$} are defined
as follows:
(i.) all equations of $\PORl$ $\termOne=\termTwo$,
are formulas of $\IPOR$;
(ii.) for any (possibly open) term of $\PORl$,
say
$\termOne$ and $\termTwo$, $\termOne \subseteq \termTwo$
and $\Flip(\termOne)$
are formulas of $\IPOR$;
(iii.) formulas of $\IPOR$ are closed under
$\wedge,\vee,\rightarrow, \forall, \exists$.
\end{defn}

\begin{notation}
We adopt the standard conventions:
$\bot\df \term{0} = \term{1}$ 
and 
$\neg \fone \df \fone \rightarrow \bot$.
\end{notation}
\noindent
The notions of $\Sigma^b_0$- and $\Sigma^b_1$-formula
of $\IPOR$ are precisely those for $\RS$,
as introduced in Definition~\ref{df:Sigmab1}.

\begin{remark}
Any formula of $\RS$ can be seen as a formula
of $\IPOR$, where each occurrence
of the symbol $\zzero$ is replaced by $\term{0}$,
of $\oone$ by $\term{1}$,
of $\conc$ by $\circ$ (usually omitted),
of $\times$ by $\term{Times}$.
In the following, we assume that any formula
of $\RS$ is a formula of $\IPOR$, modulo the
substitutions defined above.
\end{remark}

\begin{defn}[The Theory $\IPOR$]
The axioms of $\IPOR$
include standard rules of the intuitioinstic first-order
predicate calculus,
usual rules for the equality symbol,
and axioms below:
\begin{enumerate}

\itemsep0em

\item All axioms of $\PORl$

\item $x\subseteq y \leftrightarrow \term{Sub}(x,y)=\term{1}$

\item $x=\epsilon \vee x=\term{Tail}(x)\term{0}
\vee x=\term{Tail}(x)\term{1}$

\item $\term{0}=\term{1} \rightarrow x=\epsilon$

\item $\term{Cond}(x,y,z,w)=w' \leftrightarrow
(x=\epsilon \wedge w'= y) \vee (x=\term{Tail}(x)\term{0}
\wedge w' = z) \vee
(x=\term{Tail}(x) \term{1} \wedge w'=w)$

\item $\Flip(x) \leftrightarrow \term{Flipcoin}(x)=\term{1}$

\item Any formula of the form,
$$
\big(\fone(\epsilon) \wedge (\forall x)(\fone(x)
\rightarrow \fone(x\term{0})) \wedge
(\forall x)(\fone(x) \rightarrow \fone(x\term{1}))\big)
\rightarrow (\forall y)\fone(y),
$$
where $\fone$ is of the form $(\exists z\preceq \termOne)\termTwo=\termThree$,
with $\termOne$ containing only first-order
open variables.
\end{enumerate}
\end{defn}

\begin{notation}[$\NP$-Predicate]
We refer to a formula
of the form $(\exists z \preceq \termOne)\termTwo=\termThree$,
with $\termOne$ containing only first-order open variables,
as an \emph{$\NP$-predicate}.
\end{notation}
\noindent

\paragraph{Relating $\IPOR$ with $\PORl$ and $\IRS$.}
Now that $\IPOR$ has been introduced we show that
theorems  of both $\PORl$ and 
the intuitionistic version of $\RS$ are derived
in it.
First, Proposition~\ref{prop:PORltoIPOR}
is easily easily established by inspecting
all rules of $\PORl$.

\begin{prop}\label{prop:PORltoIPOR}
Any theorem of $\PORl$ is a theorem of $\IPOR$.
\end{prop}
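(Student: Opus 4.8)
The plan is to prove Proposition~\ref{prop:PORltoIPOR} by a routine verification that every axiom and inference rule of $\PORl$ is available inside $\IPOR$. Since $\IPOR$ was \emph{defined} so that its first axiom schema (item~1) literally includes ``all axioms of $\PORl$'', the axiomatic half is immediate: every defining equation for the constants, every $(\beta)$- and $(\nu)$-axiom is already an axiom of $\IPOR$. The only genuine work lies in checking that the four inference rules (R1)--(R4) of $\PORl$, which manipulate bare equations, are derivable from the ``usual rules for the equality symbol'' that $\IPOR$ is equipped with as part of its logical base.

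First I would fix the translation: by the remark preceding the definition of $\IPOR$, each formula of $\PORl$, being an equation $\termOne = \termTwo$ between terms of type $s$, is already a formula of $\IPOR$ (clause (i.) of the definition of formulas of $\IPOR$), so no syntactic reinterpretation is needed and the statement ``theorem of $\PORl$ is a theorem of $\IPOR$'' is literally a containment of derivations. The proof then proceeds by induction on the length of a $\PORl$-derivation. For the base case, any axiom used is an axiom of $\IPOR$ by item~1. For the inductive step, I would treat each rule in turn: (R1) symmetry and (R2) transitivity of $=$ are exactly the standard equality rules of the first-order calculus; (R3), which infers $\termThree\{\termOne/x\} = \termThree\{\termTwo/x\}$ from $\termOne = \termTwo$, is the congruence (replacement) rule for equality, instantiated at the context $\termThree$; and (R4), which infers $\termOne\{\termThree/x\} = \termTwo\{\termThree/x\}$ from $\termOne = \termTwo$, is an instance of substitution of a term for a universally quantifiable variable, again part of the standard equational logic of $\IPOR$. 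In each case the $\PORl$-conclusion is obtained from the (inductively available) $\IPOR$-derivations of the premises by one application of a rule already present in $\IPOR$.

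I do not expect a serious obstacle here; the proposition is essentially a bookkeeping lemma, which is why the excerpt flags it as ``easily established by inspecting all rules of $\PORl$.'' The only point requiring a moment's care is to confirm that the equality rules packaged into $\IPOR$'s predicate calculus are strong enough to subsume both congruence-in-a-context (R3) and substitution-into-an-equation (R4); if the ambient calculus only provides Leibniz-style replacement, I would note that (R4) follows by combining it with the universal-instantiation rule, since $\termOne = \termTwo$ with free variable $x$ yields $(\forall x)(\termOne = \termTwo)$ and hence $\termOne\{\termThree/x\} = \termTwo\{\termThree/x\}$ by instantiation. With these observations the induction closes and every $\PORl$-theorem is recovered as an $\IPOR$-theorem.
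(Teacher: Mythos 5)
Your proposal is correct and matches the paper's approach: the paper gives no detailed argument, merely noting that the proposition "is easily established by inspecting all rules of $\PORl$," which is precisely the axiom-by-axiom and rule-by-rule verification you carry out (axioms absorbed via item~1 of the $\IPOR$ axioms, rules (R1)--(R4) subsumed by the equality and substitution rules of the ambient predicate calculus). Your extra care about whether (R3) and (R4) are both covered by the "usual rules for the equality symbol" is a reasonable elaboration of the same bookkeeping the paper leaves implicit.
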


Then,
we consider  $\IRS$
and establish that every theorem  in it
is derivable in $\IPOR$. 
To do so, we  prove a
few  properties concerning $\IPOR$.
In particular, its recursion schema 
differs from that of $\IRS$
as dealing with formulas of the form $(\exists y\preceq \termOne)\termTwo = \termThree$
and not with all the $\Sigma^b_1$-ones.
The two schemas are related by
Proposition~\ref{prop:Sigmab0}, proved by induction
on the structure of formulas.\footnote{For further details,
see Appendix~\cite{appendix:taskB}.}

\begin{prop}\label{prop:Sigmab0}
For any $\Sigma^b_0$-formula $\fone(x_1,\dots,x_n)$
in $\rl$, 
there exists a term $\term{t_\fone}(x_1,\dots, x_n)$
of $\PORl$ such that:
\begin{enumerate}
\itemsep0em
\item $\vdash_{\IPOR} \fone \leftrightarrow
\term{t_\fone} =\term{0}$
\item $\vdash_{\IPOR} \term{t_\fone} = \term{0}
\vee \term{t_\fone} = \term{1}$.
\end{enumerate}
\end{prop}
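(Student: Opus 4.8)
The plan is to proceed by induction on the structure of the $\Sigma^b_0$-formula $\fone$, building along the way an auxiliary \emph{characteristic term} $\term{c}_{\fone}$ obeying the ``$\term{1}$-as-true'' convention, i.e.\ $\vdash_{\IPOR}\fone \leftrightarrow \term{c}_{\fone}=\term{1}$ together with $\vdash_{\IPOR}\term{c}_{\fone}=\term{0}\vee\term{c}_{\fone}=\term{1}$, and then setting $\term{t_\fone}\df\term{BNeg}(\term{c}_{\fone})$. Since $\term{BNeg}$ flips $\term{0}$ and $\term{1}$ on Booleans, the polarity required by item~1 ($\fone \leftrightarrow \term{t_\fone}=\term{0}$) and the Booleanness required by item~2 follow at once from the corresponding properties of $\term{c}_{\fone}$. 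Working with $\term{c}_{\fone}$ lets the propositional constructors match the combinators $\term{BNeg},\term{BAnd},\term{BOr}$ on the nose, confining the polarity flip to a single final step. Throughout, the defining equations of the constants are available in $\IPOR$ by Proposition~\ref{prop:PORltoIPOR}, and I will freely use the basic facts that $\term{Eq}$, $\term{Sub}$ and $\term{Bool}$ represent, respectively, equality, the initial-substring relation and Booleanness; each of these is itself an instance of the $\NP$-induction schema of $\IPOR$ and is established first as a preliminary.

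For the base cases I set $\term{c}_{t=s}\df\term{Eq}(t,s)$, $\term{c}_{t\subseteq s}\df\term{Sub}(t,s)$ and $\term{c}_{\Flip(t)}\df\term{Flipcoin}(t)$. Item~1 then follows from the representation facts for $\term{Eq}$ and $\term{Sub}$ and from the axiom $\Flip(x)\leftrightarrow\term{Flipcoin}(x)=\term{1}$, while item~2 follows from the fact that these three terms are provably Boolean (for $\term{Flipcoin}$ this is exactly the axiom $\term{Bool}(\term{Flipcoin}(x))=\term{1}$). For the Boolean connectives I put $\term{c}_{\neg\ftwo}\df\term{BNeg}(\term{c}_{\ftwo})$, $\term{c}_{\ftwo\wedge\fthree}\df\term{BAnd}(\term{c}_{\ftwo},\term{c}_{\fthree})$ and $\term{c}_{\ftwo\vee\fthree}\df\term{BOr}(\term{c}_{\ftwo},\term{c}_{\fthree})$; both properties are discharged by a finite case analysis on the Boolean values of $\term{c}_{\ftwo},\term{c}_{\fthree}$ (available by the induction hypothesis, item~2), unfolding the defining equations of $\term{Cond}$ inside $\term{BNeg},\term{BAnd},\term{BOr}$.

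The quantifier cases are the heart of the argument. For an initial-subword existential $(\exists y\subseteq t)\ftwo$ I define a scanning term $\term{S}$ by bounded recursion on notation, $\term{S}\,\epsilon\df\term{c}_{\ftwo}\,\epsilon$ and $\term{S}\,(y\term{b})\df\term{BOr}(\term{S}\,y,\term{c}_{\ftwo}\,(y\term{b}))$, with length bound the constant $\term{1}$ so that truncation never corrupts the Boolean accumulator, and set $\term{c}_{(\exists y\subseteq t)\ftwo}\df\term{S}\,t$; the universal case is handled dually with $\term{BAnd}$ (or, equivalently, via $\forall=\neg\exists\neg$). Correctness, i.e.\ $\vdash_{\IPOR}\term{S}\,t=\term{1}\leftrightarrow(\exists y\subseteq t)\ftwo(y)$, is proved \emph{inside} $\IPOR$ by induction on $t$: the step uses the recursion equation for $\term{S}$, the behaviour of $\term{BOr}$, the induction hypothesis on $\ftwo$, and the provable equivalence $x\subseteq y\term{b}\leftrightarrow x\subseteq y\vee x=y\term{b}$ (itself read off from the defining equation $\term{Sub}(x,y\term{b})=\term{BOr}(\term{Sub}(x,y),\term{Eq}(x,y\term{b}))$ together with the representation facts). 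The contiguous-subword quantifier $(\exists y\subseteq^* t)\ftwo$ is then reduced to this case by a second, outer recursion running through the suffixes of $t$ via $\term{Tail}$ and applying the prefix-scan to each.

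The step I expect to be delicate is making this in-$\IPOR$ correctness induction legitimate: the induction schema of $\IPOR$ is restricted to $\NP$-predicates $(\exists z\preceq\termOne)\termTwo=\termThree$, whereas the natural statement to induct on is a biconditional. The remedy is to note that, after applying the induction hypothesis to replace $\ftwo(y)$ by $\term{c}_{\ftwo}\,y=\term{1}$, the existential side $(\exists y\subseteq t)(\term{c}_{\ftwo}\,y=\term{1})$ can be rewritten as the single-equation $\NP$-predicate $(\exists y\preceq t)\big(\term{BAnd}(\term{Sub}(y,t),\term{c}_{\ftwo}\,y)=\term{1}\big)$, which is exactly the shape to which the schema applies; one then proves the two implications separately, each by an $\NP$-induction. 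This manoeuvre — collapsing a subword-bounded search together with its matrix into one $\NP$-predicate — is precisely the bridge that later lets the restricted induction of $\IPOR$ simulate the $\Sigma^b_1$-induction of $\IRS$, and it is the only place where the peculiar form of the $\IPOR$ schema must be used with care; the bookkeeping for the $\subseteq^*$ double recursion is routine by comparison.
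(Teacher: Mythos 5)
Your overall architecture --- characteristic terms for the atoms ($\term{Eq}$, $\term{Sub}$, $\term{Flipcoin}$), the Boolean combinators for the connectives, and a bounded-recursion scan for the subword quantifiers, with correctness established by induction inside $\IPOR$ --- is exactly the induction on the structure of formulas that the paper announces for this proposition (whose detailed proof is deferred to an appendix not included here), and it is the standard Cook--Urquhart/Ferreira construction; the polarity flip via $\term{BNeg}$ is harmless, and you correctly identify the restricted induction schema as the only delicate point.

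The one step that does not work as literally stated is the backward implication in the quantifier case. After your rewriting, the forward direction $\term{S}\,t=\term{1}\rightarrow(\exists y\preceq t)(\dots=\term{1})$ does become an $\NP$-predicate (the Boolean antecedent can be absorbed into the matrix of the bounded existential with $\term{BOr}$), but the converse $(\exists y\preceq t)(\dots=\term{1})\rightarrow \term{S}\,t=\term{1}$ has the bounded existential in \emph{negative} position, so it is $\forall$-bounded rather than of the form $(\exists z\preceq\termOne)\,\termTwo=\termThree$, and the $\NP$-induction schema cannot be applied to it directly. The standard repair is to induct on $t$, with the putative witness $y$ carried as a free parameter, on the single absorption equation $\term{BOr}(\term{BNeg}(\term{Sub}(y,t)),\term{Eq}(\term{BOr}(\term{c}_{\ftwo}\,y,\term{S}\,t),\term{S}\,t))=\term{1}$, which is an equation and hence a degenerate $\NP$-predicate; the backward implication then follows from this monotonicity fact without any further induction, using $\term{BOr}(\term{1},u)=\term{1}$ for Boolean $u$. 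With that adjustment (and the analogous one for the $\subseteq^{*}$ double recursion) the argument goes through.
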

\noindent
This leads us to the following corollary
and allows us to prove Theorem~\ref{thm:IRStoIPOR}
relating $\IPOR$ and $\IRS$.

\begin{cor}
\begin{itemize}
\itemsep0em
\item[i.] For any $\Sigma^b_0$-formula
$\fone$, $\vdash_{\IPOR} \fone \vee \neg\fone$.

\item[ii.] For any closed $\Sigma^b_0$-formula
 of $\rl$ $\fone$ and $\eta \in \Os$,
 either $\TSet_\eta \vdash_{\IPOR} \fone$
 or $\TSet_\eta \vdash_{\IPOR} \neg\fone$.
\end{itemize}
\end{cor}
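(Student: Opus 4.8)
The plan is to derive the corollary directly from Proposition~\ref{prop:Sigmab0}, treating its two parts essentially independently.

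For part~(i.), given a $\Sigma^b_0$-formula $\fone$, I would invoke Proposition~\ref{prop:Sigmab0} to obtain a term $\term{t_\fone}$ with $\vdash_{\IPOR} \fone \leftrightarrow \term{t_\fone}=\term{0}$ and $\vdash_{\IPOR} \term{t_\fone}=\term{0} \vee \term{t_\fone}=\term{1}$. The key observation is that $\term{t_\fone}=\term{1}$ implies $\neg\fone$: indeed, from $\term{t_\fone}=\term{1}$ together with $\fone \leftrightarrow \term{t_\fone}=\term{0}$, assuming $\fone$ would give $\term{t_\fone}=\term{0}$, hence $\term{0}=\term{1}$ by transitivity, which is $\bot$; so $\neg\fone$ follows by $\rightarrow$-introduction. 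Then I would reason by cases on the disjunction $\term{t_\fone}=\term{0} \vee \term{t_\fone}=\term{1}$, which is available intuitionistically: in the first case $\fone$ holds (by the left-to-right direction of the biconditional), giving the left disjunct of $\fone \vee \neg\fone$; in the second case $\neg\fone$ holds, giving the right disjunct. Since $\IPOR$ contains the intuitionistic predicate calculus, $\vee$-elimination yields $\vdash_{\IPOR}\fone \vee \neg\fone$. This is the crux: although excluded middle is \emph{not} assumed in $\IPOR$, decidability of $\Sigma^b_0$-formulas is recovered because their truth value is computed by a provably Boolean term.

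For part~(ii.), I would first note that a \emph{closed} $\Sigma^b_0$-formula $\fone$ has an associated closed term $\term{t_\fone}$ (no free first-order variables). Working under the hypotheses $\TSet_\eta$, every occurrence of $\term{Flipcoin}(\ooverline{\str})$ in $\term{t_\fone}$ can be rewritten to the concrete bit $\ooverline{\eta(\str)}$ using the equations in $\TSet_\eta$ and rule (R3); combined with the defining equations of $\PORl$ (and rules (R1)--(R4)), the closed term $\term{t_\fone}$ reduces to a concrete value, which by part~2 of Proposition~\ref{prop:Sigmab0} must be provably equal to either $\term{0}$ or $\term{1}$. Thus $\TSet_\eta \vdash_{\PORl} \term{t_\fone}=\term{0}$ or $\TSet_\eta \vdash_{\PORl} \term{t_\fone}=\term{1}$, and by Proposition~\ref{prop:PORltoIPOR} the same holds with $\vdash_{\IPOR}$. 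In the first case the biconditional from Proposition~\ref{prop:Sigmab0} gives $\TSet_\eta\vdash_{\IPOR}\fone$; in the second case the argument of part~(i.) gives $\TSet_\eta\vdash_{\IPOR}\neg\fone$.

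I expect the main obstacle to lie in part~(ii.), specifically in justifying that the closed Boolean term $\term{t_\fone}$ genuinely evaluates to a \emph{determinate} literal under $\TSet_\eta$, rather than merely being provably Boolean. One must be careful that the disjunction from Proposition~\ref{prop:Sigmab0}, part~2, is a \emph{provable} disjunction in $\IPOR$ and not yet a choice of disjunct; the passage to a definite literal requires exploiting that $\TSet_\eta$ pins down every $\term{Flipcoin}$ value, so the term is variable-free and normalizes. This normalization/evaluation step — ensuring that closedness plus $\TSet_\eta$ forces one side of the disjunction to be \emph{derivable} — is where the real content sits, whereas part~(i.) is a routine intuitionistic case analysis once Proposition~\ref{prop:Sigmab0} is in hand.
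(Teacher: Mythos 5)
Your proposal is correct and follows exactly the route the paper intends: the paper offers no explicit proof, presenting the corollary as an immediate consequence of Proposition~\ref{prop:Sigmab0}, and your elaboration (intuitionistic case analysis on the provable Boolean disjunction for part~i.; evaluation of the closed term under $\TSet_\eta$ for part~ii.) is the natural way to discharge it. The one step you rightly flag as carrying the real content --- that the closed term $\term{t_\fone}$ provably evaluates under $\TSet_\eta$ to a determinate literal --- is exactly what Theorem~\ref{thm:PRepr}.2 supplies, since $\term{t_\fone}$ provably represents some $f\in\POR$ and hence $\TSet_\eta\vdash_{\PORl}\term{t_\fone}=\ooverline{f(\eta)}$.
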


\begin{theorem}\label{thm:IRStoIPOR}
Any theorem of $\IRS$ is a theorem of $\IPOR$.
\end{theorem}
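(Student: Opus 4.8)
The plan is to exploit that $\IRS$ and $\IPOR$ share the very same logical apparatus --- intuitionistic first-order predicate calculus with equality --- and that, by the Remark preceding the definition of $\IPOR$, every formula of $\rl$ is (under the canonical translation replacing $\zzero,\oone,\conc,\times$ by $\term{0},\term{1},\circ,\term{Times}$) already a formula of $\IPOR$. Consequently any derivation in $\IRS$ transports to $\IPOR$ inference by inference, so the only thing that must be checked is that each \emph{nonlogical axiom} of $\IRS$ becomes a theorem of $\IPOR$. The proof thus reduces to two tasks: deriving the nine basic axioms, and deriving the $\Sigma^b_1$-induction schema.

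For the basic axioms I would proceed one at a time. Axioms 1 and 2 ($x\epsilon = x$ and $x(y\bbool)=(xy)\bbool$) are literally among the defining equations of the constant $\circ$ in $\PORl$, hence theorems of $\IPOR$ by Proposition~\ref{prop:PORltoIPOR}. The two axioms governing $\times$ follow from the defining equations of $\term{Times}$ together with a short induction, and those characterizing $\subseteq$ follow from $\IPOR$-axiom~2, namely $x\subseteq y \leftrightarrow \term{Sub}(x,y)=\term{1}$, unfolded against the defining equations of $\term{Sub}$ by a routine induction on $y$. Finally, the cancellation and discreteness axioms $x\bbool=y\bbool\rightarrow x=y$, $x\zzero\neq y\oone$, and $x\bbool\neq\epsilon$ follow from the case-analysis axiom~3 of $\IPOR$ and the constructive decidability of equality supplied by the Corollary to Proposition~\ref{prop:Sigmab0}.

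The crux is the $\Sigma^b_1$-induction schema of $\IRS$, since $\IPOR$ offers only the weaker $\NP$-induction, restricted to predicates of the shape $(\exists z\preceq\termOne)\,\termTwo=\termThree$ with $\termOne$ containing only first-order variables. My plan is to show that, provably in $\IPOR$, every $\Sigma^b_1$-formula $\fone$ is equivalent to such an $\NP$-predicate $\fone'$, whence the induction instance for $\fone$ follows from $\IPOR$-axiom~7 applied to $\fone'$ together with the equivalence $\fone\leftrightarrow\fone'$. To build $\fone'$, I would first apply Proposition~\ref{prop:Sigmab0} to the $\Sigma^b_0$-matrix of $\fone$, rewriting it as a single equation $\term{t_\fone}=\term{0}$, and then collapse the block of bounded existential quantifiers $(\exists x_1\preceq t_1)\cdots(\exists x_n\preceq t_n)$ into one by coding the tuple $(x_1,\dots,x_n)$ as a single string with length markers --- an encoding definable by a $\PORl$-term whose bound is itself built from the $t_i$. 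Since the $t_i$ are $\rl$-terms in first-order variables, the composite bound contains only first-order variables, so $\fone'$ is a genuine $\NP$-predicate.

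The main obstacle I expect is precisely this collapse: one must verify, inside the intuitionistic theory and without appeal to classical reasoning, both that the pairing and unpairing of strings is $\PORl$-definable and provably correct, and that the composite bounding term stays within the first-order fragment demanded by the $\NP$-predicate format. The decidability of $\Sigma^b_0$-formulas furnished by the Corollary to Proposition~\ref{prop:Sigmab0} is what makes the equivalence $\fone\leftrightarrow\fone'$ constructively available; everything else is bookkeeping on the encoding, mirroring Ferreira's treatment of bounded-quantifier contraction adapted to the probabilistic language $\rl$.
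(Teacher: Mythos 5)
Your proposal follows essentially the same route as the paper's (terse) proof: reduce the problem to verifying the basic axioms and the $\Sigma^b_1$-induction schema, handle the latter by using Proposition~\ref{prop:Sigmab0} to rewrite the $\Sigma^b_0$-matrix as a single equation so that $\NP$-induction applies, and check the basic axioms against the defining equations of $\PORl$ and the axioms of $\IPOR$. Your explicit treatment of the contraction of the block of bounded existential quantifiers into a single $\NP$-predicate is a detail the paper leaves implicit, and it is the right thing to worry about.
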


\begin{proof}
First, observe that, as a consequence of Proposition~\ref{prop:Sigmab0},
for any $\Sigma^b_1$-formula 
$\fone=(\exists x_1\preceq t_1)\dots (\exists x_n
\preceq t_n)\ftwo$ in $\rl$,
$$
\vdash_{\IPOR} \fone \leftrightarrow
(\exists x_1\preceq \term{t}_1) \dots
(\exists x_n \preceq \term{t}_n) \term{t_\ftwo}
= \term{0},
$$
\noindent
any instance of the $\Sigma^b_1$-recursion
schema of $\IRS$ is derivable in $\IPOR$
from the $\NP$-induction schema.
Then, 
it suffices to check that all basic axioms
of $\IRS$ are provable in $\IPOR$.
\end{proof}
\noindent
This result also leads to the following
straightforward consequences.

\begin{cor}\label{cor:Sigmab0}
For any closed $\Sigma^b_0$-formula $\fone$
and $\eta\in\Os$,
either $\TSet_\eta \vdash_{\IPOR} \fone$
or $\TSet_\eta \vdash_{\IPOR} \neg \fone$.
\end{cor}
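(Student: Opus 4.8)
The statement is, modulo the richer formula language of $\IPOR$, a restatement of item ii.\ of the preceding (unnumbered) corollary, so the plan is to reduce it to the machinery already assembled rather than to appeal to Theorem~\ref{thm:IRStoIPOR} semantically. The central tool is Proposition~\ref{prop:Sigmab0}, which assigns to $\fone$ a characteristic term. Since $\fone$ is closed, this term has \emph{no} free first-order variables, and the argument will exploit that a closed $\PORl$-term of type $s$ denotes a genuine $\POR$-function of the oracle alone, whose value is pinned down, for the fixed $\eta$, by the equations $\TSet_\eta$.

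First I would apply Proposition~\ref{prop:Sigmab0} to the closed $\Sigma^b_0$-formula $\fone$, obtaining a closed term $\term{t_\fone}:s$ of $\PORl$ with $\vdash_{\IPOR}\fone\leftrightarrow\term{t_\fone}=\term{0}$ and $\vdash_{\IPOR}\term{t_\fone}=\term{0}\vee\term{t_\fone}=\term{1}$. By Theorem~\ref{thm:PRepr}(2) (equivalently Corollary~\ref{cor:provablyRepresented}), $\term{t_\fone}$ provably represents some $f\in\POR$ with $f:\Os\to\Ss$. Evaluating $f$ on the fixed oracle yields a definite string $f(\eta)$, and provable representability (Definition~\ref{df:provRepr}) gives $\TSet_\eta\vdash_{\PORl}\term{t_\fone}=\ooverline{f(\eta)}$; by Proposition~\ref{prop:PORltoIPOR} this upgrades to $\TSet_\eta\vdash_{\IPOR}\term{t_\fone}=\ooverline{f(\eta)}$.

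It then remains to split on the value of $f(\eta)$, which the second clause of Proposition~\ref{prop:Sigmab0} forces to be Boolean: combining $\TSet_\eta\vdash_{\IPOR}\term{t_\fone}=\ooverline{f(\eta)}$ with $\vdash_{\IPOR}\term{t_\fone}=\term{0}\vee\term{t_\fone}=\term{1}$ and the consistency of $\IPOR$, one gets $f(\eta)\in\{\zero,\one\}$. If $f(\eta)=\zero$, then $\TSet_\eta\vdash_{\IPOR}\term{t_\fone}=\term{0}$, whence $\TSet_\eta\vdash_{\IPOR}\fone$ by the equivalence. If $f(\eta)=\one$, then $\TSet_\eta\vdash_{\IPOR}\term{t_\fone}=\term{1}$; assuming $\term{t_\fone}=\term{0}$ and using transitivity of equality yields $\term{0}=\term{1}$, i.e.\ $\bot$, so $\TSet_\eta\vdash_{\IPOR}\neg(\term{t_\fone}=\term{0})$ and hence $\TSet_\eta\vdash_{\IPOR}\neg\fone$.

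I expect the only delicate point to be the passage from the provable disjunction $\term{t_\fone}=\term{0}\vee\term{t_\fone}=\term{1}$ to a definite meta-level value $f(\eta)\in\{\zero,\one\}$: intuitionistically a provable disjunction need not single out a disjunct, so the case analysis must instead read the value off the genuine (single-valued) $\POR$-function $f$ supplied by Theorem~\ref{thm:PRepr}(2), using only that $\IPOR$ does not prove $\ooverline{\zero}=\ooverline{\one}$. Everything else is routine equational reasoning inside $\IPOR$; in particular the ambient law of excluded middle for $\Sigma^b_0$-formulas is exactly item i.\ of the previous corollary, and it is the added information carried by $\TSet_\eta$ that lets one decide \emph{which} disjunct holds.
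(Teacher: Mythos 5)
Your proof is correct and follows the route the paper intends: the corollary is left unproved as a ``straightforward consequence'' (it restates item ii.\ of the unnumbered corollary following Proposition~\ref{prop:Sigmab0}), and your argument via the characteristic term $\term{t_\fone}$, Theorem~\ref{thm:PRepr}(2) and provable representability under $\TSet_\eta$ is exactly the intended filling-in of that claim. The only inessential detour is the appeal to consistency of $\IPOR+\TSet_\eta$ to force $f(\eta)\in\{\zero,\one\}$: you can avoid it by splitting directly on whether $f(\eta)=\zero$, since for $f(\eta)\neq\zero$ the numeral $\ooverline{f(\eta)}$ is provably distinct from $\term{0}$ (the basic $\RS$ axioms, available in $\IPOR$ by Theorem~\ref{thm:IRStoIPOR}, prove distinctness of distinct numerals), which gives $\TSet_\eta\vdash_{\IPOR}\neg(\term{t_\fone}=\term{0})$ and hence $\neg\fone$ without ever needing to know that $f(\eta)$ is Boolean.
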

\noindent
Due to Corollary~\ref{cor:Sigmab0},
we can even establish the following Lemma~\ref{lemma:Sigmab0IRS}.

\begin{lemma}\label{lemma:Sigmab0IRS}
Let $\fone$ be a closed $\Sigma^b_0$-formula
of $\rl$ and $\eta\in\Os$, then:
$$
\TSet_\eta \vdash_{\IPOR} \fone \ \ \
\text{iff} \ \ \ 
\eta \in \model{\fone}.
$$
\end{lemma}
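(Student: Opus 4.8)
The plan is to prove Lemma~\ref{lemma:Sigmab0IRS} by a straightforward induction on the structure of the closed $\Sigma^b_0$-formula $\fone$, using Corollary~\ref{cor:Sigmab0} to guarantee that the two notions of truth stay in lockstep at every step. Since $\fone$ is closed and $\Sigma^b_0$, it is built from atomic formulas ($\termOne = \termTwo$, $\termOne \subseteq \termTwo$, and $\Flip(\termOne)$) by Boolean connectives and \emph{subword} quantifications only; the absence of unbounded or merely length-bounded quantifiers is exactly what keeps each quantifier ranging over a finite, effectively enumerable set of witnesses, so that both the provability side and the semantic side decompose into finite conjunctions and disjunctions.

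First I would handle the atomic base cases. For an equation $\ooverline{\str_1} = \ooverline{\str_2}$ or a subword atom $\ooverline{\str_1} \subseteq \ooverline{\str_2}$ between closed terms, the defining axioms of $\IPOR$ decide the equation outright, and the semantic clause of Definition~\ref{df:RLformulas} returns $\Os$ or $\emptyset$ according to the same condition on the actual strings; the two agree. For $\Flip(\ooverline{\str})$ the argument uses the set $\TSet_\eta$ in an essential way: by axiom~6 of $\IPOR$ we have $\Flip(\ooverline{\str}) \leftrightarrow \term{Flipcoin}(\ooverline{\str}) = \term{1}$, and $\TSet_\eta$ contains precisely the equation $\term{Flipcoin}(\ooverline{\str}) = \ooverline{\eta(\str)}$, so $\TSet_\eta \vdash_{\IPOR} \Flip(\ooverline{\str})$ holds iff $\eta(\str) = \one$, which is exactly the condition $\eta \in \model{\Flip(\ooverline{\str})}$. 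The Boolean cases are then routine: for negation I invoke Corollary~\ref{cor:Sigmab0} (that $\TSet_\eta$ decides every closed $\Sigma^b_0$-formula) so that $\TSet_\eta \vdash_{\IPOR} \neg\fone$ is genuinely equivalent to $\TSet_\eta \nvdash_{\IPOR} \fone$, which by the semantic complementation $\model{\neg\fone} = \Os - \model{\fone}$ matches $\eta \notin \model{\fone}$; conjunction and disjunction distribute over the induction hypothesis using the intersection/union clauses.

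The one genuinely load-bearing step is the subword quantifier case, which I expect to be the main obstacle in the sense that it is where closedness of $\Sigma^b_0$ is exploited. For $(\exists x \subseteq^* \ooverline{\str})\ftwo$, I would argue that the witness can be taken among the finitely many subwords of $\str$, each of which is a closed term $\ooverline{\str'}$; provably, $\IPOR$ derives the existential iff it derives one of the finitely many instances $\ftwo(\ooverline{\str'})$, and semantically $\model{(\exists x \subseteq^* \ooverline{\str})\ftwo}$ is the finite union of the $\model{\ftwo(\ooverline{\str'})}$ over the same subwords. Each instance $\ftwo(\ooverline{\str'})$ is again a closed $\Sigma^b_0$-formula, so the induction hypothesis applies and the equivalence transfers through the finite union; the universal subword case is dual, using a finite intersection and Corollary~\ref{cor:Sigmab0} to pass from ``$\TSet_\eta$ fails to prove some instance'' to ``$\TSet_\eta$ proves the negation of that instance.'' The only subtlety to check carefully is that the finite quantifier range is genuinely the \emph{same} on both sides — that the subwords witnessing provability are exactly the subwords indexing the semantic union — which is immediate from the definition of subword quantification together with the soundness of $\IPOR$'s arithmetic over the canonical model $\mathscr{W}$.
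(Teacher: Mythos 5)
Your overall strategy --- a structural induction on $\fone$ proving the biconditional at every node --- is not the paper's, and as written it has a genuine gap that the paper's proof is specifically organized to avoid. The paper splits the two directions asymmetrically: the $(\Rightarrow)$ direction is a \emph{soundness} theorem, proved by induction on the rules/derivations of $\IPOR$ (checking that the axioms of $\IPOR$ and the equations of $\TSet_\eta$ hold in the model determined by $\eta$, and that every inference rule preserves truth); the $(\Leftarrow)$ direction is then two lines, from Corollary~\ref{cor:Sigmab0} together with that soundness result. Your induction on formulas cannot substitute for the soundness induction, because several of your forward-direction cases secretly depend on it.

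Concretely: you claim that by Corollary~\ref{cor:Sigmab0}, ``$\TSet_\eta \vdash_{\IPOR} \neg\fone$ is genuinely equivalent to $\TSet_\eta \nvdash_{\IPOR} \fone$.'' The corollary gives only one half of this equivalence (it says \emph{at least one} of $\fone$, $\neg\fone$ is provable, so non-provability of $\fone$ yields provability of $\neg\fone$); the converse half is the \emph{consistency} of $\IPOR + \TSet_\eta$, which you never establish and which does not follow from your induction hypothesis. The same hidden appeal to consistency occurs in the forward direction of your disjunction and existential-subword cases: your assertion that $\IPOR$ derives the existential only if it derives one of its finitely many closed instances is an existence/disjunction property that is false for theories in general and is unproved here, and the workable detour (``if no instance is semantically true, then every negated instance is provable, hence the negated existential is provable, hence the existential is not provable'') again needs consistency, plus a provable exhaustion of the finite witness range inside $\IPOR$. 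Consistency of $\IPOR + \TSet_\eta$ is exactly what a soundness theorem delivers, and that theorem is the content of the paper's $(\Rightarrow)$ direction, proved by induction on derivations rather than on formulas. So either you add that soundness induction as a separate ingredient --- at which point your structural induction becomes largely redundant and you have essentially reconstructed the paper's argument --- or your proof does not close.
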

\begin{proof}
$(\Rightarrow)$ This soundness result is established
by induction on the structure of rules for $\IPOR$.

$(\Leftarrow)$ For Corollary~\ref{cor:Sigmab0},
we know that either $\TSet_\eta \vdash_{\IPOR}
\fone$ or $\TSet_\eta \vdash_{\IPOR} \neg\fone$.
Hence, if $\eta\in\model{\fone}$, then it cannot
be $\TSet_\eta \vdash_{\IPOR} \neg \fone$
(by soundness).
So, we conclude $\TSet_\eta \vdash_{\IPOR}\fone$.
\end{proof}

\subsubsection{Realizability}

Here, we introduce realizability as internal
to $\IPOR$.
As a corollary, we obtain that from any derivation in
$\IRS$  – actually, in $\IPOR$ –
of a formula in the form $(\forall x)(\exists y)\fone(x,y)$,
one can extract a functional term of $\PORl$
$\term{f} : s\arrowT s$,
such that $\vdash_{\IPOR}(\forall x)\fone(x,\term{f}x)$.
This allows us to conclude that if a function
$f$ is $\Sigma^b_1$-representable in $\IRS$,
then $f\in \POR$.

\begin{notation}
Let $\varO, \varT$ denote finite sequences
of term variables, (resp.) $x_1,\dots, x_n$ and
$y_1,\dots, y_k$, and $\varO(\varT)$ be an
abbreviation for $y_1(\varO),\dots, y_k(\varO)$.
Let $\Lambda$ be a shorthand for the empty sequence
and $y(\Lambda) \df y$.
\end{notation}

\begin{defn}\label{df:realizability}
Formulas $x\realize\fone$
are defined by induction as follows:
\begin{align*}
\Lambda \realize \fone &\df \fone \\
\varO, \varT \realize (\ftwo \wedge
\fthree) &\df
(\varO \realize \ftwo)  \wedge
(\varT \realize \fthree) \\
z,\varO,\varT \realize (\ftwo \vee \fthree)
&\df
(z=\term{0} \wedge \varO \realize \ftwo)
\vee 
(z \neq \term{0} \wedge \varT \realize 
\fthree) \\
\varT \realize (\ftwo \rightarrow \fthree)
&\df
(\forall \varO)(\varO \realize \ftwo
\rightarrow \varT(\varO) \realize \fthree) 
\wedge (\ftwo \rightarrow \fthree) \\
z, \varO \realize (\exists y)\ftwo 
&\df
\varO \realize \ftwo\{z/y\} \\
\varO \realize (\forall y)\ftwo
&\df 
(\forall y)(\varO(y) \realize \ftwo),
\end{align*}
where no variable in $\varO$
occurs free in $\fone$.
Given terms $\termO = \term{t_1},
\dots, \term{t_n}$, we let:
$$
\termO \realize \fone \df 
(\varO \realize \fone) \{\termO/\varO\}.
$$
\end{defn}
\noindent
We relate the derivability of these
new formulas with that of formulas
of $\IPOR$.
Proofs below are by induction, respectively,
on the structure of $\IPOR$-formulas 
and on the height of derivations.

\begin{theorem}[Soundness]\label{thm:realSoundness}
If $\vdash_{\IPOR} \termO \realize \fone$,
then $\vdash_{\IPOR}\fone$.
\end{theorem}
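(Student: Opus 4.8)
The plan is to prove a slightly stronger, uniform statement by induction on the structure of $\fone$, namely that for fresh realizer variables $\varO$ one has $\vdash_{\IPOR} (\varO \realize \fone) \rightarrow \fone$. The theorem would then follow at once: by Definition~\ref{df:realizability} realizability for a tuple of terms is the corresponding substitution instance, so specializing $\varO := \termO$ gives $\vdash_{\IPOR} (\termO \realize \fone) \rightarrow \fone$, and modus ponens against the hypothesis $\vdash_{\IPOR} \termO \realize \fone$ yields $\vdash_{\IPOR} \fone$. The reason for this strengthening is that the naive meta-level implication ``provability of $\termO \realize \fone$ entails provability of $\fone$'' does not propagate through an intuitionistic disjunction: in that case one obtains only an object-level disjunction and must carry out the case analysis \emph{inside} $\IPOR$, which requires precisely the internalized implications $(\varO \realize \fone) \rightarrow \fone$ as induction hypotheses.

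For the base case $\fone$ is atomic (an equation, a $\subseteq$-formula, or $\Flip(\cdot)$), so $\varO$ is the empty sequence $\Lambda$ and $\Lambda \realize \fone$ is by definition $\fone$ itself; the claim is then the triviality $\fone \rightarrow \fone$. The cases for $\wedge$, $\exists$, and $\forall$ are routine propagations of the induction hypothesis: for $\ftwo \wedge \fthree$ one conjoins the two hypotheses $(\varO \realize \ftwo) \rightarrow \ftwo$ and $(\varT \realize \fthree) \rightarrow \fthree$; for $(\exists y)\ftwo$ one uses that $\varO \realize \ftwo\{z/y\}$ is the substitution instance of $\varO \realize \ftwo$, applies the hypothesis to recover $\ftwo\{z/y\}$, and closes with $\exists$-introduction; and for $(\forall y)\ftwo$ one instantiates the realizer at a fresh $y$, applies the hypothesis, and generalizes. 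The disjunction case $\ftwo \vee \fthree$ is where the strengthening earns its keep: from the realizer $(z = \term{0} \wedge \varO \realize \ftwo) \vee (z \neq \term{0} \wedge \varT \realize \fthree)$ one performs $\vee$-elimination inside $\IPOR$, discards the side conditions on $z$, and in each branch invokes the relevant internal hypothesis to obtain $\ftwo$ (resp.\ $\fthree$) and hence $\ftwo \vee \fthree$.

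The key case, and the conceptual crux of the interpretation, is implication. Here the definition of $\varT \realize (\ftwo \rightarrow \fthree)$ deliberately bundles the plain formula $(\ftwo \rightarrow \fthree)$ as an explicit conjunct alongside the functional clause $(\forall \varO)(\varO \realize \ftwo \rightarrow \varT(\varO) \realize \fthree)$, so that $(\varT \realize (\ftwo \rightarrow \fthree)) \rightarrow (\ftwo \rightarrow \fthree)$ holds by a single $\wedge$-elimination, with no appeal to the induction hypothesis. This \emph{realizability-with-truth} device is exactly what makes soundness go through intuitionistically and under the oracle assumptions $\TSet_\eta$: without the bundled conjunct one could not extract the truth of an implication from a mere functional realizer, and the disjunction case above would in turn collapse. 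I expect no genuine obstacle beyond this design point and the attendant bookkeeping; the only auxiliary fact demanding care is a substitution lemma for realizability, used in the existential case, stating that $\varO \realize (\ftwo\{z/y\})$ coincides with $(\varO \realize \ftwo)\{z/y\}$ when $z$ avoids the realizer variables, which is itself a straightforward induction on $\ftwo$.
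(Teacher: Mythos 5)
Your proposal is correct and follows essentially the same route as the paper, which proves soundness by induction on the structure of $\IPOR$-formulas (the paper gives no further detail beyond this). Your internalized strengthening $\vdash_{\IPOR} (\varO \realize \fone) \rightarrow \fone$, the observation that the disjunction case forces this strengthening, and the use of the bundled conjunct $(\ftwo \rightarrow \fthree)$ in the implication clause are exactly the standard way to carry out that induction.
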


\begin{notation}
Given $\Gamma=\fone_1,\dots, \fone_n$,
let $\varO \realize \Gamma$ be a
shorthand for $\varO_1\realize \fone_1,
\dots,$ $\varO_n \realize \fone_n$.
\end{notation}

\begin{theorem}[Completeness]\label{thm:realCompleteness}
If $\vdash_{\IPOR} \fone$,
then there exist terms $\termO$, such
that $\vdash_{\IPOR} \termO \realize
\fone$.
\end{theorem}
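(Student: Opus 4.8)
The plan is to proceed by induction on the height of the derivation of $\fone$ in $\IPOR$, building a realizing tuple of $\PORl$-terms $\termO$ for each provable formula, in parallel with the inductive structure already used for the Soundness Theorem. For every inference rule of the intuitionistic first-order predicate calculus and for every axiom of $\IPOR$, I would exhibit explicit realizers and verify inside $\IPOR$ that they realize the conclusion, assuming (by the inductive hypothesis) realizers for the premises. The realizing terms are constructed compositionally following Definition~\ref{df:realizability}: conjunctions pair the realizers of the two conjuncts, universal quantifiers abstract the realizer over the quantified variable (using $\varO(y)$), existentials prepend the witness term, and implications produce a $\lambda$-abstraction taking a realizer of the antecedent to one of the consequent. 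The key tool here is that $\PORl$, via its primitive $\term{Rec}$ and the auxiliary constants ($\term{Cond}$, $\term{Sub}$, $\term{Eq}$, etc.), is expressive enough to write down every realizer the induction demands.

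First I would dispatch the purely logical rules, which are standard for this style of modified realizability: for instance, $\rightarrow\!\mathbf{I}$ turns a realizer derivation under hypothesis $\varO \realize \ftwo$ into a $\lambda$-abstraction $\lambda \varO$ realizing $\ftwo \rightarrow \fthree$, while the second conjunct $(\ftwo \rightarrow \fthree)$ in the implication clause is supplied by Soundness applied to the premise. The disjunction rules require a case-selecting realizer in the first component $z$, which is definable by $\term{Cond}$; the corresponding elimination rule $\vee\mathbf{E}$ uses $\term{Cond}$ on the selector to branch between the two continuation realizers. The identity and substitution rules (R1)–(R4) lifted to the predicate calculus, together with equality axioms, realize trivially since equations take the empty realizer $\Lambda$.

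The genuinely delicate step — the one I expect to be the main obstacle — is the $\NP$-induction axiom (axiom 7 of $\IPOR$), since this is precisely where computational content of bounded recursion must be extracted. Here the hypothesis provides realizers for the base case $\fone(\epsilon)$ and for the two step implications $\fone(x) \rightarrow \fone(x\term{0})$ and $\fone(x) \rightarrow \fone(x\term{1})$, and I must assemble a single realizer for $(\forall y)\fone(y)$ that iterates the step-realizers along the binary notation of $y$. This is exactly what the $\term{Rec}$ constant is designed for: feeding the base realizer as the initial value and the two step-realizers as $h_0,h_1$, I obtain a term computing a realizer of $\fone(y)$ by recursion on $y$. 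Because $\fone$ is restricted to the $\NP$-predicate form $(\exists z\preceq \termOne)\termTwo = \termThree$, its realizer is a single first-order string (the witness $z$), so the recursion stays within type $s$ and the Parikh-style length bound on $z$ (Proposition~\ref{prop:Parikh}) supplies the bounding term $k$ required by $\term{Rec}$; this is what keeps the extracted realizer poly-time and makes the construction go through. Verifying that the iterated realizer genuinely realizes $\fone(y)$ for all $y$ is then itself proved by $\NP$-induction inside $\IPOR$, closing the argument.
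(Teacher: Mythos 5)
Your proposal follows essentially the same route as the paper: induction on the derivation in $\IPOR$ (generalized to sequents with hypotheses, so that realizers of $\Gamma$ yield a realizer of the conclusion applied to the hypothesis realizer-variables), with compositional realizers for the logical rules and the $\NP$-induction axiom realized by $\term{Rec}$ iterating the step realizers along the binary notation, exactly in the Cook--Urquhart style the paper invokes. The only quibble is that the bound fed to $\term{Rec}$ comes directly from the bounding term $\termOne$ built into the $\NP$-predicate $(\exists z\preceq\termOne)\termTwo=\termThree$ rather than from Proposition~\ref{prop:Parikh}, but this does not affect the argument.
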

\begin{proof}[Proof Sketch]
We prove that if $\Gamma \vdash_{\IPOR}
\fone$, then
there exist terms $\termO$ such that 
$\varO \realize \Gamma \vdash_{\IPOR}
\termO \varO_1\dots \varO_n \realize
\fone$.
The proof is by induction on the derivation
of $\Gamma\vdash_{\IPOR} \fone$.
Let us  consider just the case of
rule $\vee R_1$ as an example:
\begin{prooftree}
\AxiomC{$\vdots$}
\noLine
\UnaryInfC{$\Gamma \vdash \ftwo$}
\RightLabel{$\vee R_1$}
\UnaryInfC{$\Gamma \vdash \ftwo\vee \fthree$}
\end{prooftree}
By IH, there exist terms $\termT$,
such that $\termO \realize \Gamma
\vdash_{\IPOR} \termO\termT \realize
\ftwo$.
Since $x,y\realize \ftwo \vee \fthree$
is defined as 
$(x=\term{0} \wedge y \realize \ftwo)
\vee (x\neq \term{0} \wedge y
\realize \fthree)$,
we can take $\termO=\term{0},\termT$.
\end{proof}

\begin{cor}\label{cor:IPORSigma}
Let $(\forall x)(\exists y)\fone(x,y)$
be a closed theorem of $\IPOR$,
where $\fone$ is a $\Sigma^b_1$-formula.
Then, there exists a closed term
$\term{t}: s\arrowT s$ of $\PORl$
such that:
$$
\vdash_{\IPOR} (\forall x)\fone(x,\term{t}x).
$$
\end{cor}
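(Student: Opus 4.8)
The plan is to read the term $\term{t}$ directly off a realizer of the hypothesis, using the Completeness and Soundness theorems for the realizability interpretation together with a straightforward unfolding of Definition~\ref{df:realizability}. First I would apply Theorem~\ref{thm:realCompleteness} to the closed theorem $(\forall x)(\exists y)\fone(x,y)$, obtaining a sequence of terms $\termO$ with $\vdash_{\IPOR} \termO \realize (\forall x)(\exists y)\fone(x,y)$; since the realized formula is closed (and the completeness proof is carried out with empty context), the terms $\termO$ are closed.

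Next I would unfold the realizability clauses on this statement. By the clause for $\forall$, the formula $\termO \realize (\forall x)(\exists y)\fone(x,y)$ is \emph{by definition} $(\forall x)\big(\termO(x)\realize (\exists y)\fone(x,y)\big)$. By the clause for $\exists$, a realizer of $(\exists y)\fone(x,y)$ decomposes as a witness followed by a realizer of the instantiated matrix; accordingly I would split $\termO = \term{t},\termT$, where $\term{t}$ is the first (witness-producing) component and $\termT$ the rest, so that $\termO(x) = \term{t}x,\termT(x)$ and
$$
\term{t}x,\termT(x)\realize (\exists y)\fone(x,y) \quad\df\quad \termT(x)\realize \fone(x,\term{t}x).
$$
Since the quantified variable $x$ has type $s$ and the existential witness again has type $s$, the first component $\term{t}$ has type $s\arrowT s$ and, being a subterm of the closed tuple $\termO$, is closed — exactly as required by the statement.

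Combining the two unfoldings, the provable formula is $(\forall x)\big(\termT(x)\realize \fone(x,\term{t}x)\big)$, which — reading the $\forall$-clause of Definition~\ref{df:realizability} backwards — is precisely $\termT \realize (\forall x)\fone(x,\term{t}x)$. Hence $\vdash_{\IPOR}\termT \realize (\forall x)\fone(x,\term{t}x)$, and a single application of Theorem~\ref{thm:realSoundness} (Soundness), taking the realized formula to be $(\forall x)\fone(x,\term{t}x)$, yields $\vdash_{\IPOR}(\forall x)\fone(x,\term{t}x)$, the desired conclusion. The argument is essentially bookkeeping over the realizability clauses; the only step requiring genuine care is the identification of the witness component of $\termO$ and the check that it has type $s\arrowT s$ with no free variables. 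I expect the $\Sigma^b_1$ hypothesis on $\fone$ to play no role in this extraction, as both Completeness and Soundness hold for arbitrary $\IPOR$-formulas; its force is felt only downstream, where it guarantees (via Corollary~\ref{cor:provablyRepresented}) that the extracted $\term{t}$ provably represents a function of $\POR$.
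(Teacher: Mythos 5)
Your proposal is correct and follows essentially the same route as the paper: apply Completeness to obtain a realizer $\termO=\term{t},\termT$ of the closed theorem, unfold the $\forall$- and $\exists$-clauses of Definition~\ref{df:realizability} to isolate the witness component $\term{t}:s\arrowT s$, and conclude by Soundness. Your extra care in re-packaging $(\forall x)(\termT(x)\realize\fone(x,\term{t}x))$ as $\termT\realize(\forall x)\fone(x,\term{t}x)$ before invoking Theorem~\ref{thm:realSoundness}, and your observation that the $\Sigma^b_1$ hypothesis is not used in the extraction itself, are both accurate refinements of the paper's argument rather than departures from it.
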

\begin{proof}
By Theorem~\ref{thm:realCompleteness},
there exist $\termO=\term{t},w$ such that
$\vdash_{\IPOR} \termO \realize (\forall x)(\exists y)\fone(x,y)$.
So, by Definition~\ref{df:realizability},
\begin{align*}
\termO \realize (\forall x)(\exists y)\fone(x,y)
&\equiv
(\forall x)(\termO (x) \realize
(\exists y)\fone (x,y)) \\
&\equiv 
(\forall x)(w(x) \realize \fone(x,\term{t}x)).
\end{align*}
From this, by Theorem~\ref{thm:realSoundness}, 
we deduce,
$$
\vdash_{\IPOR} (\forall x) \fone (x,\term{t}x).
$$
\end{proof}

\paragraph{Functions which are $\Sigma^b_1$-Representable in $\IRS$ are in $\POR$.}
Now, we have all the ingredients to prove
that if a function is $\Sigma^b_1$-representable
in $\IRS$, in the sense of Definition~\ref{df:representability},
then it is in $\POR$.

\begin{cor}\label{cor:Sigmab1toPOR}
For any function $f:\Os \times \Ss \to \Ss$,
if there is a closed $\Sigma^b_1$-formula in $\rl$
$\fone(x,y)$, such that:
\begin{enumerate}
\itemsep0em
\item $\IRS \vdash (\forall x)(\exists !y) \fone(x,y)$
\item $\model{\fone(\ooverline{\str_1}, \ooverline{\str_2})} =
\{\eta \ | \ f(\str_1,\eta) = \str_2\}$,
\end{enumerate}
then $f\in\POR$.
\end{cor}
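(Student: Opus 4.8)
The plan is to extract a $\PORl$-term that computes $f$ and then to recognise it as the code of a $\POR$-function. First I would weaken the hypothesis: condition~1 gives $\IRS \vdash (\forall x)(\exists ! y)\fone(x,y)$, whence in particular $\IRS \vdash (\forall x)(\exists y)\fone(x,y)$. By Theorem~\ref{thm:IRStoIPOR} this is also a theorem of $\IPOR$, and since $\fone$ is a $\Sigma^b_1$-formula, Corollary~\ref{cor:IPORSigma} applies and yields a closed term $\term{t} : s \arrowT s$ of $\PORl$ with $\vdash_{\IPOR} (\forall x)\fone(x,\term{t}x)$.

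Next I would identify the function denoted by $\term{t}$. By Theorem~\ref{thm:PRepr}(2) (equivalently Corollary~\ref{cor:provablyRepresented}) the term $\term{t}$ provably represents some function $g \in \POR$, in the sense of Definition~\ref{df:provRepr}. The whole problem then reduces to proving $f = g$, for then $f \in \POR$ follows at once.

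To establish $f = g$ pointwise, I would fix arbitrary $\str \in \Ss$ and $\eta \in \Os$ and set $\str' \df g(\str,\eta)$. Provable representability gives $\TSet_\eta \vdash_{\PORl} \term{t}\,\ooverline{\str} = \ooverline{\str'}$, and hence $\TSet_\eta \vdash_{\IPOR} \term{t}\,\ooverline{\str} = \ooverline{\str'}$ by Proposition~\ref{prop:PORltoIPOR}. Instantiating $\vdash_{\IPOR}(\forall x)\fone(x,\term{t}x)$ at $\ooverline{\str}$ and rewriting $\term{t}\,\ooverline{\str}$ as $\ooverline{\str'}$ by this equation (via the equality rules of $\IPOR$), I obtain $\TSet_\eta \vdash_{\IPOR} \fone(\ooverline{\str}, \ooverline{\str'})$. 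The forward (soundness) direction of Lemma~\ref{lemma:Sigmab0IRS} then gives $\eta \in \model{\fone(\ooverline{\str}, \ooverline{\str'})}$, and condition~2 turns this into $f(\str,\eta) = \str' = g(\str,\eta)$. As $\str$ and $\eta$ were arbitrary, $f = g \in \POR$.

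The main obstacle is the very last appeal to soundness: Lemma~\ref{lemma:Sigmab0IRS} is stated as a biconditional only for $\Sigma^b_0$-formulas, whereas $\fone(\ooverline{\str},\ooverline{\str'})$ is merely $\Sigma^b_1$. What saves the argument is that I only need the $(\Rightarrow)$ half, i.e.\ that $\TSet_\eta \vdash_{\IPOR} \psi$ implies $\eta \in \model{\psi}$; this half is proved by induction on $\IPOR$-derivations and holds for an \emph{arbitrary} closed $\rl$-formula $\psi$, the $\Sigma^b_0$ restriction being needed only for the converse, which rests on the decidability of $\Sigma^b_0$-formulas via Corollary~\ref{cor:Sigmab0}. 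I would therefore record explicitly that this soundness direction applies to all formulas before invoking it on $\fone(\ooverline{\str},\ooverline{\str'})$. I would also note that condition~1 is used only to secure existence for Corollary~\ref{cor:IPORSigma}: its uniqueness clause guarantees that the $f$ fixed by condition~2 is genuinely single-valued, but it is not otherwise needed, since the extracted witness $\term{t}x$ already pins down the value of $f$ at every argument.
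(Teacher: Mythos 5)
Your proof follows the paper's argument step for step: Theorem~\ref{thm:IRStoIPOR}, then Corollary~\ref{cor:IPORSigma} to extract the closed term, Theorem~\ref{thm:PRepr}(2) to obtain $g\in\POR$, Proposition~\ref{prop:PORltoIPOR} plus equality rewriting to derive $\TSet_\eta\vdash_{\IPOR}\fone(\ooverline{\str},\ooverline{\str'})$, and finally Lemma~\ref{lemma:Sigmab0IRS} together with condition~2 to conclude $f=g$. Your explicit observation that only the $(\Rightarrow)$ (soundness) half of that lemma is needed, and that this half holds beyond $\Sigma^b_0$, addresses a point the paper itself glosses over when it invokes the lemma on the $\Sigma^b_1$-formula $\fone$, so your version is, if anything, slightly more careful.
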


\begin{proof}
Since $\vdash_{\IRS} (\forall x)(\exists !y)\fone(x,y)$,
by Theorem~\ref{thm:IRStoIPOR}
$\vdash_{\IPOR} (\forall x)(\exists !y) \fone(x,y)$.
Then, from $\vdash_{\IPOR} (\forall x)(\exists y)\fone(x,y)$
we deduce $\vdash_{\IPOR}(\forall x)\fone(x,\term{g}x)$
for some closed term $\term{g}\in\PORl$, 
by Corollary~\ref{cor:IPORSigma}.
Furthermore, by Theorem~\ref{thm:PRepr}.2,
there is a $g\in \POR$ such that for any $\str_1,\str_2\in\Ss$
and $\eta\in \Os$,
$g(\str_1,\eta)=\str_2$ when
$T_\eta \vdash_{\IPOR} \term{g} \ooverline{\str_1} 
= \ooverline{\str_2}$.
So, by
Proposition~\ref{prop:PORltoIPOR},
for any $\str_1,\str_2\in\Ss$ and $\eta\in\Os$
if $g(\str_1,\eta)=\str_2$, then 
$T_\eta \vdash_{\IPOR} \term{g}\ooverline{\str_1}=
\ooverline{\str_2}$
and so 
$\TSet_\eta \vdash_{\IPOR} 
\fone(\ooverline{\str_1},\ooverline{\str_2})$.
By Lemma~\ref{lemma:Sigmab0IRS},
$\TSet_\eta \vdash_{\IPOR} \fone(\ooverline{\str_1},
\ooverline{\str_2})$
when $\eta\in\model{\fone(\ooverline{\str_1},\ooverline{\str_2})}$,
that is $f(\str_1,\eta)=\str_2$.
But then $f=g$, so since $g\in\POR$ also $f \in \POR$.
\end{proof}

 \footnotesize
 \begin{center}
 \begin{figure}[h!]
 \begin{center}
\framebox{
\parbox[t][7.2cm]{11cm}{
\footnotesize{

 \begin{center}
 \begin{tikzpicture}[node distance=2cm]
\node at (-3,3) (a) {$\vdash_{\IRS} (\forall x)(\exists y)\fone(x,y)$}; 
\node at (-3,1.5) (b) {$\vdash_{\IPOR} (\forall x)(\exists y) \fone(x,y)$}; 
\node at (-3.8,2.3) {\textcolor{gray}{T.~\ref{thm:IRStoIPOR}}};
\node at (-3.8,0.9) {\textcolor{gray}{C.~\ref{cor:IPORSigma}}};
\node at (-3,0.3) (c) {there is $\term{g}\in\PORl$ (closed)};
\node at (-3,0) (c1) {$\vdash_{\IPOR}(\forall x)\fone(x,\term{g}x)$};
\node at (-3.8,-0.8) {\textcolor{gray}{T.~\ref{thm:PRepr}}};
\node at (-3,-1.7) (d) {there is a $g\in\POR$};
\node at (-3,-2) (d1) {$g(\str_1,\eta)
=\str_2 \Leftrightarrow T_\eta 
\vdash_{\PORl}
\term{g} \ooverline{\str_1} = \ooverline{\str_2}$};
\node at (-1.8,-3) (d2) {$T_\eta \vdash_{\IPOR} \term{g}
\ooverline{\str_1} =\ooverline{\str_2}$};
\node at (-2.5,-2.5) (d3) {\textcolor{gray}{P.~\ref{prop:PORltoIPOR}}};

\node at (3,0) (j) {$\vdash_{\IPOR} (\forall x)\fone(x,\ooverline{\str_2})$};
\node at (3,-1) (j1) {$f(\str_1,\eta)=\str_2$};

\node at (3,-2) (h) {$f (=g)\in\POR$};
\node at (3.6,-0.45) (h1) {\textcolor{gray}{L~\ref{lemma:Sigmab0IRS}}};

%
%
%

    \draw[->,thick] (a) to (b);
    \draw[->,thick] (b) to (c);
    \draw[->,thick] (c1) to (d);
    \draw[->,thick] (-1.9,-2.3) to (-1.9,-2.8);
    \draw[->,dotted] (j1) to (h);
    \draw[->,dotted] (-1,0) to (1.2,0);
    \draw[->,dotted] (d2) to (j);
    \draw[->,dotted] (d) to [bend left=20] (h);
    \draw[->,thick] (j) to (j1);

   %
\end{tikzpicture} 
\end{center}
  }}}
\caption{Proof Schema of Corollary~\ref{cor:Sigmab1toPOR}}
\end{center}
\end{figure}
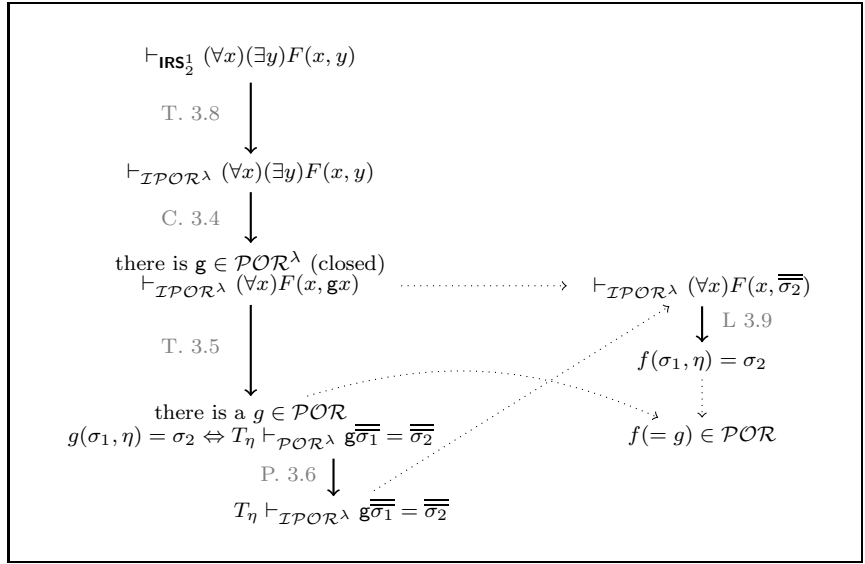  
\end{center}
\normalsize

\subsubsection{$\forall \NP$-Conservativity of $\IPOR$ + $\EM$ over $\IPOR$}

Corollary~\ref{cor:Sigmab1toPOR} is already
very close to the result we are looking for.
The remaining step to conclude our proof is its extension
from intuitioninstic $\IRS$ to classical $\RS$,
showing that any function which is $\Sigma^b_1$-representable 
in $\RS$ is also in $\POR$.
The proof is obtained by adapting the method
from~\cite{CookUrquhart}.
We start by considering an extension
of $\IPOR$ via $\EM$ and
show that the realizability interpretation
extends to it 
so that for any of its closed theorems
$(\forall x)(\exists y\preceq \term{t})\fone(x,y)$,
being $\fone$ a $\Sigma^b_1$-formula,
there is a closed term $\term{t}:s\arrowT s$
of $\PORl$ such that $\vdash_{\IPOR} (\forall x)\fone(x,\term{t}x)$.


\paragraph{From $\IPOR$ to $\IPOR + \Markov$}
Let $\EM$ be the excluded-middle schema,
$\fone \vee \neg \fone$, and 
\emph{Markov's principle} be defined as follows,
\begin{align*}
\neg \neg (\exists x)\fone \rightarrow (\exists x)\fone,
\tag{$\mathbf{Markov}$}
\end{align*}
where $\fone$ is a $\Sigma^b_1$-formula.

\begin{prop}\label{prop:EMtoMarkov}
For any $\Sigma^b_1$-formula $\fone$,
if $\vdash_{\IPOR+\EM} \fone$, then
$\vdash_{\IPOR+\Markov}\fone$.
\end{prop}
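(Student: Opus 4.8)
The plan is to argue via the G\"odel--Gentzen negative translation $(\cdot)^N$, combined with the decidability of $\Sigma^b_0$-formulas and Markov's principle. Recall that $(\cdot)^N$ prefixes atomic formulas with a double negation, commutes with $\wedge,\rightarrow,\forall$, and sends $\ftwo\vee\fthree$ to $\neg(\neg\ftwo^N\wedge\neg\fthree^N)$ and $(\exists x)\ftwo$ to $\neg(\forall x)\neg\ftwo^N$. This translation is sound for classical logic: writing $\vdash_c$ and $\vdash_i$ for classical and intuitionistic provability, if $\Gamma\vdash_c\fone$ then $\Gamma^N\vdash_i\fone^N$. Since $\IPOR+\EM$ is nothing but $\IPOR$ over classical logic, it suffices to establish: (a) the negative translation of every axiom of $\IPOR$ and of every $\EM$-instance is provable in $\IPOR+\Markov$; and (b) for every $\Sigma^b_1$-formula $\fone$ one has $\vdash_{\IPOR+\Markov}\fone^N\leftrightarrow\fone$. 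Granting these, if $\vdash_{\IPOR+\EM}\fone$ with $\fone\in\Sigma^b_1$, then by soundness $\vdash_{\IPOR+\Markov}\fone^N$, and by (b) $\vdash_{\IPOR+\Markov}\fone$.

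The engine of both (a) and (b) is the decidability of $\Sigma^b_0$-formulas: by the corollary following Proposition~\ref{prop:Sigmab0} we have $\vdash_{\IPOR}\fone\vee\neg\fone$ for every $\Sigma^b_0$-formula $\fone$, hence also stability $\vdash_{\IPOR}\neg\neg\fone\rightarrow\fone$. A routine induction on the $\Sigma^b_0$-structure then gives $\vdash_{\IPOR}\fone^N\leftrightarrow\fone$ for all such $\fone$; the only non-trivial clauses, the subword quantifiers, are handled using that a bounded conjunction or disjunction of stable formulas is again stable. For (b), write a $\Sigma^b_1$-formula as $\fone=(\exists x_1\preceq t_1)\cdots(\exists x_n\preceq t_n)\ftwo$ with $\ftwo\in\Sigma^b_0$. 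Unfolding $(\cdot)^N$ and using both $\ftwo^N\leftrightarrow\ftwo$ and the intuitionistic equivalence $\neg(\forall x)\neg\psi\leftrightarrow\neg\neg(\exists x)\psi$, one checks that $\fone^N$ is intuitionistically equivalent to the nested form $\neg\neg(\exists x_1\preceq t_1)\cdots\neg\neg(\exists x_n\preceq t_n)\ftwo$. Applying Markov's principle once for each bounded existential — legitimately, since at each stage the formula beneath the double negation is itself $\Sigma^b_1$ — collapses these double negations, yielding $\vdash_{\IPOR+\Markov}\fone^N\rightarrow\fone$; the converse is intuitionistically valid.

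It remains to verify (a). The equality rules and the equational and basic axioms (items $1$--$6$ together with the axioms of $\PORl$) are universal closures of $\Sigma^b_0$-formulas, so their translations reduce to the originals by the stability argument above. The $\EM$-instances are immediate, since $(\fone\vee\neg\fone)^N$ is $\neg(\neg\fone^N\wedge\neg\neg\fone^N)$, an intuitionistic theorem. The delicate case is the $\NP$-induction schema, whose instances range over predicates $\fone$ of the form $(\exists z\preceq\termOne)\,\termTwo=\termThree$. Here one shows, exactly as in (b), that $\fone^N$ is $\IPOR+\Markov$-equivalent to $\fone$, so that the negative translation of an induction instance for $\fone$ is, up to provable equivalence, again an induction instance for an $\NP$-predicate, and is therefore derivable from the $\NP$-induction schema of $\IPOR$ itself.

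The main obstacle is precisely this last point: confirming that passing a given $\NP$-induction instance through $(\cdot)^N$ and then re-collapsing the resulting double negations via Markov genuinely returns an $\NP$-predicate — uniformly, with the bounding term $\termOne$ preserved — rather than a formula carrying unbounded or mislocated negations to which the restricted induction no longer applies. This is where the restriction of both Markov's principle and the induction schema to the $\Sigma^b_1$/$\NP$ fragment must be used together, and it is the step that genuinely relies on the bounded-arithmetic structure rather than on pure proof theory alone.
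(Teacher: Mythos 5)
Your proposal follows essentially the same route as the paper's own (very terse) proof sketch: the double-negation translation, stability of $\Sigma^b_0$-formulas obtained from their decidability, and the use of Markov's principle to show that the translation of an $\NP$-induction instance is again equivalent to an $\NP$-induction instance. Your elaboration correctly identifies the two key points the paper singles out, so it is a faithful and more detailed version of the intended argument.
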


\begin{proof}[Proof Sketch]
The claim is proved by applying the double negation
translation,
with the following two remarks: (1)
for any $\Sigma^b_0$-formula $\fone$,
$\vdash_{\IPOR} \neg\neg\fone \rightarrow \fone$;
(2) using $\Markov$, the
double negation of an instance of the $\NP$-induction
can be shown equivalent to an instance
of the $\NP$-induction schema.
\end{proof}
\noindent
We conclude by showing that the realizability
interpretation defined above
extends to $\IPOR+\Markov$, that is for any
closed theorem
$(\forall x)(\exists y\preceq \term{t})\fone(x,y)$
with $\fone$ $\Sigma^b_1$-formula, of $\IPOR+\Markov$, 
there
is a closed term of $\PORl$ $\term{t}:s \arrowT s$,
such that $\vdash_{\IPOR} (\forall x)\fone(x,\term{t}x)$.

\paragraph{From $\IPOR$ to $(\IPOR)^*$.}
Let us assume given a subjective encoding
$\sharp:(s \arrowT s) \arrowT s$ in
$\IPOR$ of first-order unary functions as strings,
together with a ``decoding'' function
$\term{app}: s \arrowT s \arrowT s$ satisfying:
$$
\vdash_{\IPOR} \term{app}(\sharp \term{f},x)
= \term{f}x.
$$
Moreover, let 
$$
x*y \df \sharp(\lambda z.\term{BAnd}(\term{app}(x,z),
\term{app}(y,z)))
$$
and
$$
T(x) \df (\exists y)(\term{B}(\term{app}(x,y))=\term{0}).
$$
There is a \emph{meet semi-lattice}
structure on the set of terms of type $s$
defined by $\term{t}\sqsubseteq \term{u}$
when $\vdash_{\IPOR} T(\term{u})
\rightarrow T(\term{t})$ with top element 
$\TTop \df \sharp(\lambda x.\term{1})$
and meet given by $x*y$.
Indeed, from $T(x*\term{1}) \leftrightarrow
T(x)$, $x\sqsubseteq \TTop$ follows.
Moreover, from $\term{B}(\term{app}(x,\term{u}))=\term{0}$,
we obtain $\term{B}(\term{app}(x*y,\term{u}))=
\term{BAnd}(\term{app}(x,\term{u}), \term{app}(y,\term{u})) = \term{0}$,
whence $T(x) \rightarrow T(x*y)$,
i.e.~$x*y\sqsubseteq x$.
One can similarly prove $x*y\sqsubseteq y$.
Finally, from $T(x) \rightarrow T(v)$
and $T(y)\rightarrow T(v)$,
we deduce $T(x*y)\rightarrow T(v)$,
by observing that $\vdash_{\IPOR}T(x*y)
\rightarrow T(y)$.
Notice that the formula $T(x)$ is \emph{not}
a $\Sigma^b_1$-one, as its existential quantifier
is not bounded.

\begin{defn}
For any formula of $\IPOR$ $\fone$,
and fresh variable $x$, we define
formulas $x\Vdash \fone$ inductively:
\begin{align*}
x \Vdash \fone &\df \fone \vee T(x) \ \ \ (\fone \ atomic) \\
x \Vdash \ftwo \wedge \fthree
&\df x \Vdash \ftwo \wedge x
\Vdash \fthree \\
x \Vdash \ftwo \vee \fthree
&\df
x \Vdash \ftwo \vee
x \Vdash \fthree \\
x \Vdash \ftwo \rightarrow \fthree
&\df
(\forall y)(y\Vdash \ftwo \rightarrow
x*y \Vdash \fthree) \\
x \Vdash (\exists y)\ftwo &\df
(\exists y)x \Vdash \ftwo \\
x \Vdash (\forall y)\ftwo &\df
(\forall y)x \Vdash \ftwo.
\end{align*}
\end{defn}
\noindent
The following Lemma~\ref{lemma:NPind}
is established by induction on the structure
of formulas in $\IPOR$, as in~\cite{CoquandHofmann}.

\begin{lemma}\label{lemma:NPind}
If $\fone$ is provable in $\IPOR$ without using
$\NP$-induction,
then $x\Vdash \fone$ is provable in $\IPOR$.
\end{lemma}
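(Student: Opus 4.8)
The plan is to argue by induction on the $\NP$-induction-free derivation of $\fone$ in $\IPOR$, checking that the forcing translation $x \Vdash(-)$ carries every logical rule, every equality rule, and every non-induction axiom to an $\IPOR$-provable statement. Before starting that induction I would isolate the two structural facts that make the whole argument run. The first is \emph{monotonicity along the semilattice}: for every formula $\fone$ and all terms $x,y$,
\[
\vdash_{\IPOR} (x \Vdash \fone) \to (x * y \Vdash \fone).
\]
This is proved by a separate induction on $\fone$. The atomic case is precisely the inequality $x*y \sqsubseteq x$, i.e.\ $\vdash_{\IPOR} T(x) \to T(x*y)$, which turns $\fone \vee T(x)$ into $\fone \vee T(x*y)$; the $\wedge,\vee,\exists,\forall$ clauses are immediate because these connectives are interpreted \emph{pointwise}; and the $\rightarrow$ clause uses the commutativity, associativity and idempotence of $*$ up to $\sqsubseteq$-equivalence (all inherited from the boolean identities for $\term{BAnd}$) to re-associate the meet.

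The second preliminary fact is that \emph{forcing is trivial on decidable formulas}: for any $\Sigma^b_0$-formula $\fone$,
\[
\vdash_{\IPOR} (x \Vdash \fone) \leftrightarrow (\fone \vee T(x)).
\]
Here the $(\Leftarrow)$ direction uses $x*y \sqsubseteq x$ and $x*y \sqsubseteq y$ to discharge the $T(\cdot)$-disjuncts, while $(\Rightarrow)$ instantiates the semilattice variable at the top element $\TTop$ (noting $x*\TTop$ is $\sqsubseteq$-equivalent to $x$) and appeals to the decidability $\vdash_{\IPOR}\fone\vee\neg\fone$ from Corollary~i. This lemma is what lets me dispose of the non-logical axioms cheaply: each of axioms 1--6 is (the universal closure of) a quantifier-free, hence $\Sigma^b_0$, formula that is already a theorem of $\IPOR$; by the displayed equivalence its forcing reduces to $\fone \vee T(x)$, whose left disjunct is available. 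In particular the biconditional axioms (2, 5, 6) and the case-trichotomy (3) are forced without any genuine work, the disjunctions and the atomic implications inside them being absorbed by $T(x)$.

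With these in hand, the main induction verifies that $\Vdash$ validates intuitionistic first-order natural deduction internally in $\IPOR$, exactly as in~\cite{CoquandHofmann}. The rules for $\vee$, $\exists$, $\wedge$, $\forall$ and equality are essentially immediate, since $x \Vdash \ftwo\vee\fthree$ and $x \Vdash (\exists y)\ftwo$ are genuine pointwise disjunctions and existentials rather than semilattice covers. The only clauses needing care are those for implication: to force $\ftwo \rightarrow \fthree$ one must, given $y \Vdash \ftwo$, produce $x*y \Vdash \fthree$, and this is where the meet $x*y$ together with the monotonicity lemma is used to relocate both the hypothesis and the forcing-parameter to a common smaller element before applying the induction hypothesis.

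I expect the implication clause to be the main obstacle, in two guises: getting the $*$-bookkeeping (commutativity, associativity, $x*y\sqsubseteq x,y$) correct in the monotonicity lemma, and then reusing it in the $\rightarrow$-introduction/elimination steps of the soundness induction. Everything else is routine once those are set up. I would finally stress the point that makes the statement true as formulated: the $\NP$-induction schema (axiom 7) is never forced, and indeed \emph{cannot} be forced inside induction-free $\IPOR$; this is exactly why it is excluded from the hypothesis, the gap being bridged separately by the argument using $\Markov$.
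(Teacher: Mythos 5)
Your proposal is correct and takes essentially the same route as the paper, whose proof is just the one-line appeal to an induction on the derivation following Coquand--Hofmann: your two preliminary facts (monotonicity of $\Vdash$ along the meet semilattice, and the collapse of forcing to $\fone \vee T(x)$ on $\Sigma^b_0$-formulas) are exactly the auxiliary facts the paper relies on, the latter appearing verbatim as property i.\ stated after Lemma~\ref{lemma:Sigmab0IPOR}. The only ingredient you leave implicit is that $T(x)$ forces every formula (needed for ex falso and for the $T(x)$-branch of the Leibniz rule), but that is established by the same structural induction as your monotonicity lemma.
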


\begin{lemma}\label{lemma:Sigmab0IPOR}
Let $\fone=(\exists x\preceq \term{t})\ftwo$,
where $\ftwo$ is a $\Sigma^b_0$-formula.
Then,
there exists a term $\term{u_\fone}:s$ with
$\FV(\term{u_\fone})=\FV(\ftwo)$ such that:
$$
\vdash_{\IPOR} \fone \leftrightarrow T(\term{u_\fone}).
$$
\end{lemma}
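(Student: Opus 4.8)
The plan is to use Proposition~\ref{prop:Sigmab0} to replace the $\Sigma^b_0$-body by a Boolean-valued term of $\PORl$, to absorb the bound $x \preceq \term{t}$ into that term, and then to reinterpret the resulting bounded existential as the \emph{unbounded} existential hidden inside $T(\cdot)$, packaging the Boolean term as a unary function through the encoding $\sharp$. First I would observe that $x \preceq \term{t}$ abbreviates the atomic formula $\oone^x \subseteq \oone^{\term{t}}$, so that $\ftwo' \df (x \preceq \term{t}) \wedge \ftwo$ is again a $\Sigma^b_0$-formula. Proposition~\ref{prop:Sigmab0} then supplies a term $\term{c}$ of $\PORl$, with the same free variables as $\ftwo'$, satisfying $\vdash_{\IPOR} \ftwo' \leftrightarrow \term{c} = \term{0}$ and $\vdash_{\IPOR} \term{c} = \term{0} \vee \term{c} = \term{1}$. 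I would then set $\term{u_\fone} \df \sharp(\lambda x.\,\term{c})$; since $x$ is captured by the $\lambda$-abstraction and the remaining free variables are preserved by Proposition~\ref{prop:Sigmab0}, $\FV(\term{u_\fone})$ matches the free-variable condition in the statement.

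The core of the argument is unfolding $T(\term{u_\fone})$. Using the decoding identity $\vdash_{\IPOR} \term{app}(\sharp\term{f}, z) = \term{f}z$ and the $(\beta)$-axiom, one gets $\vdash_{\IPOR} \term{app}(\term{u_\fone}, z) = \term{c}\{z/x\}$, hence
$$
\vdash_{\IPOR} T(\term{u_\fone}) \leftrightarrow (\exists z)\big(\term{B}(\term{c}\{z/x\}) = \term{0}\big).
$$
Since $\term{c}\{z/x\}$ is provably equal to $\term{0}$ or $\term{1}$ (Proposition~\ref{prop:Sigmab0}) and $\term{B}$ acts as the identity on these two values (immediate from the defining equations of $\term{Cond}$, namely $\term{B}(\term{0}) = \term{0}$ and $\term{B}(\term{1}) = \term{1}$), this simplifies to $\vdash_{\IPOR} T(\term{u_\fone}) \leftrightarrow (\exists z)(\term{c}\{z/x\} = \term{0})$. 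Finally, Proposition~\ref{prop:Sigmab0} gives $\term{c}\{z/x\} = \term{0} \leftrightarrow \ftwo'\{z/x\}$, and since $\term{t}$ does not contain the quantified variable $x$ we have $\ftwo'\{z/x\} \equiv (z \preceq \term{t}) \wedge \ftwo\{z/x\}$; renaming $z$ back to $x$ yields $(\exists x \preceq \term{t})\ftwo = \fone$. Chaining these equivalences gives $\vdash_{\IPOR} \fone \leftrightarrow T(\term{u_\fone})$.

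The one genuinely delicate point---everything else being routine $\beta$/$\term{app}$ conversions and free-variable bookkeeping---is that the existential in $T(x) \df (\exists y)(\term{B}(\term{app}(x,y)) = \term{0})$ is unbounded, whereas $\fone$ quantifies only over $x \preceq \term{t}$. This is precisely why the bound is conjoined into $\ftwo'$ before forming $\term{c}$: the characteristic term returns $\term{1}$ (``false'') on every $z$ with $z \not\preceq \term{t}$, so any witness forcing $\term{c}\{z/x\} = \term{0}$ automatically satisfies $z \preceq \term{t}$. The unbounded search therefore collapses onto the intended bounded range, and no appeal to a length bound on the witness is needed to establish the equivalence.
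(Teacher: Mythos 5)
Your proposal is correct and follows essentially the same route as the paper: conjoin the bound $x \preceq \term{t}$ into the $\Sigma^b_0$-body, take its characteristic term via Proposition~\ref{prop:Sigmab0}, and set $\term{u_\fone} = \sharp(\lambda x.\,\cdot)$ applied to that term, with the unbounded existential in $T(\cdot)$ collapsing onto the bounded range because the characteristic term is false outside it. You merely spell out two steps the paper leaves implicit (the $\term{B}$-coercion on $\{\term{0},\term{1}\}$ and the substitution behaviour of the characteristic term under $\beta$/$\term{app}$), which is fine.
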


\begin{proof}
Since $\ftwo(x)$ is a $\Sigma^b_0$-formula,
for all terms $\term{v}:s$,
$\vdash_{\IPOR} \ftwo(x) \leftrightarrow \term{t_{x\preceq t\wedge\ftwo}} (x) =\term{0}$,
where $\term{t_{x\preceq t\wedge \ftwo}}$ has
the free variables of $t$ and $\ftwo$.
Let $\fthree(x)$ be a $\Sigma^b_0$-formula,
it is show by induction on its structure that
for any term $\term{v}:s$, 
$\term{t_{\fthree(v)}} = \term{t_\fthree}(\term{v})$.
Then,
$$
\vdash_{\IPOR} \fone \leftrightarrow 
(\exists x)\term{t_{x\preceq t\wedge \ftwo}}(x) =
\term{0} \leftrightarrow (\exists x)T(\sharp(\lambda
x.\term{t_{x\preceq t\wedge \ftwo}}(x))).
$$
So, we let $\term{u_\fone}=\sharp(\lambda x.\term{t_{x\preceq t\wedge \ftwo}}(x))$.
\end{proof}
\noindent 
From which we also deduce the following three properties:
\begin{itemize}
\itemsep0em
\item[i.] $\vdash_{\IPOR} (x\Vdash \fone)
\leftrightarrow (\fone \vee T(x))$

\item[ii.] $\vdash_{\IPOR} (x\Vdash \neg\fone) 
\leftrightarrow (\fone \rightarrow T(x))$

\item[iii.] $\vdash_{\IPOR} (x\Vdash \neg\neg \fone)
\leftrightarrow (\fone \vee T(x))$,
\end{itemize}
where $\fone$ is a $\Sigma^b_1$-formula.

\begin{cor}[Markov's Principle]
If $\fone$ is a $\Sigma^b_1$-formula, then
$$
\vdash_{\IPOR} x \Vdash \neg \neg \fone
\rightarrow \fone.
$$
\end{cor}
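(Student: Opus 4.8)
The plan is to read the statement with the binding convention fixed in the definition of $\Vdash$ (where the clause $x\Vdash\ftwo\rightarrow\fthree$ governs the whole implication), so that the goal is to show
$$
\vdash_{\IPOR} x \Vdash (\neg\neg\fone \rightarrow \fone),
$$
i.e.\ that the forcing interpretation validates every Markov instance for a $\Sigma^b_1$-formula $\fone$. First I would unfold the implication clause, reducing the goal to
$$
\vdash_{\IPOR} (\forall y)\big(y \Vdash \neg\neg\fone \rightarrow x * y \Vdash \fone\big),
$$
so that it suffices to fix an arbitrary $y$ and reason inside $\IPOR$ under the hypothesis $y \Vdash \neg\neg\fone$.

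Next I would invoke the two equivalences already deduced from Lemma~\ref{lemma:Sigmab0IPOR}. By property (iii), the hypothesis $y \Vdash \neg\neg\fone$ is equivalent to $\fone \vee T(y)$; and by property (i), instantiated (as a provable schema, closed under substitution of the term $x*y$ for its free variable) at the meet $x*y$, the target $x*y \Vdash \fone$ is equivalent to $\fone \vee T(x*y)$. Hence the whole problem collapses to deriving $\fone \vee T(x*y)$ from $\fone \vee T(y)$.

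This final step is where the meet-semilattice structure does the work, but it is immediate: since $x*y \sqsubseteq y$ — which by the definition of $\sqsubseteq$ is exactly $\vdash_{\IPOR} T(y) \rightarrow T(x*y)$, established when the meet $x*y$ was introduced — monotonicity of $\vee$ gives $\vdash_{\IPOR} (\fone \vee T(y)) \rightarrow (\fone \vee T(x*y))$. Chaining the three implications, discharging the hypothesis on $y$, and generalizing yields the claim.

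I do not expect a genuine obstacle: once properties (i) and (iii) are in hand, the corollary is essentially the observation that the consequent $x*y \Vdash \fone$ is \emph{weaker} than the antecedent $y \Vdash \neg\neg\fone$, because passing to the smaller condition $x*y \sqsubseteq y$ can only make the ``generalized falsity'' $T$ truer. The one point demanding care is the parsing: under the alternative reading $(x \Vdash \neg\neg\fone) \rightarrow \fone$ with $x$ free the statement is false (take any $x$ for which $T(x)$ is satisfiable, using property (iii)), so the intended scope must be $x \Vdash (\neg\neg\fone \rightarrow \fone)$; instantiating it at the top element $\TTop$, where $T(\TTop)$ is provably absurd, then recovers the unrelativized principle $\neg\neg\fone \rightarrow \fone$, which is precisely what the subsequent conservativity argument needs.
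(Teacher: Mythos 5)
Your proof is correct and follows exactly the route the paper intends: the corollary is stated as an immediate consequence of properties (i) and (iii) together with the semilattice inequality $x*y\sqsubseteq y$ (i.e.\ $\vdash_{\IPOR}T(y)\rightarrow T(x*y)$), which is precisely the chain of equivalences and the monotonicity-of-$\vee$ step you carry out after unfolding the clause for $\Vdash$ on implications. Your parsing remark is also right — the statement must be read as $x\Vdash(\neg\neg\fone\rightarrow\fone)$, and the instantiation at $\TTop$ is exactly how the unrelativized principle is recovered downstream.
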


To define the extension $(\IPOR)^*$ of $\IPOR$, 
we introduce $\PIND$ in a formal way.

\begin{defn}[PIND]\label{df:PIND}
Let $\PIND(\fone)$ indicate the formula:
$$
\big(\fone(\epsilon) \wedge
\big((\forall x)(\fone(x) \rightarrow \fone(x\term{0}))
\wedge (\forall x)(\fone(x) \rightarrow \fone(x\term{1}))\big)
\rightarrow (\forall x)\fone(x).
$$
\end{defn}
\noindent
Observe that if $\fone(x)$ is a formula
of the form $(\exists y\preceq \termOne)\termTwo=\termThree$, 
then $z\Vdash \PIND(\fone)$
is of the form $\PIND(\fone(x)\vee T(z))$,
which is \emph{not} an instance of the 
$\NP$-induction schema (as the formula
$T(z)=
(\exists x)\term{B}(\term{app}(z,x))=\term{0}$
is not bounded).

\begin{defn}[The Theory $(\IPOR)^*$]
Let $(\IPOR)^*$ indicate the theory
extending $\IPOR$ with all instances
of the induction schema $\PIND(\fone(x)
\vee \ftwo)$, where $\fone(x)$
is of the form $(\exists y\preceq \termOne)
\termTwo=\termThree$,
and $\ftwo$ is an arbitrary formula with
$x\not\in\FV(\ftwo)$.
\end{defn}
\noindent
We then deduce the following Proposition relating 
derivability in $\IPOR$ and in $(\IPOR)^*$.

\begin{prop}
For any $\Sigma^b_1$-formula
$\fone$,
if $\vdash_{\IPOR} \fone$,
then $\vdash_{(\IPOR)^*} x\Vdash \fone$.
\end{prop}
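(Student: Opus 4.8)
The plan is to prove the stronger, context-sensitive statement that if $\Gamma\vdash_{\IPOR}\fone$ then $(z\Vdash\Gamma)\vdash_{(\IPOR)^*}(z\Vdash\fone)$ — writing $z\Vdash\Gamma$ for the forcing of each hypothesis in $\Gamma$ at the common semilattice variable $z$ — by induction on the $\IPOR$-derivation, and then to read off the Proposition by taking $\Gamma$ empty. This is exactly the assertion of Lemma~\ref{lemma:NPind}, except that now (a) the hypothesis derivation may use the $\NP$-induction schema (axiom 7) and (b) the conclusion is drawn in $(\IPOR)^*$ rather than in $\IPOR$. Since $\IPOR\subseteq(\IPOR)^*$, every case already treated in the proof of Lemma~\ref{lemma:NPind} — the rules of intuitionistic predicate calculus, the equality rules, the $\PORl$-axioms and axioms 1--6 — goes through verbatim, the $(\IPOR)^*$-provable premises combining into $(\IPOR)^*$-provable conclusions just as before. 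Hence the only genuinely new case is the $\NP$-induction axiom, and the whole weight of the argument rests on showing that its forcing translation is provable in $(\IPOR)^*$.

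So fix an instance $\PIND(\fone)$ with $\fone$ an $\NP$-predicate (Definition~\ref{df:PIND}), write $A$ for its antecedent and $C=(\forall x)\fone(x)$ for its conclusion, and set $\psi_s(x):=\fone(x)\vee T(s)$. By definition $z\Vdash\PIND(\fone)=(\forall w)\big((w\Vdash A)\rightarrow(z*w\Vdash C)\big)$, so I would first simplify the two sides. Using property~(i) [$\vdash_{\IPOR}(s\Vdash\fone)\leftrightarrow(\fone\vee T(s))$ for $\Sigma^b_1$-formulas $\fone$], together with $\vdash_{\IPOR}\neg T(\TTop)$, the identity $\vdash_{\IPOR}T(s*\TTop)\leftrightarrow T(s)$, and the meet inequalities $T(s)\rightarrow T(s*s')$, a short computation on the $\rightarrow$- and $\forall$-clauses of $\Vdash$ gives
\[
w\Vdash\big(\fone(x)\rightarrow\fone(x\term{0})\big)\;\leftrightarrow\;\big(\psi_w(x)\rightarrow\psi_w(x\term{0})\big),
\]
and likewise for the $x\term{1}$ step, whence $\vdash_{\IPOR}(w\Vdash A)\leftrightarrow A_w$, where $A_w$ is precisely the antecedent of $\PIND(\psi_w)$. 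On the other side, the $\forall$-clause and property~(i) yield $\vdash_{\IPOR}(z*w\Vdash C)\leftrightarrow(\forall x)\psi_{z*w}(x)$.

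It then remains to derive, in $(\IPOR)^*$, the formula $(\forall w)\big(A_w\rightarrow(\forall x)\psi_{z*w}(x)\big)$. Fixing $w$ and assuming $A_w$, I would apply the induction instance $\PIND\big(\fone(x)\vee T(z*w)\big)$: this is a legitimate axiom of $(\IPOR)^*$ because $\fone(x)$ is an $\NP$-predicate and the parameter $\ftwo:=T(z*w)$ contains the semilattice variables $z,w$ but not the induction variable $x$. Its antecedent follows from $A_w$ using only the meet inequality $T(w)\rightarrow T(z*w)$, and its conclusion is exactly $(\forall x)\psi_{z*w}(x)$. This closes the $\NP$-induction case, and with it the induction.

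The main obstacle — and the only place where real care is needed — is the bookkeeping around the semilattice variable across the two nested implications: the outer $\rightarrow$ of $\PIND$ introduces a fresh $w$ and moves the evaluation point to the meet $z*w$, while the inner $\rightarrow$'s inside $A$ introduce a further fresh point with meet $w*w'$. Collapsing these back to the single parameters $T(w)$ and $T(z*w)$ is exactly what forces the use of $\neg T(\TTop)$ and $T(s*\TTop)\leftrightarrow T(s)$ (to discharge the spurious inner point by instantiating at $\TTop$) and of the meet inequalities (to push $T$ through $*$ in the direction actually needed). The one conceptual point to check is that the induction formula produced, $\fone(x)\vee T(z*w)$, is a genuine instance of the extended schema defining $(\IPOR)^*$ — i.e.\ that its parameter part $T(z*w)$ does not capture the induction variable $x$ — since this is precisely the feature that $(\IPOR)^*$ was introduced to license and that plain $\IPOR$, whose $\NP$-induction is restricted to bounded formulas, cannot supply.
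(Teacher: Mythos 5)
Your proposal is correct and follows exactly the route the paper intends (the paper states this proposition without proof, but the preceding remark — that $z\Vdash\PIND(\fone)$ has the shape of an instance of the extended schema, which is precisely what $(\IPOR)^*$ adds — together with Lemma~\ref{lemma:NPind} for all non-induction cases, is the same two-part argument you give). Your treatment is in fact slightly more careful than the paper's gloss: the paper asserts that $z\Vdash\PIND(\fone)$ ``is of the form'' $\PIND(\fone(x)\vee T(z))$, whereas you correctly note that the $\rightarrow$- and $\forall$-clauses of $\Vdash$ introduce a fresh semilattice variable $w$ and the meet $z*w$, and you supply the needed reductions via property~(i), $\neg T(\TTop)$, and the meet inequalities to land on a genuine instance $\PIND(\fone(x)\vee T(z*w))$ of the $(\IPOR)^*$ schema.
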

Finally, we  extend realizability 
to $(\IPOR)^*$ by constructing a realizer for 
$\PIND(\fone(x) \vee \ftwo)$.

\begin{lemma}\label{lemma:real}
Let $\fone(x) : (\exists y\preceq \termOne)
\termTwo=\term{0}$ and $\ftwo$
be any formula not containing free occurrences
of $x$.
Then, there exist terms $\termO$
such that:
$$
\vdash_{\IPOR} \termO \realize \PIND(\fone(x)
\vee \ftwo).
$$
So, by Theorem~\ref{thm:realSoundness},
we obtain that for any 
$\Sigma^b_1$-formula $\fone$
and formula $\ftwo$, with $x\not\in \FV(\fone)$,
$$
\vdash_{\IPOR} \PIND(\fone(x) \vee \ftwo).
$$
\end{lemma}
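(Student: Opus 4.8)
Abbreviate $\psi(x) \df \fone(x) \vee \ftwo$, so that the formula to be realized is the implication $(A \wedge B) \to C$ with $A \df \psi(\epsilon)$, $B \df (\forall x)(\psi(x) \to \psi(x\term{0})) \wedge (\forall x)(\psi(x)\to\psi(x\term{1}))$ and $C \df (\forall x)\psi(x)$. By the implication clause of Definition~\ref{df:realizability}, producing $\termO$ with $\vdash_{\IPOR}\termO \realize \PIND(\fone(x)\vee\ftwo)$ splits into two tasks: \emph{(I)} build a term sending any realizers $\varO \realize A$ and $\varT \realize B$ to a realizer of $C$, and \emph{(II)} prove the truth conjunct $(A\wedge B)\to C$ inside $\IPOR$. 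The plan is to produce the conclusion-realizer by bounded recursion on notation, iterating along the bits of $x$ the step-realizers carried by $\varT$, and then to extract (II) from the very same construction.

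Unfolding the clauses, a realizer of $\psi(x)$ is a triple $(z,\varO,\varT)$ in which $z=\term{0}$ flags the left disjunct, $\varO$ being then a witnessing string $\preceq\termOne$ for $\fone(x)=(\exists y\preceq\termOne)\,\termTwo=\term{0}$, while $z\neq\term{0}$ flags the right disjunct with $\varT \realize \ftwo$. Encoding such triples as single strings (via the pairing available in $\PORl$), I would set $R(\epsilon)\df\varO$ and $R(x'\term{b})\df\varT_{\term{b}}(x')\,(R(x'))$, where $\varT_{\term{0}},\varT_{\term{1}}$ are the two components of $\varT\realize B$. The crucial design choice is that, the moment a step first returns a right-disjunct triple, $R$ \emph{latches} its $\ftwo$-realizer into a fixed component $v^{*}$ and thereafter copies the pair $(z,v^{*})$ unchanged at every later position. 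Since the tag has constant size, a left-disjunct witness is bounded by $\termOne$, and the latched $v^{*}$ never grows, the iterate admits a polynomial length bound $\term{K}$ built from $\termOne$ and the sizes of $\varO,\varT$; hence $R$ is genuinely definable by $\term{Rec}$ and lives in $\PORl$.

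The delicate point, which I expect to be the main obstacle, is that the formula $R(x)\realize\psi(x)$ is \emph{not} an $\NP$-predicate: its right disjunct contains the arbitrary formula $\ftwo$, so one cannot verify $(\forall x)(R(x)\realize\psi(x))$ by the $\NP$-induction schema of $\IPOR$ applied to it directly --- and this is exactly why $\PIND(\fone(x)\vee\ftwo)$ is not itself an axiom of $\IPOR$. The latching design is what circumvents this. First, the purely combinatorial invariant $N(x)\df (z(x)=\term{0}\wedge w(x)\text{ witnesses }\fone(x))\vee z(x)\neq\term{0}$ \emph{is} an $\NP$-predicate (its $\fone$-part is decidable, and the $\ftwo$-part has been replaced by the mere flag $z(x)\neq\term{0}$), and its preservation along $R$ follows pointwise from the functional part of $\varT\realize B$ together with the defining equations of $\term{Rec}$; hence $N$ is provable by the $\NP$-induction of $\IPOR$. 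Second, whenever $z(x)\neq\term{0}$, a subword search over the polynomially many prefixes $x_0\subseteq x$ locates the position at which the right disjunct was first entered; there the step-realizer's guarantee forces $v^{*}\realize\ftwo$, and as $v^{*}$ is constant afterwards this yields $\ftwo$, hence $\psi(x)$. Combining $N(x)$ with this prefix analysis gives $R(x)\realize\psi(x)$ for every $x$, settling task (I), while reading off only the truth content of each disjunct in the same case split settles task (II); a secondary, routine difficulty is verifying the length bounds that keep the whole construction inside $\PORl$.

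Finally, abstracting $R$ over the premise-realizers produces the required $\termO$, so that $\vdash_{\IPOR}\termO\realize\PIND(\fone(x)\vee\ftwo)$; Theorem~\ref{thm:realSoundness} then returns $\vdash_{\IPOR}\PIND(\fone(x)\vee\ftwo)$ for every $\Sigma^b_1$-formula $\fone$ and every $\ftwo$ with $x\notin\FV(\ftwo)$. This realizes the extra axioms of $(\IPOR)^{*}$ inside $\IPOR$, and thereby completes the $\forall\NP$-conservativity of $\IPOR+\EM$ over $\IPOR$.
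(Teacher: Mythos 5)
The paper states this lemma without proof, so your attempt can only be measured against the intended Cook--Urquhart-style argument. Your construction of the functional part of the realizer is essentially that argument: iterate the step-realizers by $\term{Rec}$ along the bits of $x$, keep only the $\NP$-part of the state (the flag $z$ and the bounded witness $w\preceq\termOne$) as the recursion value, verify the invariant ``$z=\term{0}$ and $w$ witnesses $\fone(x)$, or $z\neq\term{0}$'' by the $\NP$-induction available in $\IPOR$, and recover a realizer of $\ftwo$ from the first prefix at which the flag flips. That part is sound in outline. Two points, however, need attention.

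First, a type-level slip: you propose to encode the whole realizer triple of $\fone(x)\vee\ftwo$ --- including the realizer of $\ftwo$ --- as a single string and iterate it with $\term{Rec}$. Since $\ftwo$ is an arbitrary formula, Definition~\ref{df:realizability} assigns to it realizers that are tuples of possibly higher-type terms (functionals for implications and universals), which cannot be carried in the type-$s$ recursion at all, bounded or not; the paper also provides no string pairing in $\PORl$. The recursion must be confined to the flag and the bounded witness, and the $\ftwo$-realizer must be produced \emph{outside} the recursion, by locating the switching prefix $x_0\term{b}\subseteq x$ and applying the step-realizer $\varT_{\term{b}}(x_0)$ to the purely string-valued left-disjunct realizer at $x_0$. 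Your third paragraph in fact describes this recomputation, so the idea is present, but the construction as first set up does not typecheck and the two descriptions need to be reconciled.

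Second, and more seriously: under Definition~\ref{df:realizability} the clause for implication carries the truth conjunct, so $\term{t}\realize\PIND(\fone(x)\vee\ftwo)$ literally contains $\PIND(\fone(x)\vee\ftwo)$ itself as a conjunct. You isolate this as task (II) but claim it is ``settled by reading off the truth content of each disjunct in the same case split''. That case split is on the flag $z(x)$, which is computed from the premise-realizers; in the truth conjunct only the \emph{truth} of the $\PIND$-premises is assumed, no realizers are available, and the truth of an arbitrary $\ftwo$ does not yield a realizer of $\ftwo$. Since $\PIND(\fone(x)\vee\ftwo)$ is exactly the schema that $\IPOR$ lacks --- this is the whole reason $(\IPOR)^*$ is introduced --- its derivability in $\IPOR$ is the genuinely delicate content of the lemma and cannot be dispatched in a clause; it needs its own argument (or a variant of the realizability clause, as in Cook and Urquhart, where the truth of the antecedent is placed in the hypothesis of the functional part rather than asserted as a separate conjunct). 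As written, this step is a gap.
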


\begin{prop}
For any $\Sigma^b_1$-formula $\fone$ and $\ftwo$
with $x\not\in\FV(\fone)$, $\vdash_{\IPOR} \PIND(\fone(x)\vee \ftwo)$.
\end{prop}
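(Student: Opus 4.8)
The plan is to reduce the general $\Sigma^b_1$ case to the single–bounded–existential ($\NP$-predicate) case already settled by Lemma~\ref{lemma:real}, and then to transport the conclusion along a provable equivalence inside $\IPOR$; this is precisely the passage sketched in the concluding sentence of Lemma~\ref{lemma:real}, and the only genuine content is the step from $\NP$-predicates to arbitrary $\Sigma^b_1$-formulas.

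First I would recall, exactly as in the proof of Theorem~\ref{thm:IRStoIPOR}, that every $\Sigma^b_1$-formula $\fone = (\exists x_1 \preceq t_1)\cdots(\exists x_n \preceq t_n)\fthree$, with $\fthree$ a $\Sigma^b_0$-formula, is $\IPOR$-provably equivalent to $(\exists x_1 \preceq \term{t}_1)\cdots(\exists x_n \preceq \term{t}_n)\,\term{t}_\fthree = \term{0}$, where $\term{t}_\fthree$ is the term furnished by Proposition~\ref{prop:Sigmab0}. Using a bounded string-pairing together with the G\"odel-$\beta$-style coding already employed in Section~\ref{sec:PORtoRS} (the decoding machinery behind the proof of Theorem~\ref{thm:PORtoRS}), I would contract these $n$ bounded existentials into a single one, producing an $\NP$-predicate $\fone'(x) : (\exists w \preceq \term{t}')\,\term{t}'' = \term{0}$ with $\vdash_{\IPOR} \fone(x) \leftrightarrow \fone'(x)$. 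Since this equivalence has $x$ free, it is closed under substituting $\epsilon$, $x\term{0}$, $x\term{1}$ for $x$, hence holds at each position at which the induction formula occurs in $\PIND$.

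Second, I would apply Lemma~\ref{lemma:real} to the $\NP$-predicate $\fone'$ and the given side formula $\ftwo$ (for which $x \notin \FV(\ftwo)$), obtaining $\vdash_{\IPOR} \PIND(\fone'(x) \vee \ftwo)$. Third, I would transport this back to $\fone$: from $\vdash_{\IPOR} \fone(t) \leftrightarrow \fone'(t)$ for $t \in \{\epsilon, x, x\term{0}, x\term{1}\}$ we obtain $\vdash_{\IPOR} (\fone(t) \vee \ftwo) \leftrightarrow (\fone'(t) \vee \ftwo)$, and since $\PIND(\cdot)$ is built compositionally from its argument at exactly these positions, $\vdash_{\IPOR} \PIND(\fone(x)\vee\ftwo) \leftrightarrow \PIND(\fone'(x)\vee\ftwo)$; combined with the second step this yields $\vdash_{\IPOR} \PIND(\fone(x)\vee\ftwo)$. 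The main obstacle is the contraction step: one must verify that pairing the witnesses $x_1,\dots,x_n$ into a single string $w$ keeps the decoded matrix $\term{t}'' = \term{0}$ a genuine $\Sigma^b_0$ condition and the bound $\term{t}'$ a genuine term of $\rl$ in the correct first-order open variables, so that $\fone'$ really is an $\NP$-predicate to which Lemma~\ref{lemma:real} applies. This is routine given the coding already available, but it is where all the real work sits.
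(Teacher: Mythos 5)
Your proposal is correct and follows essentially the route the paper intends: the paper states this proposition as an immediate consequence of Lemma~\ref{lemma:real} (realizers for $\PIND$ of an $\NP$-predicate disjoined with an arbitrary side formula) plus the soundness of realizability (Theorem~\ref{thm:realSoundness}), with the passage from a single bounded existential to a general $\Sigma^b_1$-formula handled, exactly as you do, by the $\IPOR$-provable equivalence from Proposition~\ref{prop:Sigmab0} and the witness-pairing already used for Theorem~\ref{thm:IRStoIPOR}. The only difference is that you make explicit the contraction of the $n$ bounded existentials and the transport of $\PIND$ along the resulting equivalence, steps the paper leaves implicit.
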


\begin{cor}[$\forall \NP$-Conservativity of 
$\IPOR+\EM$ over $\IPOR$]
Let $\fone$ be a $\Sigma^b_1$-formula,
if $\vdash_{\IPOR+\EM}(\forall x)(\exists y\preceq \termOne) \fone(x,y)$,
then $\vdash_{\IPOR}(\forall x)
(\exists y\preceq \termOne)\fone(x,y)$.
\end{cor}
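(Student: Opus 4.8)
The plan is to compose two interpretations so as to peel off the two nonconstructive ingredients one at a time: first reduce excluded middle to Markov's principle by a double-negation argument, then absorb Markov's principle through the forcing translation $\Vdash$ landing inside $(\IPOR)^*$, and finally eliminate the forcing by specializing its parameter to the top code $\TTop$ and descending from $(\IPOR)^*$ back to $\IPOR$. The point of routing through $(\IPOR)^*$ is that the forcing image of $\NP$-induction is \emph{not} an $\NP$-induction instance of $\IPOR$ (it mentions the unbounded predicate $T$), so the image of an $\IPOR+\Markov$ proof naturally lives in $(\IPOR)^*$, and only afterwards is $(\IPOR)^*$ collapsed back onto $\IPOR$.

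First I would strip the outer universal quantifier, treating $x$ as a free eigenvariable, so that $\vdash_{\IPOR+\EM}(\exists y\preceq\termOne)\fone(x,y)$. The matrix $(\exists y\preceq\termOne)\fone(x,y)$ is again a $\Sigma^b_1$-formula, since prefixing a bounded existential to a $\Sigma^b_1$-formula stays $\Sigma^b_1$; hence Proposition~\ref{prop:EMtoMarkov} applies and yields $\vdash_{\IPOR+\Markov}(\exists y\preceq\termOne)\fone(x,y)$, and re-generalizing on $x$ gives $\vdash_{\IPOR+\Markov}(\forall x)(\exists y\preceq\termOne)\fone(x,y)$. Next I would run the forcing translation on this derivation. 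Its logical and equational part is preserved inside $\IPOR$ by Lemma~\ref{lemma:NPind}; each $\NP$-induction instance is sent by $\Vdash$ to a formula of the shape $\PIND(\psi\vee T(w))$ with $\psi$ an $\NP$-predicate, which is exactly a licensed axiom of $(\IPOR)^*$; and Markov's principle is validated by the Markov corollary following Lemma~\ref{lemma:Sigmab0IPOR}. Putting these together for a fresh variable $w$ gives $\vdash_{(\IPOR)^*} w\Vdash(\forall x)(\exists y\preceq\termOne)\fone(x,y)$.

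Finally I would collapse the forcing and descend. The top code satisfies $\vdash_{\IPOR}\neg T(\TTop)$, because $\term{app}(\TTop,y)=\term{1}$ forces $\term{B}(\term{app}(\TTop,y))=\term{1}$, so $T(\TTop)$ reduces to $(\exists y)(\term{1}=\term{0})$, which is $\bot$. Substituting $w:=\TTop$ and using the equivalence $\vdash_{\IPOR}(\TTop\Vdash\chi)\leftrightarrow(\chi\vee T(\TTop))\leftrightarrow\chi$ for the $\Sigma^b_1$-matrix $\chi=(\exists y\preceq\termOne)\fone(x,y)$ (property (i) established after Lemma~\ref{lemma:Sigmab0IPOR}), which commutes through the outer $\forall x$, I obtain $\vdash_{(\IPOR)^*}(\forall x)(\exists y\preceq\termOne)\fone(x,y)$. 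By Lemma~\ref{lemma:real} together with realizability soundness (Theorem~\ref{thm:realSoundness}), every extra axiom $\PIND(\psi\vee\ftwo)$ of $(\IPOR)^*$ is already a theorem of $\IPOR$, so each $(\IPOR)^*$-derivation rewrites into an $\IPOR$-derivation by replacing the finitely many uses of those instances with their $\IPOR$-proofs, yielding the claim $\vdash_{\IPOR}(\forall x)(\exists y\preceq\termOne)\fone(x,y)$. The hard part is the middle step: verifying that forcing is sound for all of $\IPOR+\Markov$ while provably landing in $(\IPOR)^*$; this is underwritten by the preceding lemmas, so within this corollary the genuinely delicate bookkeeping is making the $\TTop$-specialization commute past the quantifier prefix and confirming that the descent from $(\IPOR)^*$ to $\IPOR$ goes through for the full $\forall\NP$ conclusion.
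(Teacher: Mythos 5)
Your proof is correct and follows essentially the same route as the paper: Proposition~\ref{prop:EMtoMarkov} to trade $\EM$ for $\Markov$, the forcing translation $\Vdash$ landing in $(\IPOR)^*$, collapse of the forcing via property (i), and descent from $(\IPOR)^*$ to $\IPOR$ because every added axiom $\PIND(\fone(x)\vee\ftwo)$ is already $\IPOR$-provable by Lemma~\ref{lemma:real} and Theorem~\ref{thm:realSoundness}. The only difference is cosmetic: you discharge the forcing parameter at the top code $\TTop$, where $T(\TTop)$ is refutable, whereas the paper's proof of Proposition~\ref{prop:IPORMarkov} instantiates it at the code $\term{u}_{\ftwo}$ of the $\Sigma^b_1$-matrix supplied by Lemma~\ref{lemma:Sigmab0IPOR}, so that $\ftwo\vee T(\term{u}_{\ftwo})\leftrightarrow\ftwo$; both instantiations are standard and equally valid.
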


\paragraph{Concluding the Proof.}
We conclude our proof establishing Proposition~\ref{prop:IPORMarkov}.

\begin{prop}\label{prop:IPORMarkov}
Let $(\forall x)(\exists y\preceq \termOne) \fone(x,y)$
be a closed term of $\IPOR+\Markov$,
where $\fone$ is a $\Sigma^b_1$-formula.
Then, there exists a closed term of $\PORl$ 
$\termOne:s \arrowT s$, such that:
$$
\vdash_{\IPOR} (\forall x)\fone (x,\termOne x).
$$
\end{prop}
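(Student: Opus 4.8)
Let $\psi \df (\forall x)(\exists y \preceq \termOne)\fone(x,y)$ denote the given closed theorem of $\IPOR+\Markov$, with $\fone$ a $\Sigma^b_1$-formula. The plan is to eliminate $\Markov$ through the $\Vdash$-interpretation, reaching a proof inside $(\IPOR)^*$, and then to extract the witnessing term from that proof by the realizability of Definition~\ref{df:realizability}, extended to $(\IPOR)^*$. I would deliberately \emph{not} route the argument through the $\forall\NP$-conservativity corollary, since that statement is itself a consequence of the present one.

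First, I would establish that the $\Vdash$-interpretation carries the whole of $\IPOR+\Markov$ into $(\IPOR)^*$: for every formula $\fthree$, if $\vdash_{\IPOR+\Markov}\fthree$ then $\vdash_{(\IPOR)^*} z \Vdash \fthree$ for a fresh $z$. The proof is by induction on the derivation. The purely logical and equational steps, and every axiom not involving induction, are covered by Lemma~\ref{lemma:NPind}; the translation of an $\NP$-induction axiom $\PIND(\fone)$ is $\PIND(\fone \vee T(z))$, which is precisely an instance of the extra schema defining $(\IPOR)^*$; and each instance of $\Markov$ is validated by $\Vdash$, using the corollary on Markov's Principle together with the semilattice fact $T(v) \rightarrow T(z*v)$ recorded above. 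Applying this to $\psi$ gives $\vdash_{(\IPOR)^*} z \Vdash \psi$.

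Next, I would collapse the interpretation on the essentially $\Sigma^b_1$ goal. As $\Vdash$ commutes with $\forall$, and the matrix $(\exists y \preceq \termOne)\fone(x,y)$ is an extended $\Sigma^b_1$-formula, hence $\IPOR$-equivalent to a $\Sigma^b_1$-formula, property (i) following Lemma~\ref{lemma:Sigmab0IPOR} yields $\vdash_{(\IPOR)^*} (\forall x)\big((\exists y \preceq \termOne)\fone(x,y) \vee T(z)\big)$. Instantiating $z \df \TTop$ and using $\vdash_{\IPOR} \neg T(\TTop)$ — which holds because $\term{app}(\TTop,w) = \term{1}$ for every $w$, forcing $\term{B}(\term{app}(\TTop,w)) = \term{1}$, so that $T(\TTop)$ reduces to $(\exists w)\bot$ — I would obtain
$$
\vdash_{(\IPOR)^*} (\forall x)(\exists y \preceq \termOne)\fone(x,y).
$$
Finally, I would extract the term exactly as in Corollary~\ref{cor:IPORSigma}, but invoking the extension of realizability to $(\IPOR)^*$: Theorem~\ref{thm:realCompleteness} together with Lemma~\ref{lemma:real}, which supplies an $\IPOR$-provable $\PORl$-realizer for each instance of the induction schema of $(\IPOR)^*$, produces terms $\termO = \term{g}, w$ with $\vdash_{\IPOR} \termO \realize \psi$. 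Unfolding Definition~\ref{df:realizability}, this reads $(\forall x)\big(w(x) \realize \fone(x,\term{g}x)\big)$ once the bound is discharged, so Theorem~\ref{thm:realSoundness} gives $\vdash_{\IPOR}(\forall x)\fone(x,\term{g}x)$, which is the claim with the closed $\PORl$-term $\term{g} : s \arrowT s$ in the role of $\termOne$.

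The hard part will be the first step, namely the soundness of the $\Vdash$-interpretation on $\Markov$. The subtle point is that the unbounded quantifier concealed in $T(z)$ is exactly what makes each $\Markov$-instance valid under $\Vdash$, while remaining inert for the final $\Sigma^b_1$ target, where it is discharged by substituting $z \df \TTop$. A secondary check is that the realizer obtained in the last step is provable in $\IPOR$ rather than merely in $(\IPOR)^*$; this is guaranteed because the only new axioms of $(\IPOR)^*$ are its PIND instances, whose realizers are furnished, $\IPOR$-provably, by Lemma~\ref{lemma:real}.
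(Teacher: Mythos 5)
Your proposal is correct and follows essentially the same route as the paper: push the $\IPOR+\Markov$ derivation through the $\Vdash$-interpretation into $(\IPOR)^*$, collapse $z\Vdash(\exists y\preceq \termOne)\fone$ back to the formula itself via Lemma~\ref{lemma:Sigmab0IPOR}, and extract the witness by realizability extended to $(\IPOR)^*$ through Lemma~\ref{lemma:real}. The only (cosmetic) divergence is in the collapse step, where you instantiate $z:=\TTop$ and prove $\neg T(\TTop)$, while the paper instantiates $z:=\term{u}_\ftwo$ so that $T(z)\leftrightarrow\ftwo$; both are sound.
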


\begin{proof}
If $\vdash_{\IPOR+\Markov} (\forall x)(\exists y)\fone(x,y)$, then by Parikh's Proposition~\ref{prop:Parikh}, 
also 
$\vdash_{\IPOR+\Markov} (\exists y\preceq \termOne)\fone(x,y)$.
Moreover, 
$\vdash_{(\IPOR)^*}z\Vdash (\exists y \preceq \termOne) \fone(x,y)$.
Then, let us consider $\ftwo = (\exists y\preceq \termOne)
\fone(x,y)$.
By taking $\termThree=\termTwo_\ftwo$,
using Lemma~\ref{lemma:Sigmab0IPOR},
we deduce $\vdash_{(\IPOR)^*} \ftwo$
and, thus, by Lemma~\ref{lemma:NPind} and~\ref{lemma:real},
we conclude that there exist $\termO,\termT$
such that $\vdash_{\IPOR} \termO,\termT\realize \ftwo$,
which implies $\vdash_{\IPOR} \fone(x,\termO x)$, and so
$\vdash_{\IPOR}(\forall x)(\fone(x),\termO x)$.
\end{proof}
\noindent
So, by Proposition~\ref{prop:EMtoMarkov},
if $\vdash_{\IPOR+\EM} (\forall x)(\exists y\preceq \term{t}) \fone(x,y)$,
being $\fone$ a closed $\Sigma^b_1$-formula,
then there is a closed term of $\POR$ 
$\termOne : s \arrowT s$,
such that $\vdash_{\IPOR} (\forall x)\fone(x,\termOne x)$.
Finally, we conclude the desired
Corollary~\ref{cor:RStoPOR} for classical $\RS$
arguing as above.

\begin{cor}\label{cor:RStoPOR}
Let $\RS \vdash (\forall x)(\exists y\preceq t) \fone (x,y)$, where $\fone$
is a $\Sigma^b_1$-formula with only
$x$ and $y$ free.
For any function $f:\Ss \times \Os \to \Ss$,
if $(\forall x)(\exists y\preceq t)\fone (x,y)$
represents $f$ so that:
\begin{enumerate}
\itemsep0em
\item $\RS \vdash (\forall x)(\exists !y)\fone(x,y)$
\item $\model{\fone(\ooverline{\str_1},\ooverline{\str_2})} = \{\eta \ | \ f(\str_1,\eta)=
\str_2\}$,
\end{enumerate}
then $f\in\POR$.
\end{cor}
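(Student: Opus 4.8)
The plan is to mirror the proof of Corollary~\ref{cor:Sigmab1toPOR}, replacing its direct appeal to intuitionistic derivability with the classical-to-intuitionistic transfer machinery developed in this subsection. Since classical $\RS$ is just $\IRS$ augmented with the law of excluded middle, Theorem~\ref{thm:IRStoIPOR} lets me regard the hypothesis $\RS \vdash (\forall x)(\exists y\preceq t)\fone(x,y)$ as a derivation in $\IPOR+\EM$: every theorem of $\IRS$ is a theorem of $\IPOR$, and adjoining $\EM$ on both sides preserves this. So the first step yields $\vdash_{\IPOR+\EM} (\forall x)(\exists y\preceq t)\fone(x,y)$.

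Next I would push this down to $\IPOR$ itself. As $(\exists y\preceq t)\fone(x,y)$ is again a $\Sigma^b_1$-formula (a bounded existential over a $\Sigma^b_1$-formula), Proposition~\ref{prop:EMtoMarkov} gives $\vdash_{\IPOR+\Markov}(\forall x)(\exists y\preceq t)\fone(x,y)$, and then Proposition~\ref{prop:IPORMarkov} extracts a closed term $\term{g}:s\arrowT s$ of $\PORl$ such that $\vdash_{\IPOR}(\forall x)\fone(x,\term{g}x)$. This is precisely the role that Corollary~\ref{cor:IPORSigma} played in the intuitionistic case, so from here the argument runs exactly as in the proof of Corollary~\ref{cor:Sigmab1toPOR}.

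From the extracted term I would recover a genuine oracle function and identify it with $f$. By Theorem~\ref{thm:PRepr}.2 there is a $g\in\POR$ provably represented by $\term{g}$, i.e.~$g(\str_1,\eta)=\str_2 \Leftrightarrow \TSet_\eta\vdash_{\PORl}\term{g}\,\ooverline{\str_1}=\ooverline{\str_2}$, and Proposition~\ref{prop:PORltoIPOR} lifts this representation to $\IPOR$. Combining $\vdash_{\IPOR}(\forall x)\fone(x,\term{g}x)$ with $\TSet_\eta\vdash_{\IPOR}\term{g}\,\ooverline{\str_1}=\ooverline{\str_2}$ gives $\TSet_\eta\vdash_{\IPOR}\fone(\ooverline{\str_1},\ooverline{\str_2})$, and exactly as in the proof of Corollary~\ref{cor:Sigmab1toPOR}, Lemma~\ref{lemma:Sigmab0IRS} ties derivability to the quantitative semantics, so that $\TSet_\eta\vdash_{\IPOR}\fone(\ooverline{\str_1},\ooverline{\str_2})$ iff $\eta\in\model{\fone(\ooverline{\str_1},\ooverline{\str_2})}$. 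The semantic hypothesis~(2), $\model{\fone(\ooverline{\str_1},\ooverline{\str_2})}=\{\eta \mid f(\str_1,\eta)=\str_2\}$, together with the uniqueness hypothesis~(1) (which guarantees the witness is unambiguous), then forces $g=f$ as functions $\Ss\times\Os\to\Ss$, whence $f\in\POR$.

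I expect the only delicate point to be the bookkeeping around the leading universal quantifier when invoking Proposition~\ref{prop:EMtoMarkov} (stated for $\Sigma^b_1$-formulas): provability of $(\forall x)\Phi$ must be traded for provability of $\Phi$ with $x$ free, as is standard in these theories; the bounded hypothesis already supplies the witnessing term $t$, so no appeal to Proposition~\ref{prop:Parikh} is needed here. Everything genuinely hard — the conservativity of $\EM$ over intuitionistic reasoning for $\forall\NP$-statements and the existence of a $\PORl$-realizer — has already been isolated in Propositions~\ref{prop:EMtoMarkov} and~\ref{prop:IPORMarkov}, so the present corollary is essentially an assembly of those results along the template of Corollary~\ref{cor:Sigmab1toPOR}.
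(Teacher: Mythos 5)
Your proposal is correct and follows exactly the route the paper intends: the paper itself only says the corollary is concluded ``arguing as above,'' meaning precisely your assembly of Theorem~\ref{thm:IRStoIPOR}, Proposition~\ref{prop:EMtoMarkov}, and Proposition~\ref{prop:IPORMarkov} to extract the closed $\PORl$-term, followed by the argument of Corollary~\ref{cor:Sigmab1toPOR} via Theorem~\ref{thm:PRepr}, Proposition~\ref{prop:PORltoIPOR}, and Lemma~\ref{lemma:Sigmab0IRS}. Your remark on handling the outer universal quantifier and the already-bounded existential is a reasonable piece of bookkeeping that the paper glosses over.
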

\noindent
Now, putting Theorem~\ref{thm:PORtoRS} and Corollary~\ref{cor:RStoPOR} together,
we conclude that Theorem~\ref{thm:PORandRS} holds 
and that $\RS$ provides an \emph{arithmetical} characterization of functions 
in $\POR$.

\section{Relating $\POR$ and Poly-Time PTMs}\label{sec:PORandPTM}

Theorem~\ref{thm:PORandRS} is still not enough
to characterize probabilistic classes, 
which are defined in terms of functions computed by PTMs, 
and, as  observed, there is a crucial difference
between the ways in which these machines and oracle functions
access randomness.
So, our next goal consists in filling this gap,
by relating these two classes in a precise way.\footnote{Actually, 
this proof is particularly convoluted so, for simplicity's sake,
we present here just its skeleton.
The interested reader can however found further details in~\cite{appendix:TaskC},
and the full proof was presented in~\cite{Davoli}.}

\subsection{Preliminaries}

We start by defining (or re-defining)
the classes of functions (over strings) computed by poly-time
PTMs and
of functions computed by poly-time stream machines,
that is TMs with an extra oracle tape.

\subsubsection{The Class $\RFP$}
We start by (re-)defining the class of functions computed
by poly-time PTMs.\footnote{Clearly, there is
a strong affinity with the standard definition of random functions~\cite{Santos69} 
presented for example in Section~\ref{sec:arithStruc}.
Here, pseudo-distributions and functions are over strings rather than
numbers. 
Furthermore, we are now considering machines
explicitly associated with a (polynomial-)time resource bound.}

\begin{defn}[Class $\RFP$]
Let $\dist(\Ss)$ denote the set of 
functions $f:\Ss \to [0,1]$
such that $\sum_{\str\in \Ss}f (\str)=1$.
The \emph{class $\RFP$} is made of all functions 
$f:\Ss^k \to \dist(\Ss)$ such that,
for some PTM $\PTM$ running in polynomial time,
and every $\str_1,\dots, \str_k,\strT\in \Ss$,
$f(\str_1,\dots, \str_k)(\strT)$ coincides with the
probability that $\PTM(\str_1\sharp \dots \sharp \str_k)\Downarrow
\strT$.
\end{defn}
\noindent
So – similarly to $\langle \PTM\rangle$ by~\cite{Santos69,Gill77} –
the function computed by this machine
associates each possible output with a probability
corresponding to the actual probability that a run of the
machine actually produces that output, and
we need to adapt the notion of 
$\Sigma^b_1$-representability accordingly.

\begin{defn}
A function $f:\Ss^k \to \dist(\Ss)$ is
\emph{$\Sigma^b_1$-representable in $\RS$}
if there is a $\Sigma^b_1$-formula of 
$\rl$  $\fone(x_1,\dots, x_k,y)$,
such that:
\begin{enumerate}
\itemsep0em
\item $\RS \vdash (\forall \vec{x})(\exists !y)\fone(\vec{x},y)$,
\item for all $\str_1,\dots, \str_k,\strT\in \Ss$,
$f(\str_1,\dots, \str_k,\strT)=\mu(\model{\fone(\overline{\str_1},
\dots,\overline{\str_k},\strT)})$.
\end{enumerate}
\end{defn}
\noindent
The central result of this chapter can be re-stated as follows: 

\begin{restatable}{theorem}{thm:RSandRFP}\label{thm:RSandRFP}
For any function $f:\Ss^k \to \dist(\Ss)$, $f$
is $\Sigma^b_1$-representable in $\RS$ when $f\in\RFP$.
\end{restatable}
\noindent
The proof of Theorem~\ref{thm:RSandRFP}
relies on Theorem~\ref{thm:PORandRS}, 
once we relate the function
algebra $\POR$ with the class $\RFP$
by the Lemma~\ref{lemma:PORandRFP} below.

\begin{restatable}{lemma}{lemma:PORandRFP}\label{lemma:PORandRFP}
For any functions $f:\Ss^k\times \Os \to \Ss$ in $\POR$,
there exists $g:\Ss \to \dist(\Ss)$ in $\RFP$ such that
for all $\str_1,\dots, \str_k,\strT \in \Ss$,
$$
\mu(\{\omega \ | \ f(\str_1,\dots, \str_k,\eta)=\strT \})=
g(\str_1,\dots,\str_k,\strT)
$$
and viceversa.
\end{restatable}
\noindent
However, the proof of Lemma~\ref{lemma:PORandRFP}
is convoluted,
as it is based on a chain of language simulations.

\subsubsection{Introducing the Class $\SFP$}
The core idea to relate $\POR$ and $\RFP$ is to introduce
an intermediate class, called $\SFP$.
This is the class of functions computed 
by a poly-time \emph{stream Turing machine} (STM, for short),
where
an STM is a deterministic TM with one extra
(read-only) tape intuitively accounting for
probabilistic choices:
at the beginning 
the extra-tape
is sampled from $\Bool^\Nat$;
then, at each computation step, 
the machine reads one new bit from this tape,
always moving to the right.

\begin{defn}[Class $\SFP$]\label{df:SFP}
The \emph{class $\SFP$} is made of functions
$f: \Ss^k \times \Bool^\Nat \to\Ss$, 
such that there is an STM $\STM$ running in polynomial time
such that 
for any $\str_1,\dots, \str_k \in \Ss$ and
$\omega \in \Bool^\Nat$,\footnote{By a slight abuse of notation,
we use $\omega, \omega', \omega'',\dots$ as meta-varibles for sequences
in $\Bool^\Nat$. Indeed, as said in Section~\ref{sec:CPLcpreliminaries}, the sets $\{0,1\}$ and $\{\zero,\one\}$
are basically equivalent.} 
$f(\str_1,\dots,\str_k,\omega)=\tau$
when
for inputs $\str_1\sharp \dots \sharp \str_k$ and tape
$\omega$, the machine $\STM$  outputs $\strT$.
\end{defn}

\subsection{Relating $\RFP$ and $\SFP$}
The global behavior of STMs and PTMs is similar,
but the former access to randomness in an explicit way:
instead of flipping a coin at each step,
the machine samples a stream of bits once, and
then reads one new bit at each step.
So, to prove the equivalence of the two models,
we pass through the following Proposition~\ref{prop:PTMandSTM}.

\begin{prop}[Equivalence of PTMs and STMs]\label{prop:PTMandSTM}
For any poly-time STM $\STM$, there is a poly-time PTM
$\STM^*$ such that for all strings $\str,\strT \in \Ss$,
$$
\mu(\{\omega \ | \ \STM(\str,\omega)=\strT\}) = \Pr[\STM^*(\str)=\strT],
$$
and viceversa.
\end{prop}

\begin{cor}[Equivalence of $\RFP$ and $\SFP$]\label{cor:RFPandSFP}
For any  $f:\Ss^k\to \dist(\Ss)$  in 
$\RFP$, there is a $g:\Ss^k\times \Bool^\Nat
\to \Ss$ in $\SFP$,
such that for all $\str_1,\dots, \str_k,\strT\in \Ss$,
$$
f(\str_1,\dots, \str_k,\strT) = \mu(\{\omega \ | \ 
g(\str_1,\dots,\str_k,\omega)=\strT\}),
$$
and viceversa.
\end{cor}

\subsection{Relating $\SFP$ and $\POR$}
Finally, we need to 
prove the equivalence between $\POR$ and $\SFP$:
 \begin{center}
 \begin{tikzpicture}[node distance=2cm]
\node at (-3,0) (a) {$\RFP$}; 
\node at (-1.5,0.8) (b1) {\footnotesize{\textcolor{gray}{Cor.~\ref{cor:RFPandSFP}}}};
\node at (0,0) (b) {$\SFP$};
\node at (0,-0.5) (b2) {\footnotesize{\textcolor{gray}{Def.~\ref{df:SFP}}}};
\node at (3,0) (c) {$\POR$};

    \draw[<->,thick] (a) to [bend left=20] (b);
    \draw[<->,thick,dotted,gray] (b) to [bend left=20] (c);

   %
\end{tikzpicture} 
 \end{center}
Moving from PTMs to STMs,
we obtain a machine model 
which accesses randomness in a way 
which is similar to that of functions in $\POR$:
as seen, at the beginning of the computation an oracle
is sampled,
and computation proceeds querying it.
Yet, there are still relevant differences in the way
in which these families of machines treat randomness.
While functions of $\POR$ access an oracle
in the form of a function
$\eta \in \Bool^\Ss$,
the oracle for an STM is a stream of bits
$\omega\in \Bool^\Nat$.
Otherwise said, a function in $\POR$
is of the form $f_{\POR}:\Ss^k \times \Os \to
\Ss$, whereas one in $\SFP$ is 
$f_{\SFP}:\Ss^k\times \Bool^\Nat \to \Ss$.
Then, we cannot compare them \emph{directly}, and
provide an indirect comparison 
in two main steps.

\subsubsection{From $\SFP$ to $\POR$}
First, we show that any function computable by a poly-time
STM is in $\POR$.

\begin{prop}[From $\SFP$ to $\POR$]\label{prop:SFPtoPOR}
For any $f:\Ss^k\times \Bool^\Nat \to\Ss$
in $\SFP$, there is a function
$f^{\star} : \Ss^k \times \Os \to \Ss$ in $\POR$
such that for all $\str_1,\dots,\str_k,\strT \in\Ss$
and $\omega\in \Bool^\Nat$,
$$
\mu(\{\omega \in \Bool^\Nat \ | \ f(\str_1,\dots, \str_k,
\omega) = \strT\}) = 
\mu(\{\eta \in \Os \ | \ 
f^{\star} (\str_1,\dots, \str_k,\eta)=\strT\}).
$$
\end{prop}
\noindent
The fundamental observation is that,
given an input $\str \in\Ss$ and the extra tape
$\omega \in \Bool^\Nat$,
an STM running in polynomial time can access a \emph{finite}
portion of $\omega$ only,
the length of which can be bounded by
some polynomial $p(|\str|)$.
Using this fact, we construct $f^{\star}$
as follows:
\begin{enumerate}
\itemsep0em
\item We introduce the new class $\PTF$,
made of functions 
$f:\Ss^k\times \Ss \to \Ss$ computed by
a \emph{finite stream Turing machine}
(FSTM, for short), 
the extra tape of which is a finite string.
\item 
We define a function $h \in \PTF$
such that for any $f:\Ss\times \Bool^\Nat \to \Ss$
with polynomial bound $p(x)$,
$$
f(n,\omega)=h(x,\omega_{p(|x|)}).
$$

\item We define $h':\Ss \times \Ss \times \Os 
\to \Ss$ such that,
$$
h'(x,y,\eta)=
h(x,y).
$$
By an encoding of FSTMs we show that
$h'\in\POR$. 
Moreover $h'$ can be defined \emph{without}
using the query function,
since the computation of $h'$ never looks at $\eta$.

\item Finally, we define an \emph{extractor function}
$e:\Ss\times \Os \to \Ss \in \POR$,
which mimics the prefix extractor 
$\omega_{p(|x|)}$, having its outputs 
\emph{the same distributions} of all possible $\omega$'s
prefixes,
even though within a different space.\footnote{Recall that $\eta\in \Bool^\Nat$, 
while the second argument of $e$ is in $\Os$.}
This is obtained by exploiting a bijection
$dyad:\Ss\to\Nat$,
ensuring that for each $\omega \in \Bool^\Nat$,
there is an $\eta\in\Bool^\Ss$
such that any prefix of $\omega$ is an output of
$e(y,\eta)$, for some $y$.
Since $\POR$ is closed under composition,
we finally define 
$$
f^\star (x,\eta) \df h'(x,e(x,\eta),\eta).
$$
\end{enumerate}

\subsubsection{From $\POR$ to $\SFP$}
%
In order to simulate functions
of $\POR$ via STMs we observe  not only that
these two models
invoke oracles of different shape,
but also that the former can manipulate such oracles
in a more liberal way: 
\begin{itemize}
\item STMs  query the oracle before each step is 
produced.
By contrast functions of $\POR$ may invoke the query function
$\query(x,\eta)$ freely during computation.
We call this access policy \emph{on demand}.

\item STMs query a new bit of the oracle at
each step of computation, and
cannot access previously observed bits.
We call this access policy \emph{lienear}.
By contrast, functions of $\POR$ can query
the same bits as many times as needed.
\end{itemize}
Consequently, a direct simulation
of $\POR$ via STMs is challenging
even for a basic function like $\query(x,\eta)$.
So, again, we follow an indirect
path:
we pass through a chain of simulations,
dealing with each of these differences separately.
%

\begin{enumerate}
\item First, we translate $\POR$ into
an imperative language $\SIFPra$
inspired by Winskel's IMP~\cite{Winskel},
with the same access policy as $\POR$.
$\SIFPra$ is endowed with assignments,
a $\mathtt{while}$ construct,
and a command $\Flip(e)$, which first evaluates
$e$ to a string $\str$ and then stores the value
$\eta(\str)$ in a register.
The encoding of oracle functions
in $\SIFPra$ is easily obtained by induction
on the function algebra. 

\item Then, we translate $\SIFPra$ into another
imperative language, called $\SIFPla$,
associated with a \emph{linear} policy
of access.
$\SIFPla$ is defined like $\SIFPra$ except for 
$\Flip(e)$, which is replaced by 
the new command $\mathtt{RandBit}()$
generating a random bit and storing it in a register.
A weak simulation from $\SIFPra$
into $\SIFPla$ is defined by progressively constructing
an \emph{associative table}
containing pairs in the form (string, bit) of past
observations.
Each time $\Flip(e)$ is invoked, 
the simulation checks whether a pair $(e,b)$ had
already been observed.
Otherwise increments the table by producing
a new pair $(e,\mathtt{RandBit}())$.
This is by far the most complex step of the whole
simulation.

\item The language $\SIFPla$ can be translated
into STMs.
Observe that the access policy of $\SIFPla$ is
still on-demand:
$\mathtt{RandBit}()$ may be invoked
or not before executing the instruction.
So, we first consider a translation from $\SIFPla$
into a variant of STMs admitting an on-demand
access policy
– that is, a computation step may or may not access
a bit from the extra-tape.
Then, the resulting program is encoded into a regular
STM.
Observe that 
we cannot expect that the machine $\STM^\dagger$
simulating an on-demand machine $\STM$
will produce \emph{the same} output and
oracle. 
%
Rather, as in many other cases, we show that
$\STM^\dagger$ can be defined so that,
for any $\str,\strT\in\Ss$,
the sets 
$\{\omega \ | \ \STM^\dagger (\str,\omega)=\strT\}$
and $\{\omega \ | \ \STM(\str,\omega)=\strT\}$
have the same measure.
\end{enumerate}

 \begin{figure}[h!]\label{fig:PORandSFP}
 \begin{center}
\framebox{
\parbox[t][4cm]{10.6cm}{
\footnotesize{

 \begin{center}
 \begin{tikzpicture}[node distance=2cm]
\node(por) at (-8,0) {$\POR$};
\node(por1) at (-8,-0.5) {$\Ss\times \Os\longrightarrow \Ss$};

\node(sfp) at (0,0) {$\SFP$};
\node(sfp1) at (0,-0.5) {$\Ss\times \Bool^{\Nat}\longrightarrow \Ss$};


\node[black](fsfp) at (-4,0) {finite $\SFP$};
\node[black](fsfp1) at (-4,-0.5) {$\Ss\times \Ss\longrightarrow \Ss$};

\draw[->,thick] (sfp) to node[above]{\tiny$ (x, \omega_{p(|x|)})\mapsfrom (x,\omega)$} (fsfp);
\draw[->,thick] (fsfp) to node[above]{\tiny$(x,\eta)\to (x, e(x,\eta))$} (por);


\node(sifpra) at (-8,-2.5) {\begin{tabular}{c}$\SIFPra$\\ \small imperative \\ \small random access\end{tabular}};

\node(sifpla) at (-3,-2.5) {\begin{tabular}{c}$\SIFPla$\\ \small imperative \\ \small linear access\end{tabular}};

\node(sfpod) at (0,-2.5) {\begin{tabular}{c}$\SFP$\\ \small on-demand\end{tabular}};

\draw[->, thick] (por1) to node[right]{\tiny\begin{tabular}{c}\tiny inductive\\\tiny  encoding\end{tabular}} (sifpra);
\draw[->, thick] (sifpra) to node[above]{\tiny \begin{tabular}{c}\tiny associative\\ \tiny table\end{tabular}} (sifpla);
\draw[->, thick] (sifpla) to (sfpod);
\draw[->, thick] (sfpod) to (sfp1);


\end{tikzpicture} 

 \end{center}
  }}}
\caption{Equivalence between $\POR$ and $\SFP$~\ref{thm:RSandRFP}}
\end{center}
\end{figure}
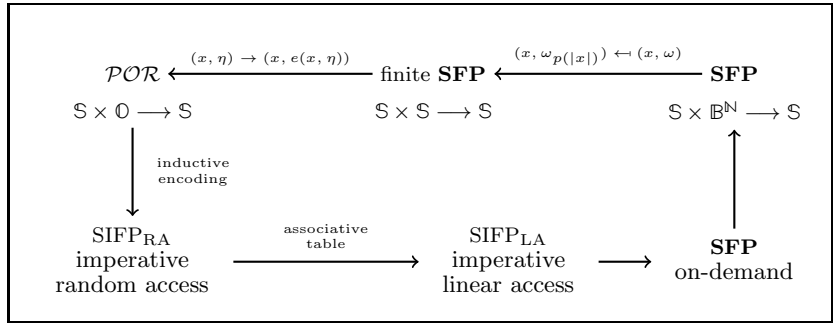

\subsubsection{Concluding the Proof.}
These ingredients are enough to conclude the proof
as outlined in Figure~\ref{fig:sketchRBA},
and to relate poly-time random functions 
and $\Sigma^b_1$-formulas of $\RS$.


 \small
 
 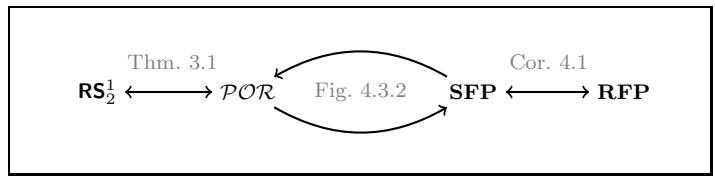
\begin{figure}[h!]\label{fig:sketchRBA}
 \begin{center}
\framebox{
\parbox[t][2cm]{9cm}{
\footnotesize{

 \begin{center}
 \begin{tikzpicture}[node distance=2cm]
 \node at (-3,0.4) (a1) {\textcolor{gray}{Thm.~\ref{thm:PORandRS}}};
\node at (-4,0) (a) {$\RS$}; 
\node at (-2,0) (b) {$\POR$}; 
\node at (2,0.4) (b1) {\textcolor{gray}{Cor.~\ref{cor:RFPandSFP}}};
\node at (1,0) (d) {$\SFP$};
\node at (3,0) (c) {$\RFP$};
\node at (-0.5,0) (g) {\textcolor{gray}{Fig.~\ref{fig:PORandSFP}}};

    \draw[<->,thick] (a) to(b);
    \draw[->,thick] (b) to [bend right=30] (d);
    \draw[->,thick] (d) to [bend right=30] (b);
    \draw[<->,thick] (c) to (d);

   %
\end{tikzpicture} 

 \end{center}
  }}}
\caption{Proof Sketch of Theorem~\ref{thm:RSandRFP}}
\end{center}
\end{figure}  

\normalsize

\bibliographystyle{plain}
\bibliography{bib}

\begin{thebibliography}{10}

\bibitem{Buss86}
S.R. Buss.
\newblock {\em Bounded Arithmetic}.
\newblock PhD thesis, Princeton University, 1986.

\bibitem{Buss98}
S.R. Buss.
\newblock {First-Order Proof Theory of Arithmetic}.
\newblock In Elsavier, editor, {\em Handbook of Proof Theory}. Buss, S.R.,
  1998.

\bibitem{CookUrquhart}
S.~Cook and A.~Urquhart.
\newblock {Functional Interpretations of Feasibly Constructive Arithmetic}.
\newblock {\em Annals of Pure and Applied Logic}, 63:103--200, 1993.

\bibitem{CoquandHofmann}
T.~Coquand and M.~Hofmann.
\newblock {A New Method for Establishing Conservativity of Classical Systems
  over Their Intuitionistic Version}.
\newblock {\em Mathematical Structures in Computer Science}, 9(4):323--333,
  1999.

\bibitem{Davoli}
D.~Davoli.
\newblock {Bounded Arithmetic and Randomized Computation}.
\newblock Master Thesis, \url{http://amslaurea.unibo.it/26234/}, 2022.

\bibitem{Ferreira88}
F.~Ferreira.
\newblock Polyonmial time computable arithmetic and conservative extesions.
\newblock Ph.D. Dissertation, December 1988.

\bibitem{Ferreira90}
F.~Ferreira.
\newblock Polynomial-time computable arithmetic.
\newblock In W.~Sieg, editor, {\em Logic and Computation}, volume 106 of {\em
  Contemporary Mathematics}, pages 137--156. AMS, 1990.

\bibitem{FerreiraOitavem}
G.~Ferreira and I.~Oitavem.
\newblock {An Interpretation of $S^1_2$ in $\Sigma^b_1$-NIA}.
\newblock {\em Portugaliae Mathematica}, 63:427--450, 2006.

\bibitem{Gill77}
J.T. Gill.
\newblock Computational complexity of probabilistic \text{T}uring machines.
\newblock {\em J. Comput.}, 6(4):675--695, 1977.

\bibitem{Parikh}
R.J. Parikh.
\newblock {Language Generating Devices}.
\newblock {\em Quart. Prog. Rep.}, 60:199--212, 1961.

\bibitem{Santos69}
E.S. Santos.
\newblock Probabilistic \text{T}uring machines and computability.
\newblock {\em Proceedings of the American Mathematical Society},
  22(3):704--710, 1969.

\bibitem{Winskel}
G.~Winskel.
\newblock {\em {The Formal Semantics of Programming Languages}}.
\newblock The MIT Press, 1993.

\end{thebibliography}

\newpage
\appendix

\section{Proofs from Section~\ref{sec:PORandRS}}\label{appendix:TaskA}

\subsection{Further details on Theorem~\ref{thm:PORtoRS}}

\begin{proof}[Proof of Theorem~\ref{thm:PORtoRS}]
The proof is by induction on the structure 
of $f\in\POR$.

\emph{Base case.}
If $f$ is a basic function, then there are five possible
sub-cases:

\begin{itemize}

\item $f=E$ is $\Sigma^b_1$-represented in $\RS$
by the formula
$F_E(x,y) : x=x \wedge y=\epsilon$.
The existence condition is derived as follows:

\footnotesize
\begin{prooftree}
\AxiomC{}
\RightLabel{$Ax$}
\UnaryInfC{$z=z \vdash (\exists y)(z=z \wedge y=\epsilon),
z=z$}
\RightLabel{$Ref$}
\UnaryInfC{$\vdash(\exists y)(z=z \wedge y=\epsilon),
z=z$}
\AxiomC{}
\RightLabel{$Ax$}
\UnaryInfC{$\epsilon=\epsilon \vdash (\exists y)(z=z \wedge
y=\epsilon),\epsilon=\epsilon$}
\RightLabel{$Ref$}
\UnaryInfC{$\vdash(\exists y)(z=z \wedge y=\epsilon),\epsilon=\epsilon$}
\RightLabel{$\wedge R$}
\BinaryInfC{$\vdash (\exists y)(z=z \wedge y=\epsilon), z=z \wedge
\epsilon=\epsilon$}
\RightLabel{$\exists R$}
\UnaryInfC{$(\exists y) (z=z \wedge y=\epsilon)$}
\RightLabel{$\forall R$}
\UnaryInfC{$\vdash(\forall x)(\exists y)(x=x \wedge y=\epsilon)$}
\end{prooftree}

\normalsize
Uniqueness condition is derived below:

\footnotesize
\begin{prooftree}
\AxiomC{}
\RightLabel{$Ax$}
\UnaryInfC{$x_2=x_3,x_1=x_1,x_2=\epsilon, x_3=\epsilon \vdash
x_2=x_3$}
\RightLabel{$Repl$}
\UnaryInfC{$x_1=x_1,x_2=\epsilon, x_1=x_1,x_3=\epsilon
\vdash x_2 = x_3$}
\RightLabel{$\wedge L$s}
\UnaryInfC{$x_1=x_1\wedge x_2=\epsilon, x_1=x_1 \wedge
x_2=\epsilon \vdash x_2=x_3$}
\RightLabel{$\wedge L$}
\UnaryInfC{$(x_1=x_1 \wedge x_2=\epsilon)
\wedge (x_1=x_1\wedge x_2=\epsilon) \vdash
x_2=x_3$}
\RightLabel{$\rightarrow R$}
\UnaryInfC{$\vdash (x_1=x_1 \wedge x_2=\epsilon)
\wedge (x_1=x_1 \wedge x_2=\epsilon) \rightarrow
x_2=x_3$}
\RightLabel{$\forall Rs$}
\UnaryInfC{$\vdash (\forall x)(\forall y)(\forall z)((x=x
\wedge y=\epsilon) \wedge (x=x \wedge z=\epsilon) \rightarrow
y=z)$}
\end{prooftree}

\normalsize
The semantic condition is proved showing that
for all $n,m\in \Ss$ and $\eta^* \in\Os$,
$E(\str_1,\eta^*)=\str_2$ when $\eta \in \model{\overline{\str_1}=\overline{\str_1}
\wedge \overline{\str_2}=\epsilon}$.

\begin{itemize}
\itemsep0em
\item[$\Rightarrow$] Assume $\str_2=\eepsilon$.
So,
\begin{align*}
\model{\overline{\str_1}=\overline{\str_1} \wedge \overline{\str_2}
=\epsilon} &=
\model{\overline{\str_1}=\overline{\str_1} \wedge \overline{\eepsilon}
= \epsilon} \\
&= \model{\overline{\str_1}=\overline{\str_1} \wedge
\epsilon=\epsilon} \\
&= \model{\overline{\str_1}=\overline{\str_1}} \cap
\model{\epsilon=\epsilon} \\
&= \Os \cap \Os \\
&= \Os.
\end{align*}
Clearly, for any $\eta^*$, $\eta^*\in \Os$,
that is $\eta^* \in \model{\overline{\str_1}=\overline{\str_1}
\wedge \overline{m}=\epsilon}$.

\item[$\Leftarrow$] Assume (by contraposition) $m\neq \eepsilon$.
so,
\begin{align*}
\model{\overline{\str_1}=\overline{\str_1} \wedge
\overline{\str_2}=\epsilon} &=
\model{\overline{\str_1} = \overline{\str_1}} \cap
\model{\overline{\str_2} =\epsilon} \\
&= \Os \cap \emptyset \\
&= \emptyset. 
\end{align*} 
Clearly, for any $\eta^*$, $\eta^* \not \in \emptyset$,
so $\eta^* \not\in \model{\overline{\str_1}=\overline{\str_1}
\wedge \overline{\str_2}=\emptyset}$.

\end{itemize}

\item $f= P^n_i$, for $1\leq i\leq n \in \Nat$.
The function si $\Sigma^b_1$-represented in $\RS$
by the formula:
$\bigwedge_{j\in J}(x_j=x_j) \wedge y=x_i$,
where $J=\{1,\dots, n\}\setminus i$.
The proof is similar to the one above.

\item $f = S_b$.
The function is $\Sigma^b_1$-represented
in $\RS$ by the formula: $y=x\bbool$,
where $\bbool\in\{\zzero,\oone\}$
and $\bbool=\zzero$ when $b=0$
and $\bbool=\oone$ when $b=1$.
The existence condition is derived as follows:

\footnotesize
\begin{prooftree}
\AxiomC{}
\RightLabel{$Ax$}
\UnaryInfC{$z\bbool = z\bbool \vdash (\exists y)(y=z\bbool),
z\bbool = z\bbool$}
\RightLabel{$Ref$}
\UnaryInfC{$\vdash (\exists y)(y=z\bbool), z\bbool = z\bbool$}
\RightLabel{$\exists R$}
\UnaryInfC{$\vdash (\exists y)(y=z\bbool)$}
\RightLabel{$\forall R$}
\UnaryInfC{$\vdash (\forall x)(\exists y)(y=x\bbool)$}
\end{prooftree}

\normalsize
Uniqueness is derived below:

\footnotesize
\begin{prooftree}
\AxiomC{}
\RightLabel{$Ax$}
\UnaryInfC{$x_2=x_3,x_2=x_1\bbool, x_3=x_1\bbool \vdash
x_2=x_3$}
\RightLabel{$Repl$}
\UnaryInfC{$x_2=x_1\bbool, x_3 = x_1\bbool \vdash x_2=x_3$}
\RightLabel{$\wedge L$}
\UnaryInfC{$x_2=x_1\bbool \wedge x_3=x_1\bbool \vdash
x_2=x_3$}
\RightLabel{$\rightarrow R$}
\UnaryInfC{$\vdash (x_2=x_1\bbool) \wedge
(x_3=x_1\bbool) \rightarrow x_2=x_3$}
\RightLabel{$\forall R$s}
\UnaryInfC{$\vdash (\forall x)(\forall y)(\forall z)((y=x\bbool)
\wedge (z=x\bbool) \rightarrow y=z)$}
\end{prooftree}

\normalsize
The semantic condition is proved showing that for
all $\str_1,\str_2 \in \Ss$ and $\eta^* \in \Os$,
$S_b(\str_1,\eta^*) = \str_2$
when $\eta^* \in \model{\overline{\str_2} = \overline{\str_1}\bbool}$.

\begin{itemize}
\itemsep0em
\item[$\Rightarrow$]
Assume $b=0$.
So, $\str_2=\str_1\zero$ (and
$\bbool = \zzero$).
Then,
\begin{align*}
\model{\overline{\str_2}=\overline{\str_1}\zzero}
&=
\model{\overline{\str_1\zero} = \overline{n}\zzero} \\
&= \model{\overline{n}\zzero = \overline{n}\zzero} \\
&= \Os.
\end{align*}
For any $\eta^*$, $\eta^* \in \Os$,
that is $\eta^* \in \model{\overline{\str_2} = \overline{\str_1}\zzero}$.
Similarly, if $b=1$ and $\str_2=\str_1\one$
(and $\bbool = \oone$).
Then,
\begin{align*}
\model{\overline{\str_2} = \overline{\str_1}\oone}
&=
\model{\overline{n\one} = \overline{n}\oone} \\
&= \model{\overline{n}\oone = \overline{n}\oone} \\
&= \Os.
\end{align*}

\item[$\Leftarrow$] Assume (by contraposition)
$b=0$ and $\str_2\neq \str_1 \zero$
(and $\bbool = \oone$).
Then, clearly $\model{\overline{\str_2} = \overline{\str_1}\oone}
= \emptyset$.
So, for any $\eta^*\in \Os$, 
$\eta^* \not\in \emptyset$.
The case of $b=0$ and $m\neq n\oone$ (and $\bbool=\zzero$)
is equivalent.

\end{itemize}

\item $f=C$.
The function is $\Sigma^b_1$-representable in $\RS$
by the formula
$
F_C(x,v,z_0,z_1,y) : (x=\epsilon \wedge y=v)
\vee (\exists x' \preceq x)
(x=x'\zzero \wedge y=z_0)
\vee (\exists x'\preceq x)(x=x'\one \wedge y=z_1).
$

\item $f=Q$. The function is $\Sigma^b_1$-represented
in $\RS$ by $
F_Q : (\Flip(x) \wedge y=\oone) \vee
(\neg \Flip(x) \wedge y=\zzero).
$
The existence condition is proved as follows:

\end{itemize}

\footnotesize
\begin{prooftree}
\AxiomC{}
\RightLabel{$Ax$}
\UnaryInfC{$\Flip(x) \vdash
(\exists y)F_Q(z,y), \Flip(z)$}
\RightLabel{$\neg R$}
\UnaryInfC{$\vdash (\exists y)F_Q(z,y),
\Flip(z), \neg \Flip(z)$}

\AxiomC{$\mathcal{D}_{\exists Q}$}

\noLine
\UnaryInfC{$\vdash (\exists y)F_Q(z,y), \Flip(z),\oone=\zzero$}
\RightLabel{$\wedge R$}
\BinaryInfC{$\vdash (\exists y)F_Q(z,y),
\Flip(z), \neg\Flip(z) \wedge \oone=\zzero$}

\AxiomC{}
\RightLabel{$Ax$}
\UnaryInfC{$\oone=\oone \vdash ..., \oone=\oone$}
\RightLabel{$Ref$}
\UnaryInfC{$\vdash ..., \oone = \oone$}
\RightLabel{$\wedge R$}
\BinaryInfC{$\vdash (\exists y)F_Q(z,y), (\Flip(z) \wedge \oone=\oone), 
(\neg \Flip(z) \wedge \oone=\zzero)$}
\RightLabel{$\vee R$}
\UnaryInfC{$\vdash (\exists y)F_Q(z,y), (\Flip(z) \wedge \oone=\oone)
\vee (\neg \Flip(z) \wedge \oone=\zzero)$}
\RightLabel{$\exists R$}
\UnaryInfC{$\vdash (\exists y)((\Flip(z) \wedge y=\oone)
\vee (\neg \Flip(z) \wedge y=\zzero))
$}
\RightLabel{$\forall R$}
\UnaryInfC{$\vdash(\forall x)(\exists y)((\Flip(x) \wedge
y=\oone) \vee (\neg \Flip(x) \wedge
y=\zzero))$}
\end{prooftree}
\begin{itemize}
\item[]
where

$$
\mathcal{D}_{\exists Q}
$$

\begin{prooftree}
\AxiomC{}
\RightLabel{$Ax$}
\UnaryInfC{$\Flip(z)\vdash .... , \Flip(z)$}
\RightLabel{$\neg R$}
\UnaryInfC{$\vdash ...\Flip(z), \neg \Flip(z)$}

\AxiomC{}
\RightLabel{$Ax$}
\UnaryInfC{$\zzero=\zzero \vdash ..., \Flip(z),
\zzero=\zzero$}
\RightLabel{$Ref$}
\UnaryInfC{$\vdash ..., \Flip(z), \zzero=\zzero$}

\RightLabel{$\wedge R$}
\BinaryInfC{$\vdash ..., \Flip(z), \Flip(z)\wedge \zzero=\oone,
\neg \Flip(z) \wedge \zzero=\zzero$}
\RightLabel{$\vee R$}
\UnaryInfC{$\vdash(\exists y)F_Q(z,y), \oone = \zzero,
\Flip(z), F_Q(z,\zzero)$}
\RightLabel{$\exists R$}
\UnaryInfC{$\vdash (\exists y)F_Q(z,y),
\Flip(z),\oone=\zzero$}

\end{prooftree}

Uniqueness is shown as below:

\begin{prooftree}
\AxiomC{$\mathcal{D}_{\forall Q1}$}
\noLine
\UnaryInfC{$\Flip(x_1) \wedge x_2 = \oone,
F_Q(x_1,x_2) \vdash x_2 = x_3$}

\AxiomC{$\mathcal{D}_{\forall Q2}$}
\noLine
\UnaryInfC{$\neg \Flip(x_1) \wedge = \zzero,
F_Q(x_1,x_3) \vdash x_2=x_3$}

\RightLabel{$\vee L$}
\BinaryInfC{$F_Q(x_1,x_2), F_Q(x_1,x_3)
\vdash x_2=x_3$}
\RightLabel{$\wedge L$}
\UnaryInfC{$F_Q(x_1,x_2)\wedge F_Q(x_1,x_3) \vdash
x_2=x_3$}
\RightLabel{$\rightarrow R$}
\UnaryInfC{$\vdash F_Q(x_1,x_2)
\wedge F_Q(x_1,x_3) \rightarrow x_2 = x_3$}
\RightLabel{$\forall R$s}
\UnaryInfC{$\vdash (\forall x)(\forall y)(\forall z)(F_Q(x,y)
\wedge F_Q(x,z) \rightarrow y=z)$}
\end{prooftree}
\normalsize
where
\end{itemize}

$$
\mathcal{D}_{\forall Q1}
$$

\tiny
\begin{prooftree}
\AxiomC{}
\RightLabel{$Ax$}
\UnaryInfC{$x_2=x_3, \Flip(x_1), x_2=\oone,
x_3=\oone \vdash x_2=x_3$}
\RightLabel{$Repl$}
\UnaryInfC{$\Flip(x_1), x_2=\oone, \Flip(x_1),
x_3=\oone \vdash x_2=x_3$}
\RightLabel{$\wedge L$}
\UnaryInfC{$\Flip(x_1),x_2=\oone,
\Flip(x_1) \wedge x_3=\oone \vdash
x_2 = x_3$}

\AxiomC{}
\RightLabel{$Ax$}
\UnaryInfC{$\Flip(x_1), x_2=\oone, x_3=\zzero
\vdash x_2=x_3,\Flip(x_1)$}
\RightLabel{$\neg L$}
\UnaryInfC{$\Flip(x_1), x_2=\oone,
\neg \Flip(x_1), x_3=\zzero \vdash x_2=x_3$}
\RightLabel{$\wedge L$}
\UnaryInfC{$\Flip(x_1), x_2=\oone,
\neg \Flip(x_1) \wedge x_3=\zzero \vdash
x_2=x_3$}

\RightLabel{$\vee L$}
\BinaryInfC{$\Flip(x_1),x_2=\oone, F_Q(x_1,x-3)
\vdash x_2=x_3$}
\RightLabel{$\wedge L$}
\UnaryInfC{$\Flip(x_1) \wedge x_2=\oone,
F_{Q}(x_1,x_3) \vdash x_2=x_3$}
\end{prooftree}

\normalsize

$$
\mathcal{D}_{\forall Q2}
$$

\tiny
\begin{prooftree}
\AxiomC{}
\RightLabel{$Ax$}
\UnaryInfC{$\Flip(x_1), x_3=\oone, y=\zzero \vdash
x_2=x_3, \Flip(x_1)$}
\RightLabel{$\neg L$}
\UnaryInfC{$\neg \Flip(x_1), y=\zzero, \Flip(x_1),
x_3=\oone \vdash x_2=x_3$}
\RightLabel{$\wedge L$}
\UnaryInfC{$\neg \Flip(x_1), y=\zzero, 
\Flip(x_1) \wedge x_3=\oone \vdash x_2=x_3$}

\AxiomC{}
\RightLabel{$Ax$}
\UnaryInfC{$x_2=x_3, \neg \Flip(x_1), x_2=\zzero,
x_3=\zzero \vdash x_2=x_3$}
\RightLabel{$Repl$}
\UnaryInfC{$\neg \Flip(x_1), x_2=\zzero, \neg \Flip(x_1),
x_3=\zzero \vdash x_2=x_3$}
\RightLabel{$\wedge L$}
\UnaryInfC{$\neg \Flip(x_1), x_2=\zzero,
\neg \Flip(x_1) \wedge x_3=\zzero \vdash x_2=x_3$}

\RightLabel{$\vee L$}
\BinaryInfC{$\neg \Flip(x_1), y=\zzero, F_Q(x_1,x_3)
\vdash x_2=x_3$}
\RightLabel{$\wedge L$}
\UnaryInfC{$\neg \Flip(x_1) \wedge y=\zzero, F_Q(x_1,x_3)
\vdash x_2=x_3$}
\end{prooftree}

\begin{itemize}
\normalsize
\item[]

The semantic condition is proved considering that
for all
$\str_1,\str_2$ and $\eta^* \in \Os$,
$Q(\str_1,\eta^*) = \str_2$
when $\eta^* \in \model{(\Flip(\overline{\str_1})
\wedge \overline{\str_2}=\oone)
\vee (\neg \Flip(\overline{\str_1} \wedge
\overline{\str_2}=\zzero)}$.

\begin{itemize}
\itemsep0em
\item[$\Rightarrow$]
Assume $Q(\str_1,\eta^*) =\str_2$
and $\str_2=\one$,
that is $\eta^* (\str_1)=\one$.
So,

\begin{align*}
\model{F_Q(\overline{\str_1},\overline{\str_2}) }
&= 
\model{(\Flip(\overline{\str_1}) \wedge \overline{\one}
=\oone) \vee (\neg \Flip(\overline{\str_1}) \wedge
\overline{\one} = \zero)} \\
&= \model{(\Flip(\overline{\str_1}) \wedge
\oone = \oone) \vee (\neg \Flip(\overline{\str_1}) \wedge
\oone =\zzero)} \\
&= \model{\Flip(\overline{\str_1}) \wedge \oone = \oone}
\cup \model{\neg \Flip(\overline{\str_1}) \wedge \oone = \zzero} \\
&= \big(\model{\Flip(\overline{\str_1})} \cap \model{\oone =\oone}\big)
\cup \big(\model{\neg \Flip(\overline{\str_1})} \cap
\model{\oone=\zzero}\big) \\
&= \big(\model{\Flip(\overline{\str_1})} \cap \Os\big) 
\cup \big(\model{\neg \Flip(\overline{\str_1})} \cap 
\emptyset \big) \\
&= \model{\Flip(\overline{\str_1})} \cup \emptyset \\
&= \model{\Flip(\overline{\str_1})} \\
&= \{ \eta \ | \ \eta (\str_1) = \one\}.
\end{align*}
As seen $\eta^* (\str_1)=\one$,
so $\eta^* \in \{\eta \ | \ \eta(n) = \one\}
= \model{(\Flip(\overline{\str_1}) \wedge
\overline{\one} =\oone) \vee 
(\neg \Flip(\overline{\str_1}) \wedge \overline{\one}=\zzero)}$.

Assume $Q(\str_1,\eta^*)=\str_2$ and $\str_2=\zero$,
that is $\eta^* (\str_1)=\zero$.
So,

\begin{align*}
\model{F_Q(\overline{\str_1},\overline{\str_2})} &=
\model{(\Flip(\overline{\str_1}) \wedge \overline{\one}
= \oone) \vee (\neg \Flip(\overline{\str_1}) \wedge
\overline{\one} = \zzero)} \\
&= \model{(\Flip(\overline{\str_1}) \wedge \zzero=\oone)
\vee (\neg \Flip(\overline{\str_1}) \wedge \zzero=\zzero)} \\
&= \model{\Flip(\overline{\str_1}) \wedge \zzero=\oone}
\cup \model{\neg \Flip(\overline{\str_1}) \wedge \zzero=\zzero} \\
&=
\big(\model{\Flip(\overline{\str_1})} \cap \model{\zzero=\oone}\big)
\cup
\big(\model{\neg \Flip(\overline{\str_1})} \cap \model{\zzero=\zzero} \big) \\
&= (\model{\Flip(\overline{\str_1})} \cap \emptyset)
\cup (\model{\neg \Flip(\overline{\str_1})} \cap \Os) \\
&= \emptyset \cup \model{\neg \Flip(\overline{\str_1})} \\
&= \model{\neg \Flip(\overline{\str_1})} \\
&= \Os - \model{\Flip(\overline{\str_1})} \\\
&= \{ \eta \ | \ \eta (n) = \zero\}.
\end{align*}
As seen, $\eta^*(n)=\\zero$, so clearly 
$\eta^* \in \{\eta \ | \ \eta(n) = \zero\} =
\model{(\Flip(\overline{\str_1}) \wedge \overline{\zero}
= \oone ) \vee (\neg \Flip(\overline{\str_1})
\wedge \overline{\zero} = \zzero)}$.

\item[$\Leftarrow$]
Assume (by contraposition) $\eta^*(\str_1)=\one$,
so $Q(\str_1,\eta^*) =\one$
and $m=\zero$.
As shown above,
in this case $\model{(\Flip(\overline{\str_1}) \wedge
\overline{\zero}=\oone) \vee (\neg \Flip(\overline{\str_1})
\wedge \overline{\zero}=\zzero)}
= \{\eta \ | \ \eta(\str_1)=\zero\}$.
For assumption, $\eta^*$ is such that $\eta^*(\str_1)=\one$,
so $\eta^* \not \in \{\eta \ | \ \eta(\str_1)=\zero\}$,
that is $\eta^* \not\in \model{(\Flip(\overline{\str_1} \wedge
\overline{\zero}=\oone) \vee
(\neg \Flip(\overline{\str_1}) \wedge
\overline{\zero}=\zzero)}$.

Assume (by contraposition) $\eta^*(\str_1)=\zero$,
so $Q(\str_1,\eta^*)=\zero$ and $m=\one$.
As shown above, in the case $\model{(\Flip(\overline{\str_1})
\wedge \overline{\one}=\oone) \vee (\neg \Flip(\overline{\str_1})
\wedge \overline{\one} = \zzero)}
= \{\eta \ | \ \eta(\str_1)= \one\}$.
For assumption, $\eta^*$ is such that $\eta^*(\str_1)=\zero$,
so $\eta^* \not\in \{\eta \ | \ \eta(\str_1) =\one\}$,
that is $\eta^* \not \in \model{(\Flip(\overline{\str_1}) \wedge
\overline{\one}=\oone) \vee
(\neg \Flip(\overline{\str_1}) \wedge \overline{\one}=\zzero)}$.

\end{itemize}
\end{itemize}
\normalsize

\emph{Inductive Case.} If $f\in\POR$ is obtained from
$\Sigma^b_1$-representable functions by 
either composition or bounded recursion, then $f$
is $\Sigma^b_1$-representable as well.

\end{proof}

\section{Proofs from Section~\ref{sec:PORandPTM}}\label{appendix:TaskC}

\subsection{Relating $\RFP$ and $\SFP$}

\subsubsection{On Stream Turing Machine}

We define stream Turing machines in a formal
way.

\begin{defn}[Stream Turing Machine]
A \emph{stream Turing machine} 
is a quadruple $\STM= \langle \Qstates, q_0,\Sigma,
\delta\rangle$,
where:
\begin{itemize}
\itemsep0em
\item $\Qstates$ is a finite set of states
ranged over by $q_1,q_2,\dots$ 
\item $q_0\in\Qstates$ is an initial state
\item $\Sigma$ is a finite set of characters ranged over
by $c_1,c_2,\dots$
\item $\delta: \hat{\Sigma} \times \Qstates \times
\hat{\Sigma} \times \Bool \to \hat{\Sigma}
\times \{\mathtt{L},\mathtt{R}\}$ is a transition function describing
the new configuration reached by the machine.
\end{itemize}
$\mathtt{L}$ and $\mathtt{R}$ are two fixed and distinct symbols
(e.g. $\one$ and $\zero$),
$\hat{\Sigma}=\Sigma \cup\{\blank\}$
and $\blank\not\in\Sigma$ is the \emph{blank
character}.
\end{defn}

\begin{defn}[Configuration of STM]
The \emph{configuration of an STM}
is a quadruple $\langle \str, q, \strT,\omega\rangle$,
where:
\begin{itemize}
\itemsep0em
\item $\str\in\{\zero,\one,\blank\}^*$ is the portion
of the work tape
on the left of the head
\item $q\in\Qstates$ is the current state of the machine
\item $\strT \in\{\zero,\one,\blank\}^*$ is the portion
of the work tape on the right of the head
\item $\omega\in \Bool^\Nat$ is the portion of the oracle
tape that has not been read yet.
\end{itemize}
\end{defn}
\noindent
We can now  give the definition of family of reachability
relations for an STM.

\begin{defn}[Reachability function for STM]
Given an STM $\STM$ with transition function $\delta$,
we use $\vdash_\delta$ to denote its standard step function
and $\{\triangleright^n_{\STM}\}_n$
the smallest family of relations such that:

\begin{align*}
\langle \str,q,\strT,\omega \rangle &\triangleright^0_{\STM}
\langle \str,q,\strT,\omega \rangle \\
\Big(\langle \str,q,\strT,\omega \rangle
\triangleright^n_{\STM} \langle \str',q,\strT',\omega'\rangle\Big)
\ \wedge  \ \ & \\
 \Big(\langle \str',q',\strT',\omega'\rangle
\vdash_\delta 
\langle \str'',q',\strT'', \omega''\rangle\Big)
\rightarrow \ &  \Big(
\langle \str, q,\strT,\omega\rangle 
\triangleright^{n+1}_{\STM} 
\langle \str'',q',\strT'', \omega''\rangle \Big)
\end{align*}
\end{defn}

\begin{defn}[STM Computation]
Given an STM $\STM=\langle \Qstates,
q_0,\Sigma,\delta\rangle$, oracle tape
$\omega : \Nat \to \Bool$ and a function
$g:\Nat \to \Bool$,
we say that \emph{$\STM$ computes g},
$f_{\STM}=g$, when for every string $\str\in\Ss$,
and oracle tape $\omega \in\Bool^\Nat$,
there are $n\in\Nat, \strT\in\Ss,q'\in\Qstates$,
and a function $\psi: \Nat \to \Bool$ such that 
$$
\langle \epsilon, q_0, \str, \omega\rangle
\triangleright^n_{\STM}
(\strTT, q', \strT, \psi\rangle,
$$
and $\langle \strTT, q', \strT, \psi\rangle$
is a final configuration for $\STM$
with $f_{\STM}(\str,\omega)$ being the longest
suffix of $\strTT$ not including $\blank$.
\end{defn}
%
%
%

\subsubsection{Proof of Proposition~\ref{prop:PTMandSTM}}
We need to extend this definition to
probability distributions over $\Ss$,
so to deal with PTMs.

\begin{defn}
Given a PTM $\PTM$,
a configuration $\langle \str, q,\strT\rangle$,
we define the following sequence of random 
variables:
\begin{align*}
X^{\langle \str,q,\strT\rangle}_{\PTM,0} &\df
\omega \to \langle \str,q,\strT\rangle \\
X^{\langle \str,q,\strT\rangle}_{\PTM,n+1} 
&\df \omega \to 
\begin{cases}
\delta_{\bool}\big(X^{\langle \str,q,\strT\rangle}(\omega)\big)
&\text{if } \omega(n) = \bool \text{ and for some }
\langle \str',q',\strT'\rangle, \\
& \ \ \ \ \ \ \ \ 
\ \ \ \ \ \ 
\ \ \ \ \ \delta_\bool(X^{\langle \str,q,\strT\rangle}_{\PTM,n}
(\omega))= \langle \str',q',\strT'\rangle \\
X^{\langle \str,q,\strT\rangle}_{\PTM,n}(\omega)
&\text{if } \omega(n) = \bool \text{ and for no }
\langle \str',q',\strT'\rangle , \\
& \ \ \ \ \ \ \ \ 
\ \ \ \ \ \ 
\ \ \ \ \ \delta_\bool(X^{\langle \str,q,\strT\rangle}_{\PTM,n}
(\omega)) = \langle \str',q',\str'\rangle.
\end{cases}
\end{align*}
\normalsize
for any $\omega \in \Bool^\Nat$.
\end{defn}
\noindent
Intuitively, the variable $X^{\langle \str,q,\strT\rangle}_{\PTM,n}$
describes the configuration reached by the machine
after exactly $n$ transitions.
We say that a PTM $\PTM$ computes $Y_{\PTM,\str}$
when there is an $m\in \Nat$ such that
for any $\str\in\Ss$, $X^{\langle \str,q_0,\strT\rangle}_{\PTM,m}$
is final.
In this case, $Y_{\PTM,\str}$ is the longest suffix of 
$\pi_1\Big(X^{\langle \str, q_0,\epsilon\rangle}_{\PTM,m}\Big)$,
which does not contain $\blank$.

We now prove Proposition~\ref{prop:PTMandSTM},
which establishes the equivalence between STMs and PTMs.
\longv{
\begin{prop}\label{prop:PTMandSTM}[Equivalence between PTMs
and STMs]
For any poly-time STM $\STM$, there is a poly-time PTM
$\PTM^*$ such that for any strings $\str,\strT\in \Ss$,
$$
\mu(\{\eta \ | \ \STM(\str,\omega)=\strT\}) = \Pr[\STM^*(\str)=
\strT],
$$
and viceversa.
\end{prop}
}

\begin{proof}[Proof Sketch of Proposition~\ref{prop:PTMandSTM}]
We show that for any $\str, \strT \in \Ss$,
\begin{align*}
\mu
\big(\{\omega \in \Bool^\Nat \ | \ \STM(\str,\omega)=\strT\}\big) 
&=
\mu 
\big(\PTM(\str)^{-1} (\strT)\big) \\
\mu
\big(\{\omega \in \Bool^\Nat \ | \ \STM(\str,\omega)=\strT\}\big) &=
\mu
\big(\{\omega \in \Bool^\Nat \ | \ Y_{\STM,\str}(\omega)=\strT\}\big).
\end{align*}
We actually show a stronger result,
namely that there is a bijection $I$ : STMs $\to$
PTM such that for any $n\in\Nat$
\begin{align}
\{\omega \in \Bool^\Nat \ | \ \langle \str, q_0,\strT,\omega\rangle
\triangleright^n_\delta \langle
\strT, q,\psi,n\rangle\} 
=
\{\omega \in \Bool^\Nat \ | \ X^{\langle \epsilon,q_0,\str\rangle}_{I(\STM),n}
(\omega) = \langle \strT,q,\psi\rangle\}.
\tag{I.}
\end{align}
This entails,
\begin{align}
\{\omega \in \Bool^\Nat \ | \ \STM(\str,\omega)=\strT\} 
= 
\{ \omega \in \Bool^\Nat
\ | \ Y_{I(\STM),\str}(\omega)=\strT\}.
\tag{II.}
\end{align}
The bijection $I$ splits the function $\delta$ of $\STM$
so that if the corresponding character on the oracle tape is $\zero$,
then the transition is defined by $\delta_0$;
if the character is $\one$, it is defined by $\delta_1$.
%
We prove (I.) by induction on the number of steps required by 
$\STM$ to compute its input value.
\end{proof}

\subsection{From $\SFP$ to $\POR$}
To prove Proposition~\ref{prop:SFPtoPOR}
we show the correspondence between functions
computable by an STM and those computable by
a \emph{finite-stream} STM.

\begin{lemma}\label{lemma:STACS22}
For all $f\in \SFP$ with polynomial time-bound $p$
there is an $h\in\PTF$ such that,
for any $\omega \in \Bool^\Nat$ and $\str\in\Ss$,
$$
f(\str,\eta)= h(\str,\omega_{p(|\str|)}).
$$
\end{lemma}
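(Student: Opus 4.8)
The plan is to exploit the fact that a poly-time stream Turing machine can only consult a bounded prefix of its oracle tape, and then to \emph{internalize} that prefix as an ordinary finite-string input, thereby turning the randomized stream computation into a deterministic one. First I would fix an STM $\STM$ computing $f$ within the time bound $p$, so that on every input $\str\in\Ss$ and oracle $\omega\in\Bool^\Nat$ the machine reaches a final configuration in at most $p(|\str|)$ steps. The crucial observation is that each transition consumes at most the leading bit of the oracle tape: in a configuration $\langle\str,q,\strT,\omega\rangle$ the component $\omega$ records precisely the still-unread portion of the oracle, and a single application of $\vdash_\delta$ strips off (at most) its first bit. Arguing by induction on $n$ along the family $\{\triangleright^n_{\STM}\}_n$, after $n$ steps the bits consulted are exactly the first $n$ bits of the original oracle. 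Hence after $p(|\str|)$ steps only the prefix $\omega_{p(|\str|)}$ has been read, so the final configuration—and therefore the value $f(\str,\omega)$—depends on $\omega$ only through $\omega_{p(|\str|)}$.

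Next I would define $h$ as a standard deterministic machine that, on input $(\str,w)$ with $w\in\Ss$, simulates $\STM$ on $\str$ while feeding it the successive bits of $w$ in place of the oracle bits it would otherwise read from the tape. One then checks, again by induction on the step count, that whenever $|w|\ge p(|\str|)$—in particular for $w=\omega_{p(|\str|)}$—this deterministic simulation passes through exactly the same configurations as $\STM$ on $(\str,\omega)$ and halts with the same output, any bits of $w$ beyond those actually consulted being simply ignored. Since one STM step can be simulated by polynomially many deterministic steps and there are at most $p(|\str|)$ of them, $h$ runs in polynomial time, so $h\in\PTF$. Composing the two halves of the argument yields, for all $\str$ and $\omega$,
$$
f(\str,\omega)\;=\;h(\str,\omega_{p(|\str|)}),
$$
which is the claim (reading the quantified $\omega$ in place of the printed $\eta$).

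I expect the main obstacle to be the bookkeeping in the faithful-simulation argument rather than any conceptual difficulty. The hard part will be phrasing the step-by-step correspondence so that the ``read the next oracle bit'' action of $\STM$ lines up precisely with ``read the next symbol of the finite input $w$'' in $h$, maintaining an exact count of how many bits have been consumed and verifying that $h$ remains well-defined and in agreement with $f$ even when $w$ strictly extends the consulted prefix. Making the polynomial time bound of $h$ explicit also demands a routine but careful accounting of the per-step simulation overhead; neither task is deep, but both must be carried out with some care to keep the induction clean.
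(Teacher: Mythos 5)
Your proposal is correct and follows essentially the same route as the paper: the paper also fixes a poly-time STM for $f$, passes to the corresponding \emph{finite-stream} machine fed the prefix $\omega_{p(|\str|)}$ as an ordinary input, and establishes a step-by-step correspondence between the two reachability relations (by induction on the step count) before setting $h$ to be the function computed by that finite-stream machine. Your additional bookkeeping about which oracle bits are consumed at each step is exactly the content of the paper's claimed equivalence $\langle \epsilon,q_0,\str,\str'\rangle \triangleright^k_{\STM'} \cdots$ versus $\langle \epsilon,q_0,\str,\str'\eta\rangle \triangleright^k_{\STM} \cdots$, so no new idea is needed.
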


\begin{proof}
Since $f\in\SFP$,
there is poly-time STM $\STM=\langle \Qstates,q_0,\Sigma,\delta\rangle$
such that $f=f_{\STM}$.
Let us consider the FSTM $\STM'$,
which is defined as $\STM$ but is finite.
Then for any $k\in\Nat$
and $\str', \strT',\strT'',\strT''' \in \Ss$,
$$
\langle \epsilon,q_0',\str,\str'\rangle
\triangleright^k_{\STM} \langle \strT,q,\strT',\strT''\rangle \ \ 
\text{ iff } \ \ 
 \langle\epsilon,q_0',\str,\str'\eta\rangle
\triangleright^k_{\STM'} \langle \strT,q,\strT'',\strT'''\eta\rangle.
$$
Furthermore, $\STM'$ requires the same number
of steps as those required by $\STM$, so
 is in $\PTF$ too.
We conclude the proof defining $h=f_{\PTM'}$.
\end{proof}
\noindent
The next step consists in showing that each function
$f\in\PTF$ corresponds to a function
which can be defined without recurring to $\query$.

\begin{lemma}\label{lemma:STACS23}
For any $f\in\PTF$ and $\str\in\Ss$,
there is a $g\in\POR$ such that for any
$x,y\in \Ss$ and $\eta\in\Os$,
$f(x,y)=g(x,y,\eta)$.
Furthermore, if $f$ is defined without recurring to
$\query$,
$g$ do not include $\query$ as well.
\end{lemma}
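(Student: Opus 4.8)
The plan is to read $\PTF$ as the class of \emph{deterministic} functions $f\colon\Ss\times\Ss\to\Ss$ computed by finite-stream machines in polynomial time, the second argument $y$ playing the role of the finite portion of the oracle tape, and to exhibit a $g\in\POR$ that computes the same map while carrying the infinite oracle $\eta$ as an \emph{inert} extra parameter. First I would fix a polynomial time bound $p$ for $f$ together with a standard string encoding of the machine's configurations $\langle\str,q,\strT\rangle$. Since the machine halts within $p(|x|)$ steps, every reachable configuration has size bounded by a term built from $\epsilon,\zero,\one$ and concatenation/self-concatenation in the inputs, so all configurations stay within an explicit polynomial bound. I would then express a single transition step as a $\POR$ function $\mathsf{step}(c,x,y,\eta)$, defined from the basic functions $\emptyS,\projS,\successorS$ and the conditional $\condS$ together with the derived equality and concatenation functions (all of which live in $\POR$); the dependence of $\mathsf{step}$ on $\eta$ is vacuous throughout.

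Next I would iterate this step function. Using \emph{bounded recursion on notation}—the characteristic closure scheme of $\POR$—I define $g_0(x,y,n,\eta)$ computing the configuration reached after $|n|$ steps, bounding the recursion by the polynomial configuration bound $t$ described above, and then set $g(x,y,\eta)$ to be the output string read off the final configuration $g_0(x,y,\oone^{p(|x|)},\eta)$. This is precisely the classical machine-to-algebra simulation of Cobham and Ferreira~\cite{Buss86,Ferreira90}, the only novelty being that every function in sight also accepts $\eta$ and ignores it. By induction on the number of steps one checks that $g_0$ faithfully tracks the machine, and hence $g(x,y,\eta)=f(x,y)$ for \emph{every} $\eta\in\Os$, since no constituent of $g$ inspects $\eta$. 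Alternatively, one may simply invoke that $\PTF$ coincides with Ferreira's string algebra $\PTCA$~\cite{Ferreira90} and observe that $\PTCA$ embeds into $\POR$ by adjoining the inert argument $\eta$ to each of its basic functions and schemes, the bound terms for bounded recursion transferring verbatim.

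Finally, the \emph{furthermore} clause. The machine's sole access to the oracle is reading bits of the \emph{finite} string $y$, and reading the $i$-th bit of a given string is a deterministic poly-time operation, definable in $\POR$ from $\condS$ and bounded recursion on notation \emph{without} invoking the query function $\queryS$ (whose purpose is instead to consult the infinite oracle $\eta$). Hence the translation handles the finite oracle $y$ entirely by deterministic means and introduces no occurrence of $\queryS$ on its account; in particular, if $f$ is defined without recurring to $\queryS$, then neither does $g$. I expect the main obstacle to be purely the bookkeeping of the simulation: choosing the bounding term $t$ so that the uses of bounded recursion on notation are legitimate—i.e.\ all configurations provably remain within the bound—and verifying that the oracle parameter is genuinely inert, so that the identity $g(x,y,\eta)=f(x,y)$ holds uniformly in $\eta$ rather than merely for almost every $\eta$.
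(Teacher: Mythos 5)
Your proposal matches the paper's own argument: both encode FSTM configurations and the transition function as strings, define a one-step simulation function $step\in\POR$ with an inert $\eta$-parameter, iterate it $p(|x|)$ times via bounded recursion on notation using a polynomial size bound on configurations, and decode the output from the final configuration, with the same observation that reading bits of the finite string $y$ needs no $\queryS$ so the \emph{furthermore} clause follows. The construction and the points you flag as requiring care (legitimacy of the bounding term, inertness of $\eta$) are exactly the steps the paper isolates, so this is essentially the paper's proof.
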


\begin{proof}[Proof Sketch]
Let us outline the main steps of the proof:
\begin{enumerate}
\itemsep0em
\item We define encodings using strings for configuration and
transition functions – called $e_c$ and $e_t$, respectively –
for FSTM.
Moreover, there is a function $step\in\POR$,
which satisfies the simulation schema in Figure~\ref{fig:step}.
Observe that $e_c,e_t$ and $step$ are correct
with respect to the given simulation.

\begin{center}
 \begin{figure}[h!]
 \begin{center}
\framebox{
\parbox[t][3.2cm]{9.5cm}{
\footnotesize{

  \centering
  \begin{tikzpicture}[node distance = 8 cm]
    \node at (-3.2,2) (c) {$c = \langle \sigma, q, \tau, y\rangle$};
    \node at (-3.2,0) (sc) {$\str_c\in \Ss$};
    \node at (3.2,2) (d) {$d = \langle \sigma, q, \tau, y\rangle$};
    \node at (3.2,0) (sd) {$\str_d\in \Ss$};
    \node at (0,2.3) {\textcolor{gray}{$\vdash_{\delta}$}};
    \node at (-3.6,1) {\textcolor{gray}{$e_c$}};
     \node at (3.6,1) {\textcolor{gray}{$e_c$}};
     \node at (0,-0.3) {\footnotesize{\textcolor{gray}{for any $\eta\in\Os$, $step(\str_c, e_t(\delta), \eta)=\str_d$}}};
     
    \draw[->] (c) edge  (d);
    \draw[->] (sc) edge  (sd);
    \draw[->] (c) edge (sc);
    \draw[->] (d) edge (sd);
  \end{tikzpicture}

  }}}
\caption{Behavior of $step$.}\label{fig:step}
\end{center}
\end{figure}
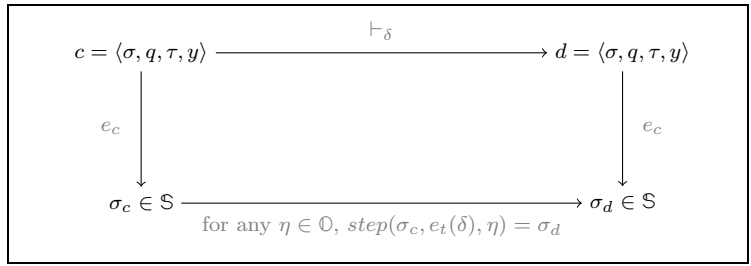  
\end{center}

\item For any $f\in\POR$ and $\str,\strT\in\Ss$,
if there is an $\rl$-term $t(x)$, which bounds the size
of $f(\str,\eta)$ for any possible input,
then 
$
f_m=\lambda z,x,\eta^{|z|}(x,\eta)\in\POR.\footnote{Observe
that if
$f$ is defined without recurring to $\query$,
also $f_m$ can be defined without using it.}
$

\item Given a machine $\STM$, if $\str\in\Ss$ is a correct
encoding for a configuration of $\STM$, 
then for any $\eta\in\Os$ $|step(\str,\eta)|$
is $\mathcal{O}(|\str|)$.

\item For any $\eta\in \Os$,
if $c=e_c(\str,q,\strT,y,\eta)$,
then
there is a function $dectape$ such that
 $dectape(x,\eta)$
is the longest suffix without occurrences of 
$\blank$.
\end{enumerate}
\end{proof}

Then, as a consequence of Lemma~\ref{lemma:STACS23},
each function $f\in\SFP$ can be simulated
by a function $g\in\POR$ using 
a polynomial prefix of the oracle for $f$
as an additional input.

\begin{cor}\label{cor:STACS8}
For any $f\in\SFP$ and polynomial $p$,
there is a function $f\in\POR$, such that
for any $\omega\in \Bool^\Nat,$
$\eta\in\Os$ and $x\in\Ss$,
$$
f(x,\omega)=g(x,\omega_{p(|x|),\eta}).
$$
\end{cor}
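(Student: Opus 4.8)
The plan is to chain together the two preceding lemmas, so that the substantive simulation work is already discharged and the corollary follows by a bookkeeping composition. First I would fix $f\in\SFP$ together with its polynomial time-bound $p$. By Lemma~\ref{lemma:STACS22} there is a function $h\in\PTF$ such that, for every $\str\in\Ss$ and every infinite oracle $\omega\in\Bool^\Nat$,
$$
f(\str,\omega) = h(\str,\omega_{p(|\str|)}),
$$
i.e.~the behaviour of $f$ on the infinite stream is already determined by the length-$p(|\str|)$ prefix of $\omega$, which $h$ reads as an ordinary finite string argument.

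Next I would feed this $h$ into Lemma~\ref{lemma:STACS23}. Since $h\in\PTF$, that lemma yields a function $g\in\POR$ with
$$
h(x,y) = g(x,y,\eta)
$$
for all $x,y\in\Ss$ and all $\eta\in\Os$. The crucial point is that the right-hand side is independent of $\eta$: the finite oracle tape has been internalised as the string argument $y$, so $g$ carries $\eta$ only as a formal parameter and never consults it. This is exactly the ``without recurring to $\query$'' clause of Lemma~\ref{lemma:STACS23}, which guarantees that $g$ can be taken to ignore its $\Os$-argument.

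Finally I would combine the two identities, instantiating the string slot $y$ with the prefix $\omega_{p(|x|)}$. This gives, for every $x\in\Ss$, $\omega\in\Bool^\Nat$ and $\eta\in\Os$,
$$
f(x,\omega) = h(x,\omega_{p(|x|)}) = g(x,\omega_{p(|x|)},\eta),
$$
which is precisely the claimed equation, with $g\in\POR$ the desired function.

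I do not expect any genuine obstacle at this level: the statement is a corollary precisely because all the difficulty resides in Lemma~\ref{lemma:STACS23}, whose proof constructs the string encodings $e_c,e_t$ of configurations and transition functions and the $\POR$-definable $step$ operation simulating $\vdash_\delta$. The only care needed here is interface-matching, namely checking that the finite prefix delivered by Lemma~\ref{lemma:STACS22} is exactly the string argument expected by Lemma~\ref{lemma:STACS23}, and that the universal quantification over $\eta$ is harmless because $g$ does not read its oracle; a subtlety worth flagging is that $\omega\in\Bool^\Nat$ and $\eta\in\Os=\Bool^\Ss$ are formally distinct sorts of infinite sequences, but this mismatch is inert precisely because $\eta$ is never inspected.
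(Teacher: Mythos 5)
Your proposal is correct and follows essentially the same route as the paper: the paper's proof likewise chains Lemma~\ref{lemma:STACS22} (reducing $f\in\SFP$ to an $h\in\PTF$ reading only the length-$p(|x|)$ prefix) with Lemma~\ref{lemma:STACS23} (realising $h$ as a $g\in\POR$ that ignores its oracle argument) and then composes the two identities. Your additional remarks on interface-matching and the inertness of the $\eta$-parameter are consistent with, and slightly more explicit than, the paper's own two-line argument.
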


\begin{proof}
Assume $f\in\SFP$ and $y=\eta_{p(|x|)}$.
By Lemma~\ref{lemma:STACS22},
there is a function $h\in\PTF$ such that
for any $\eta:\Nat\to \Bool$ and $x\in\Ss$,
$$
f(x,\eta)=h(x,\eta_{p(|x|)}).
$$
Moreover, due to Lemma~\ref{lemma:STACS23},
there is a $g\in\POR$ such that for any $x,y\in\Ss,\omega
\in \Os$,
$$
g(x,y,\omega)=h(x,y).
$$
Them $g$ is the desired function.
\end{proof}

We now establish that there is a function $e\in\POR$,
which produces strings with the same distribution
of the prefixes of the functions in $\Ss^\Nat$.
Intuitively, this function extracts $|x|+1$ bits
from $\eta$ and concatenates them in the output.
The definition of $e$ passes through a dyadic representation of a number,
$dyad : \Nat \to \Ss$.
Therefore, the function $e$ creates
strings $\one^0,\one^1,\dots, \one^k$,
and samples the function $\eta$ on the given
coordinates –
namely
$dy(\one^0),dy(\one^1),\dots, dy(\one^k)$ –
concatenating the result in a string.

\begin{defn}[Function $dyad$]
The function $dyad : \Nat \to \Ss$ associates
each $n\in \Nat$ to the string obtained
by stripping the left-most bit from the binary representation
of $n+1$.
\end{defn}
\noindent
In order to define $dy(\cdot,\cdot)$,
some auxiliary functions are introduced,
namely $binsucc:\Ss\times \Os\to \Ss$,
\begin{align*}
binsucc(\epsilon,\eta) &\df \one \\
binsucc(x\zero,\eta) &\df x\one|_{x\zero\zero} \\
binsucc(x\one,\eta) &\df 
binsucc(x,\eta)\zero|_{x\zero\zero}.
\end{align*}
and $bin:\Ss\times \Os \to \Ss$,
\begin{align*}
bin(\epsilon,\eta) &\df \zero \\
bin(x\bool,\eta) &\df binsucc(bin(x,\eta),\eta)|_{x\bool}.
\end{align*}

\begin{defn}[Function $dy$]
The function $dy:\Ss\times \Os \to\Ss$ is defined as follows:
$$
dy(x,\eta) \df  lrs(bin(x,\eta),\eta),
$$
where $lrs$ is a string manipulator, which removes
the left-most bit from a string if it exists; otherwise returns 
$\epsilon$.\footnote{Full
details can be found in~\cite{Davoli}.}
\end{defn}
\noindent
The function $dyad(n)$ is easily shown bijective
and the following proposition is proved by induction.

\longv{
\begin{lemma}
The function $dyad(n)$ is bijective.
\end{lemma}

\begin{proof}
\begin{enumerate}
\itemsep0em
\item
The function is an injection. different numbers
have different binary encodings and, for any
$n\in\Nat^+$ and $\omega\in\Os$,
$bin(n,\omega)$ has $\one$ as its left-most bit.\footnote{This is 
proved by induction on $n$, leveraging the definition of $bin$.}
Given two distinct binary encodings of $n\neq m\in\Nat$,
say $\one\str$ and $\one\strT$, 
then $\str\neq \strT$.\footnote{Otherwise $n=m$.
Indeed, the function associating numbers to binary representations
is itself a binjection.}
\item
The function is surjective. 
It is computed by removing a bit, which is
always $\one$.
So, each string $\str\in\Ss$ is the image of
a number $n\in\Nat$ such that the binary encoding
of $n+1$ is $\one\str$.
This number always exists.
\end{enumerate}
\end{proof}
}

\begin{prop}\label{prop:STACS26}
For any $n\in\Nat,\str\in\Ss$ and $\eta\in\Os$,
if
$|\str|=n+1$, then 
$$
dy(\str,\eta)=dyad(n).
$$
\end{prop}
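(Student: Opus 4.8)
The plan is to show that the auxiliary function $bin$ computes the standard binary representation of the \emph{length} of its argument, and that $binsucc$ implements the binary successor on such representations; the proposition then follows because both $dy(\str,\eta)$ and $dyad(n)$ amount to deleting the leading bit of the binary code of $n+1=|\str|$. Throughout I write $\mathrm{rep}(k)\in\Ss$ for the standard, most-significant-bit-first binary representation of $k\in\Nat$, so that $\mathrm{rep}(0)=\zero$ and $\mathrm{rep}(k)$ begins with $\one$ for every $k\geq 1$. The two length facts I will lean on are: $|\mathrm{rep}(k+1)|\leq |\mathrm{rep}(k)|+1$ (a binary successor lengthens the code by at most one bit), and $|\mathrm{rep}(k)|\leq k$ for all $k\geq 1$.

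First I would prove the key lemma on $binsucc$: for every $k\in\Nat$ and $\eta\in\Os$, $binsucc(\mathrm{rep}(k),\eta)=\mathrm{rep}(k+1)$. The argument is by induction on $|\mathrm{rep}(k)|$. The base cases $k=0$ and $k=1$ are direct computations, $binsucc(\zero,\eta)=\one|_{\zero\zero}=\one=\mathrm{rep}(1)$ and $binsucc(\one,\eta)=binsucc(\epsilon,\eta)\zero|_{\zero\zero}=\one\zero=\mathrm{rep}(2)$. For $k\geq 2$ I split on the last bit of $\mathrm{rep}(k)$. If $\mathrm{rep}(k)=x\zero$ (so $k$ is even) then $binsucc(x\zero,\eta)=x\one|_{x\zero\zero}=x\one=\mathrm{rep}(k+1)$, the truncation being inert because $|x\one|=|x|+1<|x|+2=|x\zero\zero|$. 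If $\mathrm{rep}(k)=x\one$ (so $k$ is odd, $k\geq 3$) then $x=\mathrm{rep}(\lfloor k/2\rfloor)$ is a genuine, strictly shorter code, so by the induction hypothesis $binsucc(x,\eta)=\mathrm{rep}(\lfloor k/2\rfloor+1)$; since its length is at most $|x|+1$, the truncation in $binsucc(x\one,\eta)=binsucc(x,\eta)\zero|_{x\zero\zero}$ is again inert and the carry rule gives $\mathrm{rep}(k+1)$.

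Next I would prove that $bin(\str,\eta)=\mathrm{rep}(|\str|)$ for all $\str\in\Ss$ and $\eta\in\Os$, by induction on $|\str|$. The base case $bin(\epsilon,\eta)=\zero=\mathrm{rep}(0)$ is immediate. For the inductive step, $bin(x\bool,\eta)=binsucc(bin(x,\eta),\eta)|_{x\bool}$; the induction hypothesis gives $bin(x,\eta)=\mathrm{rep}(|x|)$, and the $binsucc$-lemma gives $binsucc(\mathrm{rep}(|x|),\eta)=\mathrm{rep}(|x|+1)=\mathrm{rep}(|x\bool|)$. The truncation bound $x\bool$ has length $|x|+1$, and since $|\mathrm{rep}(|x|+1)|\leq |x|+1$ (the second length fact, applied with $k=|x|+1\geq 1$), this truncation is inert as well, whence $bin(x\bool,\eta)=\mathrm{rep}(|x\bool|)$.

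Finally I would combine these with the definition $dy(\str,\eta)=lrs(bin(\str,\eta),\eta)$ and the hypothesis $|\str|=n+1$: then $bin(\str,\eta)=\mathrm{rep}(n+1)$, which, being the code of a number $\geq 1$, begins with $\one$, so $lrs$ strips exactly that leading bit and $dy(\str,\eta)$ equals $\mathrm{rep}(n+1)$ with its left-most bit removed. By definition this is precisely $dyad(n)$, which concludes the proof. I expect the only real obstacle to be the bookkeeping needed to certify that every truncation operator occurring in $binsucc$ and $bin$ acts as the identity; once the two length estimates on binary codes are in place, the arithmetic of the binary successor and the behaviour of $lrs$ are entirely routine.
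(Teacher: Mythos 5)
Your proof is correct and follows exactly the route the paper intends: the paper only states that the proposition ``is proved by induction'' (deferring details to the cited technical report), and your argument supplies precisely that induction, via the two auxiliary facts that $binsucc$ realizes the binary successor on standard codes and that $bin(\str,\eta)=\mathrm{rep}(|\str|)$, with the length estimates certifying that every truncation is inert. Nothing is missing.
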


We also introduce the function $e$ and prove its correctness
(by induction on the structure of strings).

\begin{defn}[Function $e$]
Let $e:\Ss \times \Os \to \Ss$ be defined
as follows:
\begin{align*}
e(\epsilon,\eta) &\df \epsilon \\
e(x\bool,\eta) &\df e(x,\omega)\query(dy(x,\eta),
\eta)|_{x\bool}.
\end{align*}
\end{defn}

\begin{lemma}[Correctness of $e$]\label{lemma:eCor}
For any $\str\in \Ss$ and $i\in\Nat$,
if $|\str|=i+1$,
for each $j\leq i\in \Nat$ and $\eta\in\Os$:
\begin{itemize}
\itemsep0em
\item[i.] 
$e(\str,\eta)(i)=\eta(dy(\one^j,\eta))$,
\item[ii.] 
the length of $e(\str,\eta)$ is exactly
$i+1$.
\end{itemize}
\end{lemma}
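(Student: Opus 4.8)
The plan is to prove (i) and (ii) together by induction on the notation structure of $\str$, mirroring the recursion in the defining equations of $e$; I read clause (i) as the assertion that the $j$-th bit of $e(\str,\eta)$ is $\eta(dy(\one^j,\eta))$ for every $j\leq i$. The base case $\str=\eepsilon$ serves only to anchor (ii): here $e(\eepsilon,\eta)=\eepsilon$, so $|e(\eepsilon,\eta)|=0=|\eepsilon|$, and (i) is vacuous since there is no index to check.

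Before the inductive step I would isolate the one genuinely useful fact, namely that $dy(x,\eta)$ depends on $x$ only through its length: $dy(x,\eta)=dy(\one^{|x|},\eta)$ for every $x$ and $\eta$. For $|x|\geq 1$ this is immediate from Proposition~\ref{prop:STACS26}, which gives $dy(x,\eta)=dyad(|x|-1)=dy(\one^{|x|},\eta)$ since $x$ and $\one^{|x|}$ share the same length; for $|x|=0$ one checks directly that $dy(\eepsilon,\eta)=\eepsilon=dy(\one^0,\eta)$. This is exactly what lets me replace the actual prefix $x$ occurring inside the recursion by the canonical string $\one^{|x|}$ required by the statement.

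The inductive step then treats $\str=x\bool$ with $|x|=m$. By the defining equation $e(x\bool,\eta)=\big(e(x,\eta)\,\query(dy(x,\eta),\eta)\big)|_{x\bool}$, where $\query(dy(x,\eta),\eta)=\eta(dy(x,\eta))$ is a single bit. Using the induction hypothesis for (ii), $|e(x,\eta)|=m$, so the concatenation has length exactly $m+1=|x\bool|$ and the truncation at $x\bool$ is vacuous; this gives (ii) for $x\bool$. For (i), the first $m$ bits of $e(x\bool,\eta)$ are those of $e(x,\eta)$, hence equal $\eta(dy(\one^j,\eta))$ for $0\leq j\leq m-1$ by the induction hypothesis, while the final bit, at position $m$, is $\eta(dy(x,\eta))=\eta(dy(\one^m,\eta))$ by the length-only dependence above — precisely the value required for $j=m=i$.

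I expect no serious obstacle: the only nonroutine ingredient is Proposition~\ref{prop:STACS26}, whose hypothesis $|x|\geq 1$ I accommodate by splitting off the case $|x|=0$ as above. The remaining care is purely clerical — checking that $\query(\cdot,\eta)$ contributes exactly one bit so that the length arithmetic and the truncation behave as claimed, and keeping the index bookkeeping straight (the last appended bit sits at position $m$, matching $j=i$).
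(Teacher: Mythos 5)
Your proof is correct and follows exactly the route the paper indicates — the paper only remarks that the lemma is proved ``by induction on the structure of strings'' and gives no further details, and your structural induction on $\str$, together with the observation (via Proposition~\ref{prop:STACS26}) that $dy(x,\eta)$ depends only on $|x|$, supplies precisely the missing argument. Your reading of clause (i) as concerning the $j$-th bit, and your handling of the vacuous truncation via part (ii) of the induction hypothesis, are both the natural and intended ones.
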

We also introduce a relation $\sim_{dy}$,
which is again a bijection.
\begin{defn}
We define $\sim_{dy}$ as the smallest relation
in $\Os \times \Bool^\Nat$ such that:
$$
\eta \sim_{dy} \eta \ \ \text{iff} \ \ 
\text{ for any } n\in\Nat, \omega(n)=\eta(dy(\one^{n+1},
\omega)).
$$
\end{defn}

\longv{
\begin{lemma}
The relation $\sim_{dy}$ is a bijection.
\end{lemma}
\begin{proof}
By Proposition~\ref{prop:STACS26}
and Lemma~\ref{lemma:eCorr},
there is a $\omega\in\Os$ 
\textcolor{red}{in relation with $\eta$}.
Assume there is an $\omega'\neq \omega\in\Os$,
again in relation with $\eta$.
Then there is a $\str\in\Ss$, such that
$\omega(\str)\neq \omega'(\str)$.
By Proposition~\ref{prop:STACS26},
the value of $\omega$ does not affect the output.
Furthermore $dy$ is a bijection,
so there is an $n\in\Nat$ such that $dy(\one^{n+1},
\omega)=\str$ and
$\eta(n)=\omega_1(\str) \neq \omega_2(\str)=
\eta(n)$ is a contradiction.
The proof is the same when dealing with $\eta \in \Bool^\Nat$.
\end{proof}
}

\begin{lemma}\label{lemma:STACS29}
If $\omega \sim_{dy} \eta$
then, there is an $n\in\Nat$,
$$
e(\underline{n}_\Nat,\eta).
$$
\end{lemma}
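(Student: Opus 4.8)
The plan is to read the (evidently typo-afflicted) statement in its only sensible form: under the hypothesis $\omega \sim_{dy} \eta$, the $\POR$-term $e$ reconstructs from $\eta$ alone the length-$n$ prefix of $\omega$, i.e.\ $e(\underline{n}_\Nat, \eta) = \omega_n$ for every $n \in \Nat$, where $\omega_n$ denotes the prefix $\omega(0)\cdots\omega(n-1)$ in the notation already used in Lemma~\ref{lemma:STACS22} and Corollary~\ref{cor:STACS8}. This is exactly the ingredient needed downstream: Corollary~\ref{cor:STACS8} expressed an $\SFP$-function as $g(x, \omega_{p(|x|)}, \eta)$, still depending on the stream $\omega$, and reconstructing $\omega_{p(|x|)}$ as $e(\underline{p(|x|)}_\Nat, \eta)$ is what finally eliminates that dependence. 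The two tools available are the correctness of $e$ (Lemma~\ref{lemma:eCor}), which fixes both the length of $e(\str,\eta)$ and each of its bits as a value $\eta(dy(\one^{j+1}, \eta))$, and the very definition of $\sim_{dy}$, which states $\omega(j) = \eta(dy(\one^{j+1}, \eta))$ for all $j$.

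First I would unfold $\omega \sim_{dy} \eta$ to extract, for each $j \in \Nat$, the identity $\omega(j) = \eta(dy(\one^{j+1}, \eta))$. In parallel I would instantiate Lemma~\ref{lemma:eCor} at $\underline{n}_\Nat$, which in the unary convention of this section is the string $\one^n$: part (ii) yields that $e(\one^n, \eta)$ has length exactly $n$, while part (i) yields that its $j$-th bit is $\eta(dy(\one^{j+1}, \eta))$ for each relevant $j$. Since the correctness lemma is phrased with $|\str| = i+1$, a small normalization of the indices is required, but this introduces no new idea.

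Combining the two, the $j$-th bit of $e(\one^n, \eta)$ equals $\eta(dy(\one^{j+1}, \eta)) = \omega(j)$, which is the $j$-th bit of $\omega_n$; as the two strings have the same length $n$, they coincide. The form I would actually write up, however, is a direct induction on $n$, since $e$ is defined by recursion on notation. The base case $n=0$ gives $e(\epsilon, \eta) = \epsilon = \omega_0$ from the defining equation. In the step, the recursive clause $e(x\bool, \eta) = e(x,\eta)\,\query(dy(x,\eta), \eta)|_{x\bool}$ (reading the stray $\omega$ there as $\eta$) appends a single bit, the truncation $|_{x\bool}$ being the identity since $e(x,\eta)$ already has length $|x|$; by Proposition~\ref{prop:STACS26} (identifying $dy$ of a unary argument with the corresponding $dyad$ value), the semantics $\query(u,\eta) = \eta(u)$ of the query function, and the unfolded hypothesis, this appended bit is precisely the coordinate of $\omega$ that $\sim_{dy}$ pairs with $\eta$, i.e.\ the next bit of the prefix.

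The main obstacle I anticipate is purely the index bookkeeping: aligning the length-$(i+1)$ convention of Lemma~\ref{lemma:eCor}, the shift in $dy$/$dyad$ (where $dy(\str,\eta) = dyad(m)$ requires $|\str| = m+1$), and the exponent $\one^{j+1}$ occurring in $\sim_{dy}$, so that every call to $\query$ reads exactly the coordinate of $\eta$ that $\sim_{dy}$ associates to the intended coordinate of $\omega$. A secondary, purely editorial point is reconciling the typographical slips of this section (the $\omega$ that must be read as $\eta$ in the recursive clause for $e$, and the mismatched bound variables in Lemma~\ref{lemma:eCor} and in the definition of $\sim_{dy}$); once these are read in their only coherent form, the argument is a routine induction with no further conceptual content.
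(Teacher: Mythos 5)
Your reading of the garbled statement (as $e(\underline{n}_\Nat,\eta)=\omega_n$ for all $n$) and your argument are correct, and they take essentially the same route as the paper: the paper's proof also rests entirely on Lemma~\ref{lemma:eCor} together with the definition of $\sim_{dy}$, merely phrased by contraposition (a mismatch between $\omega_n$ and $e(\underline{n}_\Nat,\eta)$ would force $\omega(m)\neq\eta(dy(\underline{m}_\Nat,\eta))$ for some $m$, contradicting $\omega\sim_{dy}\eta$) rather than as your direct bitwise-plus-length comparison. The inductive write-up you sketch simply inlines the induction already underlying Lemma~\ref{lemma:eCor}, so it introduces no genuinely different idea.
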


\begin{proof}
The proof is by contraposition.
Suppose $\omega_n\neq e(\underline{n}_\Nat,\eta)$.
As a consequence of Lemma~\ref{lemma:eCor},
there is an $m\in\Nat$ such that 
$\omega(m)\neq \eta(dy(\underline{m}_\Nat,\eta))$,
which is a contradiction.
\end{proof}

We can finally prove Proposition~\ref{prop:SFPtoPOR}.

\begin{proof}[Proof of Proposition~\ref{prop:SFPtoPOR}]
By Corollary~\ref{cor:STACS8},
there is a function $f'\in\POR$ and a polynomial
$p$ 
such that for any
$\str,\strT\in\Ss$, $\omega \in \Bool^\Nat$
and $\eta \in \Os$,
\begin{align}
\str= \omega_{p(x)} \rightarrow 
f(\str,\omega)=f'(\str,\strT,\eta).
\tag{$*$}
\end{align}
Let us fix $\overline{\omega}\in\{\omega\in\Bool^\Nat \ | \ 
f(\str,\omega)=y\}$,
its image with respect to $sim_{dy}$ is in
$
\{\eta \in \Os \ | \ f'(x,e(p'(s(x,\eta),\eta),
\eta),\eta)= \strT\},
$
where $s$ is the function of $\POR$ computing
$\one^{|x|+1}$.
By Lemma~\ref{lemma:STACS29},
$$
\overline{\omega}_{p(x)} = e(p(size(x,\eta),\eta),
$$
where $p'\in \POR$
computes the polynomial $p$,
defined without recurring to $\query$.
%
%
Furthermore, given a fixed 
$\overline{\eta} \in\{\eta \in \Os \ | \
f'(\str,e(p'(size(\str,\eta),\eta), \eta),\eta)=
\strT\}$,
its pre-image with respect to $\sim_{dy}
\in \{\omega\in\Bool^\Nat \ | \ f(\str,\omega)=\strT\}$.
The proof is analogous to the one above.
Since $\sim_{dy}$ is a bijection between
$$
\mu(\{\omega\in\Bool^\Nat \ | \ f(\str,\omega)=\strT\})
= 
\mu(\{\eta \in \Os | f'(x,e(p(size(x,\eta),
\eta),\eta),\eta) = \strT\}),
$$
which concludes the proof.
\end{proof}

\subsection{From $\POR$ to $\SFP$}

First, we define the imperative language
$\SIFPra$ together with its
big-step semantics, 
and prove its poly-time programs equivalent
to $\POR$.
%

\begin{defn}[Programs of $\SIFPra$]
The language of programs of $\SIFPra$
$\LSra$ – i.e.~the set of strings
produced by non-terminal symbol
$\prog{Stm}_{\text{RA}}$ – is defined
as follows:
\small
\begin{align*}
\prog{Id} &\df X_i \midd Y_i \midd S_i \midd
R \midd Q \midd Z \midd T \\
\prog{Exp} &\df \epsilon \midd
\prog{Exp.0} \midd \prog{Exp.1} \midd
\prog{Id} \midd \prog{Exp} \sqsubseteq \prog{Id} \midd
\prog{Exp} \wedge \prog{Id} \midd
\neg \prog{Exp} \\
\prog{Stm}_{\text{RA}} &\df
\prog{Id} \leftarrow \prog{Exp} \midd
\prog{Stm}_{\text{RA}}; \prog{Stm}_{\text{RA}} \midd
\mathtt{while} (\prog{Exp}) \{\prog{Stm}\}_{\text{RA}} \midd
\mathtt{Flip}(\prog{Exp}),
\end{align*}
\normalsize
with $i\in\Nat$.
\end{defn}
\noindent
The big-step semantics associated with the language of $\SIFPra$
programs relies on the notion of \emph{store}.

\begin{defn}[Store]
A \emph{store} is a function 
$\Sigma:\prog{Id} \rightharpoonup  \Ss$.
An \emph{empty} store is a store which is total
and constant on $\progE$.
We represent such object as $[ \ ]$.
We define the updating of a store $\Sigma$
with a mapping from $y\in\prog{Id}$ to $\str\in \Ss$
as:
$$
\Sigma[y\leftarrow \str](x) \df \begin{cases}
\str \ &\text{if } x=y \\
\Sigma(x) \ &\text{otherwise.}
\end{cases}
$$
\end{defn}

\begin{defn}[Semantics of Expressions in $\SIFPra$]
The semantics of an expression $E\in\mathcal{L}(\prog{Exp})$
is the smallest relation
$\rightharpoonup : \mathcal{L}(\prog{Exp})
\times (\prog{Id} \to \Ss) \times \Os \times
\Ss$ closed under the following rules:

\small
\begin{minipage}{\linewidth}
\begin{minipage}[t]{0.4\linewidth}
\bigskip
\begin{prooftree}
\AxiomC{}
\UnaryInfC{$\langle \progE,\Sigma\rangle
\rightharpoonup \eepsilon$}
\end{prooftree}
\end{minipage}
\hfill
\begin{minipage}[t]{0.6\linewidth}
\begin{prooftree}
\AxiomC{$\langle e,\Sigma\rangle \rightharpoonup
\str$}
\UnaryInfC{$\langle e.\prog{b},\Sigma\rangle
\rightharpoonup \str \conc \bool$}
\end{prooftree}
\end{minipage}
\end{minipage}

\begin{minipage}{\linewidth}
\begin{minipage}[t]{0.4\linewidth}
\begin{prooftree}
\AxiomC{$\langle e,\Sigma\rangle
\rightharpoonup \str$}
\AxiomC{$\Sigma(\prog{Id})=\strT$}
\AxiomC{$\str\subseteq \strT$}
\TrinaryInfC{$\langle e\sqsubseteq \prog{Id},
\Sigma\rangle \rightharpoonup \one$}
\end{prooftree}
\end{minipage}
\hfill
\begin{minipage}[t]{0.6\linewidth}
\begin{prooftree}
\AxiomC{$\langle e,\Sigma\rangle
\rightharpoonup \str$}
\AxiomC{$\Sigma(\prog{Id})=\strT$}
\AxiomC{$\str \not\subseteq \strT$}
\TrinaryInfC{$\langle e\sqsubseteq \prog{Id},
\Sigma\rangle \rightharpoonup \zero$}
\end{prooftree}
\end{minipage}
\end{minipage}

\begin{minipage}{\linewidth}
\begin{minipage}[t]{0.4\linewidth}
\begin{prooftree}
\AxiomC{$\Sigma(\prog{Id})=\str$}
\UnaryInfC{$\langle \prog{Id},\Sigma\rangle
\rightharpoonup \str$}
\end{prooftree}
\end{minipage}
\hfill
\begin{minipage}[t]{0.6\linewidth}
\begin{prooftree}
\AxiomC{$\prog{Id} \not \in dom(\Sigma)$}
\UnaryInfC{$\langle \prog{Id},\Sigma\rangle
\rightharpoonup \eepsilon$}
\end{prooftree}
\end{minipage}
\end{minipage}

\begin{minipage}{\linewidth}
\begin{minipage}[t]{0.4\linewidth}
\begin{prooftree}
\AxiomC{$\langle e,\Sigma\rangle \rightharpoonup
\progZ$}
\UnaryInfC{$\langle \neg e,\Sigma\rangle
\rightharpoonup \one$}
\end{prooftree}
\end{minipage}
\hfill
\begin{minipage}[t]{0.6\linewidth}
\begin{prooftree}
\AxiomC{$\langle e,\Sigma\rangle \rightharpoonup\str$}
\AxiomC{$\sigma \neq\zero$}
\BinaryInfC{$\langle \neg e,\Sigma\rangle 
\rightharpoonup \zero$}
\end{prooftree}
\end{minipage}
\end{minipage}

\begin{minipage}{\linewidth}
\begin{minipage}[t]{0.4\linewidth}
\begin{prooftree}
\AxiomC{$\langle e,\Sigma\rangle \rightharpoonup
\one$}
\AxiomC{$\Sigma(\prog{Id})=\one$}
\BinaryInfC{$\langle e\wedge \prog{Id},\Sigma\rangle
\rightharpoonup \one$}
\end{prooftree}
\end{minipage}
\hfill
\begin{minipage}[t]{0.6\linewidth}
\begin{prooftree}
\AxiomC{$\langle e,\Sigma \rangle \rightharpoonup
\str$}
\AxiomC{$\Sigma(\prog{Id})=\strT$}
\AxiomC{$\str\neq \one \wedge
\strT \neq \one$}
\TrinaryInfC{$\langle e \wedge \prog{Id},\Sigma\rangle
\rightharpoonup \zero$}
\end{prooftree}
\end{minipage}
\end{minipage}
$$
$$
\normalsize
where $\prog{b}\in \{\prog{0},\prog{1}\}$.\footnote{We assume that
if $\prog{b}=\prog{1}$, then $\bool =\one$,
and if $\prog{b}=\prog{0}$, then $\bool=\zero$.}
\end{defn}

\begin{defn}[Big-Step Operational Semantics]
The semantics of a program $\prog{P} \in\LSra$ is the smallest
relation $\triangleright \subseteq \LSra \times
(\prog{Id} \to \Ss)
\times \Os \times (\prog{Id} \to \Ss)$
closed under the following rules:

\begin{minipage}{\linewidth}
\begin{minipage}[t]{0.4\linewidth}
\begin{prooftree}
\AxiomC{$\langle e,\Sigma\rangle \rightharpoonup
\str$}
\UnaryInfC{$\langle \prog{Id} \rightharpoonup
e,\Sigma,\eta\rangle \triangleright 
\Sigma[\prog{Id} \leftarrow \str]$}
\end{prooftree}
\end{minipage}
\hfill
\begin{minipage}[t]{0.6\linewidth}
\begin{prooftree}
\AxiomC{$\langle s,\Sigma,\eta\rangle 
\triangleright \Sigma'$}
\AxiomC{$\langle t,\Sigma',\eta\rangle
\triangleright \Sigma''$}
\BinaryInfC{$\langle s;t,\Sigma,\eta\rangle
\triangleright \Sigma''$}
\end{prooftree}
\end{minipage}
\end{minipage}

\begin{prooftree}
\AxiomC{$\langle e,\Sigma \rangle \rightharpoonup
\one$}
\AxiomC{$\langle s,\Sigma,\eta\rangle 
\triangleright \Sigma'$}
\AxiomC{$\langle \mathtt{while}(e)\{s\},
\Sigma',\eta\rangle \triangleright \Sigma''$}
\TrinaryInfC{$\langle \mathtt{while}(e)\{s\},
\Sigma, \eta\rangle \triangleright \Sigma''$}
\end{prooftree}

\begin{minipage}{\linewidth}
\begin{minipage}[t]{0.4\linewidth}
\begin{prooftree}
\AxiomC{$\langle e,\Sigma \rangle \rightharpoonup
\str $}
\AxiomC{$\str \neq \one$}
\BinaryInfC{$\langle \mathtt{while}(e)\{s\},
\Sigma,\eta\rangle \triangleright \Sigma$}
\end{prooftree}
\end{minipage}
\hfill
\begin{minipage}[t]{0.6\linewidth}
\begin{prooftree}
\AxiomC{$\langle e,\Sigma\rangle \rightharpoonup
\str$}
\AxiomC{$\eta(\str)=\bool$}
\BinaryInfC{$\langle \mathtt{Flip}(e),\Sigma,\eta\rangle
\triangleright \Sigma[R\leftarrow \bool]$}
\end{prooftree}
\end{minipage}
\end{minipage}
\end{defn}
\noindent
The semantics allows us to associate each 
program of $\SIFPra$ to the function
it evaluates:

\begin{defn}
Function evaluation
by a correct $\SIFPra$ program $\prog{P}$
is $\model{\cdot} : \LSra \to (\Ss^n \times \Os 
\to \Ss)$ defined as below:\footnote{Instead
of the infixed notation for $\triangleright$,
we use its prefixed notation.
So, the notation express the store associated
to the $P$,
$\Sigma$ and $\omega$ by $\triangleright$.
Moreover, we employ the same function symbol
$\triangleright$ to denote two distinct functions:
the big-step operational semantics of $\SIFPra$
and the big-step operational semantics
of programs in $\SIFPla$.}
$$
\model{\prog{P}} \df \lambda x_1,\dots, x_n,\eta
\triangleright (\langle \prog{P}, [ \ ] [X_1 \leftarrow x_1],
\dots, [X_n \leftarrow x_n],\eta\rangle )
(R).
$$
\end{defn}
Observe that, among the different registers,
the register $R$ is meant to contain
the value computed by the program at the
end of its execution.
Similarly, the $\{X_i\}_{i\in\Nat}$ registers
are used to store the inputs of the function.
We now establish the correspondence beteween
$\POR$ and $\SIFPra$.

\begin{lemma}\label{lemma:PORtoSIFPra}
For any function $f\in\POR$,
there is a poly-time $\SIFPra$ program
$\prog{P}$ such that
for all $x_1,\dots, x_n$,
$$
\model{\prog{P}}(x_1,\dots, x_n,\omega)=f(x_1,\dots,
x_n,\omega).
$$
Moreover, if $f\in\POR$, then
$\prog{P}$ does not contain any $\mathtt{Flip}(e)$
statement.
\end{lemma}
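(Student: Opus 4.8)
The plan is to proceed by induction on the structure of $f \in \POR$, building for each function a program that, run with the input strings placed in the registers $X_1, \dots, X_n$, leaves the value of $f$ in the output register $R$. I maintain throughout the invariant that each constructed program touches only registers disjoint from those reserved for the inputs (beyond reading $X_1, \dots, X_n$), so that subprograms may be composed without interference. The key remark for the final clause is that the statement $\mathtt{Flip}(e)$ is introduced in exactly one place, the base case for the query function; hence whenever $f$ is built without $\query$, the resulting program contains no $\mathtt{Flip}$ statement.

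For the base cases, each of the five basic functions of $\POR$ admits a direct implementation. The empty function $\emptyS$ is realized by $R \leftarrow \epsilon$, the projection $\projS^n_i$ by $R \leftarrow X_i$, and the word-successor $\successorS_b$ by $R \leftarrow X_1.\prog{b}$. The query function $\query(x,\eta)=\eta(x)$ is realized by the single statement $\mathtt{Flip}(X_1)$, whose semantics reads the oracle at the position named by $X_1$ and stores the resulting bit in $R$ — this is the sole source of $\mathtt{Flip}$. The conditional $\condS$ must branch on whether its first argument is $\epsilon$ or ends in $\zzero$ or $\oone$; since $\LSra$ has no primitive if-then-else, I realize each test by a Boolean expression (built from $\sqsubseteq$, $\wedge$, $\neg$, together with short auxiliary statements exposing the last bit of the argument) and encode the three-way branch by \texttt{while}-loops with mutually exclusive guards whose bodies execute at most once, each copying the appropriate input into $R$.

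For composition, suppose $f(\vec{x},\eta)=g(h_1(\vec{x},\eta),\dots,h_k(\vec{x},\eta),\eta)$ with programs $\prog{P}_g,\prog{P}_{h_1},\dots,\prog{P}_{h_k}$ supplied by the induction hypothesis. After renaming registers so that these programs are pairwise non-interfering, I form the sequential composition that runs each $\prog{P}_{h_i}$, saves its output into a fresh register $Y_i$, and finally runs $\prog{P}_g$ with the $Y_i$ playing the role of its inputs. Since a constant number of poly-time programs are run in sequence on inputs of polynomially bounded size, the composite program is again poly-time.

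The main obstacle is bounded recursion on notation, where $f(\vec{x},\epsilon,\eta)=g(\vec{x},\eta)$ and $f(\vec{x},y\bool,\eta)=h_b(\vec{x},y,f(\vec{x},y,\eta),\eta)|_{t(\vec{x},y)}$, with $\prog{P}_g,\prog{P}_{h_0},\prog{P}_{h_1}$ given by induction and the bounding term $t$ built from $\epsilon,\zzero,\oone,\cconc,\ttimes$. I implement the recursion by one \texttt{while}-loop that scans $y$ upward from its empty prefix: a register $W$ holds the current prefix (initialized to $\epsilon$) and $R$ holds $f(\vec{x},W,\eta)$ (initialized by running $\prog{P}_g$). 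While $W$ differs from the register holding $y$, the loop reads off the next bit of $y$ (by testing whether $W\zzero$ or $W\oone$ is a prefix of $y$ via $\sqsubseteq$), extends $W$ by that bit, runs $\prog{P}_{h_0}$ or $\prog{P}_{h_1}$ accordingly on $(\vec{x},W',R)$ where $W'$ is the previous prefix, and truncates the result to the length of $t(\vec{x},W')$ before storing it back in $R$; the truncation is itself an inner bit-copying loop, and $t$ is evaluated by a program mirroring its definition from $\cconc$ and $\ttimes$. Correctness follows by an inner induction on $|W|$. The delicate point is the time bound: the outer loop runs $|y|$ times, each iteration running poly-time programs, and the truncation step is precisely what keeps the value in $R$ within the polynomial length $|t(\vec{x},W')|$, so that all intermediate arguments stay polynomially bounded and the whole program remains poly-time — without this truncation the values could grow exponentially. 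Finally, since $\mathtt{Flip}$ enters the construction only through the query base case, the program is $\mathtt{Flip}$-free whenever $f$ is defined without $\query$, which yields the stated \emph{moreover} clause.
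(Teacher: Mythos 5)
Your proposal is correct and follows essentially the same route as the paper: a structural induction maintaining the invariant that inputs sit untouched in the $X$ registers and the result lands in $R$, with base cases translated directly (and $\mathtt{Flip}$ arising only from $\query$), and composition and bounded recursion handled by sequencing and a prefix-scanning loop. You fill in more detail for the inductive cases and for the four-ary conditional than the paper's sketch, which defers those constructions to an external reference, but the underlying argument is the same.
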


\begin{proof}[Proof Sketch]
The proof is quite simple under a technical
viewpoint,
relying on the fact that it is possible
to associate to any function of $\POR$
an equivalent poly-time program,
and on the possibility to compose them
and implement bounded recursion on notation
in $\SIFPra$ with a polynomial overhead.
Concretely, for any function $f\in\POR$,
we define a program $\mathscr{L}_f$ 
such that $\model{\mathscr{L}_f}(x_1,\dots, x_n)
= f(x_1,\dots, x_n)$.
The correctness of $\mathscr{L}_f$ is given 
by the following invariant properties:
\begin{itemize}
\itemsep0em

\item the result of the computation is stored in 
$R$

\item inputs are stored in the registers of the group
$X$

\item the function $\mathscr{L}$ does not 
change
the values it accesses as input.
\end{itemize}
We define $\mathscr{L}_f$
as follows:
\begin{align*}
\mathscr{L}_E &= R \leftarrow \epsilon \\
\mathscr{L}_{S_0} &= R \leftarrow X_0.\prog{0} \\
\mathscr{L}_{S_1} &= R \leftarrow X_0.\prog{1} \\
\mathscr{L}_{P^n_i} &= R \leftarrow X_i \\
\mathscr{L}_C &= R \leftarrow X_1 \sqsubseteq X_2 \\
\mathscr{L}{\query} &= \mathtt{Flip}(X_1).
\end{align*}
The correctness of base cases is trivial to prove
and the only translation containing 
$\mathtt{Flip}(e)$ for some 
$e\in \mathcal{L}(\prog{Exp})$ is that of $\query$.
The encoding of composition and bounded
recursion are more convoluted.\footnote{The 
proof of their correctness requires a conspicuous amount of
low-level definitions and technical results. 
For an extensive presentation, see~\cite{Davoli}.}
\end{proof}

\subsubsection{From $\SIFPra$ to $\SIFPla$}
We now show that every program in $\SIFPra$
is equivalent to one in $\SIFPla$.
First, we need to define the language and semantics
for $\SIFPla$.

\begin{defn}[Language of $\SIFPla$]
The language of $\SIFPla$ $\LSla$,
that is the set of strings produced by non-terminal
symbols $\prog{Stm}_{\text{LA}}$, is defined as follows:
$$
\prog{Stm}_{\text{LA}} \df \prog{Id} \leftarrow \prog{Exp} \midd
\prog{Stm}_{\text{LA}} ; \prog{Stm}_{\text{LA}} \midd
\mathtt{while}(\prog{Exp})\{\prog{Stm}\}_{\text{LA}} \midd
\mathtt{RandBit}().
$$
\end{defn}

\begin{defn}[Big-Step Semantics of $\SIFPla$]
The semantics of a program $\prog{P}\in\LSla$ is the smallest relation
$\triangleright \subseteq (\LSla \times (\prog{Id} \to
 \Ss \times \Bool^\Nat) \times
 ((\prog{Id} \to \Ss \times \Bool^\Nat)$
 closed under the following rules:

\begin{minipage}{\linewidth}
\begin{minipage}[t]{0.4\linewidth}
 \begin{prooftree}
 \AxiomC{$\langle e,\Sigma\rangle \rightharpoonup \str$}
 \UnaryInfC{$\langle \prog{Id} \leftarrow e,\Sigma,\omega\rangle
 \triangleright \langle \Sigma[\prog{Id}\leftarrow \str],\omega\rangle$}
 \end{prooftree}
 \end{minipage}
 \hfill
 \begin{minipage}[t]{0.6\linewidth}
\begin{prooftree}
\AxiomC{$\langle s,\Sigma,\omega\rangle \triangleright
\langle \Sigma',\omega'\rangle$}
\AxiomC{$\langle t,\Sigma',\omega\rangle \triangleright
\langle \Sigma'',\omega''\rangle$}
\BinaryInfC{$\langle s;t,\Sigma,\omega\rangle \triangleright
\langle \Sigma'',\omega''\rangle$}
\end{prooftree} 
\end{minipage}
\end{minipage}

\begin{prooftree}
\AxiomC{$\langle e,\Sigma\rangle \rightharpoonup
\one$}
\AxiomC{$\langle s,\Sigma,\omega\rangle \triangleright
\langle \Sigma',\omega'\rangle$}
\AxiomC{$\langle \mathtt{while}(e) \{s\}, \Sigma',\omega\rangle
\triangleright \langle \Sigma'',\omega''\rangle$}
\TrinaryInfC{$\langle \mathtt{while}(e)\{s\},
\Sigma,\omega\rangle \triangleright \langle \Sigma',\omega''\rangle$}
\end{prooftree}

\begin{minipage}{\linewidth}
\begin{minipage}[t]{0.4\linewidth}
\begin{prooftree}
\AxiomC{$\langle e,\Sigma\rangle \rightharpoonup \str$}
\AxiomC{$\str\neq \one$}
\BinaryInfC{$\langle \mathtt{while}(e)\{s\},\Sigma,\omega\rangle
\triangleright \langle \Sigma, \omega\rangle$}
\end{prooftree}
 \end{minipage}
 \hfill
 \begin{minipage}[t]{0.6\linewidth}
 \bigskip
\begin{prooftree}
\AxiomC{}
\UnaryInfC{$\langle \mathtt{RandBit}(), \Sigma,
\bool \omega\rangle \triangleright
\langle\Sigma[R\leftarrow \bool],\omega\rangle$}
\end{prooftree}
\end{minipage}
\end{minipage}
\end{defn}

\begin{lemma}\label{lemma:SIFPratoSIFPla}
For each total program $\prog{P}\in \SIFPra$,
there is a $\query\in \SIFPla$ such that,
for any $x,y\in \Nat$,
$$
\mu(\{\omega \in \Os \ | \ \model{\prog{P}}(x,\eta)=y\})
= \mu(\{\omega \in \Bool^\Nat \ | \ 
\model{\query}(x,\eta)=y\}).
$$
Moreover, if $\prog{P}$ is poly-time $\query$, too.
\end{lemma}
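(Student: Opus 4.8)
The plan is to translate $\prog{P} \in \SIFPra$ into $\query \in \SIFPla$ by replacing every $\mathtt{Flip}(e)$ statement with a \emph{memoised} random-access routine built from $\mathtt{RandBit}()$. The essential mismatch between the two languages is that in $\SIFPra$ the randomness is a fixed function $\eta \in \Os = \Bool^\Ss$, so that two queries to the same position $\str$ necessarily return the same bit $\eta(\str)$, whereas in $\SIFPla$ each $\mathtt{RandBit}()$ consumes the head of a stream $\omega \in \Bool^\Nat$ and hence always returns a fresh, independent bit. To bridge this, $\query$ keeps in a group of fresh auxiliary registers a lookup table recording all pairs $(\str,\bool)$ of positions already queried together with the bit returned. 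On reaching the translation of $\mathtt{Flip}(e)$, $\query$ evaluates $e$ to a string $\str$ and scans the table: if $\str$ already occurs it copies the stored bit into $R$; otherwise it executes $\mathtt{RandBit}()$ and writes the resulting bit both into $R$ and, paired with $\str$, into the table. Since $\SIFPla$ provides assignment, sequencing, $\mathtt{while}$-loops and the string operations of $\prog{Exp}$, this search-and-store routine is definable, and the rest of $\prog{P}$ is translated homomorphically (choosing register names disjoint from those used by the table).

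Correctness I would prove, for each fixed input $x$, by exhibiting a correspondence between oracles and streams under which the two runs proceed in lockstep. The key is an invariant, established by induction on the height of the big-step derivation: if during its run on $\eta$ the program $\prog{P}$ has so far queried the \emph{distinct} positions $\str_1,\dots,\str_j$ (in order of first occurrence), then $\query$ run on any stream $\omega$ with $\omega(i-1)=\eta(\str_i)$ for $1\le i\le j$ reaches a matching state in which (i) the non-auxiliary registers carry the same strings, (ii) the table equals $\{(\str_1,\eta(\str_1)),\dots,(\str_j,\eta(\str_j))\}$, and (iii) exactly the first $j$ bits of $\omega$ have been consumed. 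The only nontrivial case is the $\mathtt{Flip}$/$\mathtt{RandBit}$ step: if $\str$ is a repeat, the table look-up returns $\eta(\str)$ and no stream bit is read; if $\str$ is new, $\mathtt{RandBit}()$ consumes $\omega(j)=\eta(\str_{j+1})$, which is exactly the value $\mathtt{Flip}$ would obtain, preserving the invariant. In particular both programs terminate (as $\prog{P}$ is total) and return the same output $y$ in $R$.

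From the invariant the measure equality follows by a cylinder decomposition. For fixed $x$ and output $y$, the set $\{\eta : \model{\prog{P}}(x,\eta)=y\}$ is a disjoint, countable union of cylinders, one per terminating computation path reaching $y$, each path fixing the oracle on its $k$ distinct queried positions and hence of measure $2^{-k}$ (distinct coordinates of $\eta$ being independent fair bits); distinct complete paths disagree on some common query and so fix disjoint cylinders. By the invariant $\query$ follows exactly these paths, reading on each a length-$k$ stream prefix whose bits are the corresponding answers, so $\{\omega : \model{\query}(x,\omega)=y\}$ is the disjoint union of the cylinders of $\Bool^\Nat$ fixing those prefixes, again of measure $2^{-k}$. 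Summing the equal per-path contributions over the common set of paths gives
$$
\mu(\{\eta \in \Os \mid \model{\prog{P}}(x,\eta)=y\}) = \mu(\{\omega \in \Bool^\Nat \mid \model{\query}(x,\omega)=y\}).
$$

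For the poly-time claim, if $\prog{P}$ runs in time $p(|x|)$ it performs at most polynomially many $\mathtt{Flip}$ steps on positions of at most polynomial length, so the table stays polynomially bounded and each look-up or insertion costs polynomial time; the translation thus incurs only polynomial overhead, whence $\query$ is poly-time. I expect the main obstacle to be the faithful-simulation invariant — making the adaptive, possibly repeated queries interact correctly with the dynamically growing table and the sequential stream — rather than the comparatively routine measure bookkeeping that it supports.
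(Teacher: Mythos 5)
Your proposal is correct and follows essentially the same route as the paper: both translate $\mathtt{Flip}(e)$ into a memoised lookup-or-$\mathtt{RandBit}()$ routine backed by an associative table of already-queried positions, and both prove a lockstep simulation invariant relating the oracle's values on the distinct queried positions to the consumed stream prefix, from which the measure equality follows. The only difference is presentational: the paper formalises the simulation via auxiliary small-step semantics and an explicit simulation relation $\Theta$ (with a translation map $\alpha$, a store correspondence $\beta$, and a constraint-transfer map $\gamma$), whereas you induct directly on big-step derivations and conclude with an explicit disjoint cylinder decomposition, which amounts to the same bookkeeping.
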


\begin{proof}[Proof Sketch]
We prove Lemma~\ref{lemma:SIFPratoSIFPla}
showing that $\SIFPra$ can be simulated in 
$\SIFPla$ given two novel \emph{small-step} semantic relations
($\toLA,\toRA$) obtained by splitting the corresponding
big-step semantics into smallest transitions,
one per each $\cdot \ ; \ \cdot$ instruction.
The intuitive idea behind this novel semantics
is to enrich big-step operational semantics with some piece
of information, which are needed to build 
an induction proof of the reduction from
$\SIFPra$ to $\SIFPla$.
In particular, we employ a list $\Psi$ containing
pairs $(x,\prog{b})$,
to keep track of the previous call to the primitive
$\Flip(x)$ forv $\SIFPra$,
and to the result of the $x$-th call of 
primitive $\mathtt{RandBit}()$ for $\SIFPla$.

The main issue 
consists in the simulation of the access to the
random tape.
Then, we define the translation so that
it stores in a specific and fresh register
an associative table recording all queries 
$\str$ within a $\Flip(\str)$ instruction and
the result \prog{b}
picked from $\eta$ and returned.
The addition of the map $\Psi$ allows 
to replicate the content of the associative table $\Psi$
explicitly in the semantics of program.
This simulation requires a translation of
$\Flip(e)$ into an equivalent procedure.
\begin{itemize}
\itemsep0em
\item at each simulated query $\Flip(e)$,
the destination program looks up the associative
table

\item if it finds the queried coordinate $e$ within
a pair $(e,\prog{b})$, it returns $\prog{b}$.
Otherwise, (i)
it reduced $\Flip(e)$ to a call of 
$\mathtt{RandBit}()$ which outputs either $\bool=\zero$
or $\bool=\one$, 
(ii) it records the couple $\langle e,\bool\rangle$
in the associative table and returns $\bool$.
\end{itemize}
The construction is convoluted.
This kind of simulation preserves the distributions
of strings computed by the original program:
\begin{enumerate}
\itemsep0em
\item we show that the big-step and small-step
semantics for $\SIFPra$ and $\SIFPla$ are
equally expressive.

\item we define a relation $\Theta$ between configurations of
the small-step semantics  of $\SIFPra$
and $\SIFPla$:
\begin{itemize}
\itemsep0em

\item[$*$] first, we define a function
$\alpha:\LSra \to \LSla$ mapping the program
$\prog{P}_{\text{RA}} \in \LSra$ into the corresponding
$\prog{P}_{\text{LA}}\in\LSla$, translating 
$\Flip(e)$ as described above.

\item[$*$] we define a relation $\beta$ between a triple
$\langle \prog{P}_{\text{RA}},\Sigma,\Psi\rangle$ and a single
store $\Gamma$,
which is meat to capture the configuration-to -store
dependencies between the configuration
of $\prog{P}_{\text{RA}}$ running with store $\Sigma$
and computed associative table $\Psi$
and the store $\Gamma$ of simulating $\prog{P}_{\text{LA}}$.
So, $\Gamma$ stores a representation of $\Psi$
into a dedicated register.

\item[$*$] a function $\gamma$ which transforms the constraints
on the oracle gathered by $\toRA$ to the information
collected by $\toLA$.
Then, $\Theta$ is so defined that $
\Theta(\langle \prog{P},\Sigma_1,\Psi\rangle, \langle \query,
\Sigma_2,\Theta\rangle)$ holds when:
(i.) $\alpha(\prog{P})=\query$, (ii.) $\beta(\langle 
\prog{P},\Sigma_1,\Psi\rangle,
\Sigma_2)$,
(iii.) $\gamma(\Psi)=\Phi$,
(iv.) $\mu(\Psi)=\mu(\Phi)$. 
\end{itemize}

\item we show that $\Theta$ associates to
each triple $\langle \prog{P}_{\text{RA}},\Sigma, \Psi\rangle$
other triples 
$\langle \prog{P}_{\text{LA}},\Gamma,\Phi\rangle$
which weakly simulate the relation $\toRA$ with respect
to $\toLA$.
This is displayed by Figure~\ref{fig:SIFPratoSIFPla}.
\end{enumerate}
  \begin{figure}[]
    \centering
    \begin{tikzpicture}[node distance=5cm]
        \node(PRA) at (-3,2) {$\langle P_{\text{RA}};Q_{\text{RA}}, \Sigma, \Psi\rangle$};
        \node(PLA) at (-3,-0) {$\langle P_{\text{LA}};Q_{\text{LA}}, \Gamma, \Phi\rangle$};
        \node(P1RA) at (3,2) {$\langle Q_{\text{RA}}, \Sigma', \Psi'\rangle$};
        \node(P1LA) at (3,0) {$^*\langle Q_{\text{LA}}, \Gamma', \Phi'\rangle$};
        \node at (-3.3,1) {$\Theta$};
        \node at (3.3,1) {$\Theta$};

        \draw[->,densely dashed,thick] (PRA) to (P1RA);
        \draw[->,densely dashed,thick] (PLA) to (P1LA);
        \draw[->] (PRA) to (PLA);
        \draw[->] (P1RA) to (P1LA);
    \end{tikzpicture}
    \caption{Commutation schema between $\SIFPra$ and $\SIFPla$}
    \label{fig:SIFPratoSIFPla}
  \end{figure}
\end{proof}

\subsubsection{From $\SIFPla$ to $\SFP_{\text{OD}}$}
 We introduce $\SFP_{\text{OD}}$,
  the class corresponding to $\SFP$ defined
  on a variation of STMs which can read characters from the oracle
  \emph{on-demand},
  and show that $\SIFPla$ can be reduced to it.
  For readability's sake, we avoid cumbersome details,
  focussing on an informal (but exhaustive) description
  of how to build the on-demand STM.
  %
  
  \begin{prop}\label{prop:SIFPlatoSFPod}
  For every $\prog{P}\in\LSla$, there is a $\STM^{\prog{P}} \in \SFP$
  such that for any $x\in\Ss$ and 
 $\omega \in\Bool^\Nat$,
 $$
 \prog{P}(x,\omega)=\STM^{\prog{P}}(x,\omega).
 $$
  Moreover, if $\prog{P}$ is poly-time, then $\STM^\prog{P}$ is also
  poly-time. 
  \end{prop}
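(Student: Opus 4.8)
The plan is to build $\STM^{\prog{P}}$ by a direct compilation of $\prog{P}$ into the finite control of an on-demand stream Turing machine, exploiting the fact that both $\SIFPla$ and on-demand STMs consume the random oracle strictly from left to right. Since $\SFP_{\text{OD}}$ is exactly the function class associated with on-demand STMs, it suffices to exhibit such a machine, show that it computes the same string-valued function of $(x,\omega)$, and verify that the polynomial time bound is preserved; membership in $\SFP$ then follows from the identification of $\SFP_{\text{OD}}$ with $\SFP$.

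First I would fix a work-tape representation of stores. A fixed program $\prog{P}$ mentions only finitely many identifiers $I_1,\dots,I_m$ (including $R$), so a store $\Sigma$ is encoded as a single work-tape word $\Sigma(I_1)\#\cdots\#\Sigma(I_m)\#$ over an extended alphabet with a fresh separator $\#$. Reading a register amounts to locating its block, while updating it with a string of different length is realized by shifting the tape suffix, a manipulation costing a number of steps polynomial in the total tape length. Next I would compile expression evaluation by induction on the grammar of $\prog{Exp}$ — the cases $\epsilon$, $e.\prog{0}$, $e.\prog{1}$, $\prog{Id}$, $e\sqsubseteq\prog{Id}$, $e\wedge\prog{Id}$, $\neg e$ — each of which is a standard poly-time string or Boolean operation, so there is an STM subroutine that, from the encoding of $\Sigma$, computes the unique $\str$ with $\langle e,\Sigma\rangle\rightharpoonup\str$ and leaves it in a scratch region.

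Then I would compile statements by structural induction, translating each big-step rule into an STM procedure: assignment evaluates the expression and rewrites the target block; sequencing runs the two compiled machines in turn on the shared work tape; $\mathtt{while}(e)\{s\}$ evaluates the guard and, on $\one$, runs the body and re-tests, otherwise exits (a loop realized in the control); and $\mathtt{RandBit}()$ performs exactly one on-demand oracle read, writing the obtained bit into the $R$-block. Upon reaching its final state the machine exposes the $R$-block as its output suffix, matching $\model{\prog{P}}$. Correctness follows by induction on the height of the derivation $\langle\prog{P},\Sigma,\omega\rangle\triangleright\langle\Sigma',\omega'\rangle$: started with work tape encoding $\Sigma$ and unread oracle $\omega$, the machine halts with encoding $\Sigma'$ and unread oracle $\omega'$. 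The decisive point is the rule $\langle\mathtt{RandBit}(),\Sigma,\bool\omega\rangle\triangleright\langle\Sigma[R\leftarrow\bool],\omega\rangle$, which consumes precisely the leftmost unread bit; this is exactly the on-demand read, so the two models consume oracle bits in lockstep and the induced map on $\Bool^\Nat$ is the identity. Hence $\{\omega \mid \prog{P}(x,\omega)=\strT\}=\{\omega \mid \STM^{\prog{P}}(x,\omega)=\strT\}$ for all $x,\strT$, giving the claimed equality of functions.

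The remaining obligation is the complexity bound, and here I would invoke the definition of poly-time $\SIFPla$ programs: both the number of simulated basic steps and the sizes of all intermediate stored strings are bounded by a polynomial $p(|x|)$. Since the total tape length then stays polynomially bounded, each compiled construct — in particular each register update with its suffix shift, and each guard re-evaluation inside a loop — costs polynomially many STM steps, while the number of executed constructs is itself polynomial; multiplying yields a polynomial overall running time, so $\STM^{\prog{P}}\in\SFP_{\text{OD}}=\SFP$. I expect the main obstacle to be purely bookkeeping: checking that the variable-length register updates and the loop simulation remain within the polynomial budget, rather than anything conceptual. Indeed, the genuinely delicate randomness-alignment — reconciling coordinate-indexed access with sequential access — was already discharged in the passage from $\SIFPra$ to $\SIFPla$; with both sides now reading the oracle on demand, the synchronization here is immediate and needs no associative table.
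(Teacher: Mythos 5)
Your proposal is correct and follows essentially the same route as the paper: compile registers into tape blocks, compile expressions and statements by structural induction on the grammar and the big-step rules, implement $\mathtt{RandBit}()$ as a single on-demand oracle read so the two models consume $\omega$ in lockstep, and close with a polynomial bookkeeping argument. The only difference is cosmetic: the paper builds a \emph{multi-tape} on-demand STM (one tape per register plus a scratch tape for expression values) and then invokes the standard poly-time multi-tape-to-single-tape reduction, whereas you encode the whole store on one tape with separators and pay for register updates by suffix shifts --- both yield the same polynomial bound.
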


  \begin{proof}[Proof Sketch]
 The construction relies on the implementation of a
 program in $\SIFPla$ by a multi-tape on-demand STM
 which uses a tape to store the values of each register,
 plus an additional tape containing the partial results
 obtained during the evaluation 
 of the expressions and another tape containing $\omega$.
 We denote the tape used for storing the
 result coming from the evaluation of the expression with $e$.
The following invariant properties holds:
\begin{itemize}
 \itemsep0em
 \item on each tape, values are stored to the immediate
 right of the head,
 
 \item the result of the last expression evaluated is stored
 on tape $e$ to the immediate right of the head.
\end{itemize} 
The value of expressions of $\SIFP$ can be computed
using the tape $e$.\footnote{In particular we prove this by induction
on the syntax of expressions. (1)
each access to the value stored in a register
consists in a copy of the content
of the corresponding tape to $e$, which is a simple operation
(due to the invariant properties above)
(2)
concatenations – namely $f.\prog{0}$ and
$f.\prog{1}$ – are implemented by the addition of a character
at the end of $e$, which contains the value of $f$.
(3) the binary expressions are non-trivial.
Since one of the two operands is a register identifier,
the machine can directly compare $e$ with the tape,
corresponding to the identifier, and to replace the content
of $e$ with the result of the comparison,
which in all cases is $\zero$ or $\one$.
Further details can be found in~\cite{Davoli}.}
All operations can be implemented without consuming any
character on the oracle tape
and with linear time with respect to the
size of the value of the expression. 
We assign a sequence of machine states,
$q^I_{s_1},q_{s_i}^1,\dots, q_{s_i}^F$, 
to each statement $s_i$.
\longv{
\begin{itemize}
\itemsep0em
\item assignments consist in copies of the value in $e$
to the tape corresponding to the destination register
and delations of the value on $e$, as replaced by $\blank$.
This is implemented without consuming any character on
the oracle tape.

\item The sequencing operation $s;t$ is implemented
inserting a composed transition from $q_s^F$ to $q_t^I$
in $\delta$.
This does not consume the oracle tape.

\item a $\mathtt{while}()$ statement $s=\mathtt{while}(f)\{t\}$
\end{itemize}}
In particular, the statement $\mathtt{RandBit}()$ is implemented consuming
a character on the tape and copying its value on the tape
which corresponds to the register $R$.

Furthermore, if we assume $\prog{P}$ to be poly-time,
after the simulation of each statement,
it holds that:
\begin{itemize}
\itemsep0em

\item the length of the non-blank portion of the first tapes
corresponding to the register is polynomially bounded
as their contents are precisely the the contents of
$\prog{P}$'s registers, which are polynomially bounded as a consequence
of the hypotheses of their poly-time complexity,

\item the head of all the tapes corresponding to the registers
point to the left-most symbol of the string thereby contained.
\end{itemize}
It is well-known that the reduction of the number of tapes
on a poly-time TM comes with a polynomial overhead in time.
For this reason, we conclude that the poly-time multi-tape
on-demand STM can be 
reduced to a poly-time canonical on-demand STM.
This concludes our proof.
  \end{proof}
 It remains to show that each on-demand STM can be reduced
 to an equivalent STM.

 \begin{lemma}\label{lemma:STACS32}
 For every $\STM =\langle \Qstates, \Sigma, \delta,q_0\rangle
 \in \SFP_{\text{OD}}$,
 the machine $\STM'=\langle \Qstates, \Sigma, H(\delta), q_0\rangle
 \in \SFP$ is such that for every $n\in\Nat$
 and configuration of $\STM$ $\langle \str,q,\strT,\eta\rangle$,
 and for every $\str',\strT'\in\Ss,q\in\Qstates$,
 $
 \mu(\{\omega \in \Bool^\Nat \ | \ (\exists \omega')\langle\str,q,\strT,\omega\rangle
 \triangleright^n_\delta \langle \str',q',\strT',\omega'\rangle\})$
= 
$\mu(\{\xi \in \Bool^\Nat \ | \ (\exists \xi')
\langle \str,q,\strT,\xi\rangle \triangleright^n_{H(\delta)}
\langle \str',q',\strT',\xi'\rangle \}).$
 \normalsize
 \end{lemma}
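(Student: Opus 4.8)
The plan is to prove the claimed equality by induction on the number of steps $n$, peeling off the \emph{first} transition rather than the last one. For configurations $c,c'$ (abbreviating the state/tape components of $\langle\str,q,\strT\rangle$) write $\mathcal{R}^\delta_n(c,c') \df \{\omega \in \Bool^\Nat \mid (\exists \omega')\, \langle c,\omega\rangle \triangleright^n_\delta \langle c',\omega'\rangle\}$, and likewise $\mathcal{R}^{H(\delta)}_n(c,c')$ with $\triangleright^n_{H(\delta)}$. Since an STM is deterministic once its oracle tape is fixed, the configuration reached after $n$ steps is a function of $\omega$; consequently these sets depend on at most the first $n$ bits of $\omega$, so each is a finite union of cylinders and in particular $\mu_\Cyls$-measurable. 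The goal is exactly $\mu(\mathcal{R}^\delta_n(c,c')) = \mu(\mathcal{R}^{H(\delta)}_n(c,c'))$ for all $n,c,c'$, which I would establish inductively.

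For the base case $n=0$ both relations are the identity, so each set equals $\Bool^\Nat$ when $c=c'$ and $\emptyset$ otherwise, giving the same measure. For the inductive step I would split on the type of the first transition out of $c$, using the defining property of $H$: on a transition that consults the oracle, $H(\delta)$ behaves exactly like $\delta$, while on a \emph{silent} (oracle-independent) transition $H(\delta)$ reads and immediately discards one oracle bit, leaving state, tapes and head position unchanged. If the first step from $c$ is a querying one, it reads $\omega(0)=b$ and moves to some $c_b$; hence $\mathcal{R}^\delta_{n+1}(c,c')$ is the disjoint union over $b\in\Bool$ of $\{\omega \mid \omega(0)=b,\ \mathrm{tail}(\omega)\in\mathcal{R}^\delta_n(c_b,c')\}$, of measure $\tfrac12\sum_b \mu(\mathcal{R}^\delta_n(c_b,c'))$ by the product structure of $\mu_\Cyls$. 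Since $H(\delta)$ copies $\delta$ here, the same identity holds for $H(\delta)$ with the \emph{same} successors $c_b$, and the induction hypothesis closes this case term by term.

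If instead the first step is silent, the on-demand machine consumes no bit and moves deterministically to some $c_0$, so $\mathcal{R}^\delta_{n+1}(c,c')=\mathcal{R}^\delta_n(c_0,c')$; the converted machine moves to the same $c_0$ but discards one bit, whence $\mathcal{R}^{H(\delta)}_{n+1}(c,c')=\{\omega\mid \mathrm{tail}(\omega)\in\mathcal{R}^{H(\delta)}_n(c_0,c')\}$. Here the key measure-theoretic fact is that the shift $\omega\mapsto\mathrm{tail}(\omega)$ is measure preserving, i.e.~$\mu(\{\omega\mid \mathrm{tail}(\omega)\in S\})=\mu(S)$, so the discarded bit integrates out to a factor $1$ and, by the induction hypothesis, the two measures agree again.

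The only real obstacle is bookkeeping the \emph{alignment} of the oracle tape: the on-demand machine and its conversion consume bits at different positions and in different numbers, so a back-to-front decomposition (peeling the last step) would force one to reason about a data-dependent read position. Peeling the first step sidesteps this, because the first transition either reads $\omega(0)$ in both machines or is silent for $\delta$ and a pure discard of $\omega(0)$ for $H(\delta)$; in either sub-case the dependence on $\omega(0)$ factors cleanly through the first coordinate of $\mu_\Cyls$, and the remainder is handled by the induction hypothesis applied to $\mathrm{tail}(\omega)$. I would finally note that $\STM'\in\SFP$ follows since $H$ replaces each step by a single step, hence preserves the polynomial time bound.
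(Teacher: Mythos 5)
Your proof is correct and follows the same route as the paper: the substance of the argument is the construction $H$ (already fixed in the statement), which duplicates each $\natural$-tagged transition into a $\zero$- and a $\one$-tagged copy, and your forward induction on $n$ with the two-case split (a querying first step handled identically by both machines versus a silent step turned into a bit-discarding step whose measure integrates out to $1$) is exactly the verification that the paper's sketch gestures at but does not spell out. The only points worth making explicit in a final write-up are the determinism convention that for each pair of state and scanned character the on-demand machine has either a single $\natural$-transition or a pair of bit-tagged ones, and the standard fact that $\triangleright^{n+1}_\delta$, although defined by appending a step at the end, decomposes equivalently by peeling off the first step.
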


 \begin{proof}[Proof Sketch]
 Even in this case, the proof relies on a reduction.
 In particular, we show that given an on-demand STM,
 it is possible to build an STM, which is equivalent
 to the former.
 Intuitively, the encoding from an on-demand STM to an ordinary 
 STM takes the transition function $\delta$ of the STM
 and substitutes each transition not causing the oracle tape to shift
 – i.e. tagged with $\natural$ – with two distinct transitions,
 where $\natural$ is substituted by $\zero$ and $\one$, respectively.
 This causes the resulting machine to produce an identical transition but
 shifting the head on the oracle tape on the right.
We also define an encoding from an on-demand STM
to a canonical STM as follows:
$$
H \df \langle \Qstates, \Sigma, \delta, q_0\rangle \to 
\big\langle \Qstates, \Sigma, \bigcup \Delta_H(\delta), q_0\big\rangle.
$$
where $\Delta_H$ is,
\begin{align*}
\Delta_H\big(\langle p, c_r,\zero,q,c_w,d\rangle\big) &\df 
\{\langle p,c_r, \zero, q,c_w,d\rangle\} \\
\Delta_H\big(\langle p,c_r,\one,q,c_w,d\rangle\big) &\df
\{\langle p,c_r,\one,q,c_w,d\rangle\} \\
\Delta_H \big(\langle p,c_r,\natural, q,c_w,d\rangle\big) &\df
\{\langle p, c_r,\zero, q,c_w,d\rangle,
\langle p,c_r,\one,q,c_w,d\rangle\}.
\end{align*}
%
%

\end{proof}

\subsubsection{From $\POR$ to $\SFP$}
We then conclude the proof relating $\POR$ and $\SFP$.

\begin{prop}[From $\POR$ to $\SFP$]\label{prop:PORtoSFP}
For any $f:\Ss^k\times \Os \to \Ss \in\POR$,
there is a function $f^\star : \Ss^k\times \Bool^\Nat \to \Ss$
such that for all $\str_1,\dots, \str_k,\strT\in \Ss$,
$$
\mu\big(\{\omega \in \Bool^\Nat \ | \ f(\str_1,\dots, \str_k,\omega)=\strT\}\big) =
\mu\big(\{\eta \in \Os \ | \ f^\star(\str_1,\dots,\str_k,\eta)=\strT\}\big).
$$
\end{prop}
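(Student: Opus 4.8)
The plan is to prove the statement by \emph{composing} the chain of simulations developed in the preceding lemmas, each of which preserves the induced output distribution. Reading the displayed identity so that the two fibres range over the declared domains of $f$ and $f^\star$, the goal is
\[
\mu\big(\{\eta \in \Os \mid f(\str_1,\dots,\str_k,\eta)=\strT\}\big)
=
\mu\big(\{\omega \in \Bool^\Nat \mid f^\star(\str_1,\dots,\str_k,\omega)=\strT\}\big),
\]
with $f$ sampling from the string-indexed oracle space $\Os$ and $f^\star$ being the $\SFP$-function that reads an ordinary bit-stream in $\Bool^\Nat$. The four intermediate formalisms $\SIFPra$, $\SIFPla$, $\SFP_{\text{OD}}$ and $\SFP$ are exactly the stepping stones that let one migrate from the $\Os$-world of $\POR$ to the $\Bool^\Nat$-world of stream Turing machines while fixing, for every output string $\strT$, the measure of its fibre.

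Concretely, first I would apply Lemma~\ref{lemma:PORtoSIFPra} to obtain a poly-time $\SIFPra$ program $\prog{P}$ with $\model{\prog{P}}(\str_1,\dots,\str_k,\eta)=f(\str_1,\dots,\str_k,\eta)$ for every $\eta\in\Os$; being a \emph{pointwise} equality, it transports the measure of every fibre verbatim. Next, Lemma~\ref{lemma:SIFPratoSIFPla} converts $\prog{P}$ into a poly-time $\SIFPla$ program $\prog{Q}$ with $\mu(\{\eta\mid\model{\prog{P}}(\str_1,\dots,\str_k,\eta)=\strT\})=\mu(\{\omega\mid\model{\prog{Q}}(\str_1,\dots,\str_k,\omega)=\strT\})$, which is where the oracle space $\Os$ is replaced by the stream $\Bool^\Nat$. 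Then Proposition~\ref{prop:SIFPlatoSFPod} yields an on-demand machine $\STM^{\prog{Q}}\in\SFP_{\text{OD}}$ with $\model{\prog{Q}}(\str_1,\dots,\str_k,\omega)=\STM^{\prog{Q}}(\str_1,\dots,\str_k,\omega)$ pointwise, and Lemma~\ref{lemma:STACS32} replaces it by an ordinary $\STM'\in\SFP$ whose $n$-step reachability sets have the same measure as those of $\STM^{\prog{Q}}$, hence the same output distribution. Setting $f^\star\df f_{\STM'}$ and chaining the four equalities (two pointwise, two at the level of measures) gives the claim, with poly-time preserved at each stage by the corresponding clause of each lemma.

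The delicate point — and the only genuinely non-routine step — is Lemma~\ref{lemma:SIFPratoSIFPla}. The $\Flip(e)$ primitive of $\SIFPra$ reads the oracle bit at the string-indexed coordinate $e\in\Ss$ and may re-inspect the same coordinate repeatedly, always receiving the same value, whereas an $\SFP$ machine consumes its $\Bool^\Nat$-tape strictly left-to-right and only once. The caching construction of that lemma maintains an associative table $\Psi$ recording, for each coordinate already queried, the bit returned: a repeated $\Flip(e)$ is answered from $\Psi$, while a fresh one triggers a single $\mathtt{RandBit}()$ draw whose result is stored. I expect the main effort to be the verification that this simulation is measure-preserving, carried out through the relation $\Theta$ assembled from $\alpha$ (program translation), $\beta$ (store correspondence) and $\gamma$ (oracle-constraint transfer), together with the invariant $\mu(\Psi)=\mu(\Phi)$; the content of the argument is that the distinct coordinates touched by $\Flip$ are put in measure-preserving correspondence with the fresh $\mathtt{RandBit}()$ draws, so that independent fair bits of $\Os$ are mapped to independent fair bits of $\Bool^\Nat$. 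Step~4 (Lemma~\ref{lemma:STACS32}) is milder, but is stated at the level of measures rather than pointwise because an on-demand transition tagged $\natural$ is split into a $\zero$- and a $\one$-branch; one checks that these two branches partition the relevant cylinder, so their measures sum to that of the original transition and the $n$-step reachability measure is preserved.
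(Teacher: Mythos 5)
Your proof follows exactly the paper's own route: the paper proves this proposition by declaring it a straightforward consequence of Lemma~\ref{lemma:PORtoSIFPra}, Lemma~\ref{lemma:SIFPratoSIFPla}, Proposition~\ref{prop:SIFPlatoSFPod}, and Lemma~\ref{lemma:STACS32}, which is precisely the chain of simulations you compose (and your silent correction of the swapped domains of $\omega$ and $\eta$ in the displayed identity is the right reading). Your additional commentary on where the measure-theoretic work actually lives is accurate elaboration rather than a different argument.
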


\begin{proof}
It is a straightforward consequence of Lemma~\ref{lemma:PORtoSIFPra}, Lemma~\ref{lemma:SIFPratoSIFPla}, Proposition~\ref{prop:SIFPlatoSFPod},
and Lemma~\ref{lemma:STACS32}.
\end{proof}
  
\end{document}